\theoremstyle{plain}
\newtheorem{theorem}{Theorem}[section]
\newtheorem{proposition}[theorem]{Proposition}
\newtheorem{lemma}[theorem]{Lemma}
\newtheorem{corollary}[theorem]{Corollary}
\newtheorem{conjecture}[theorem]{Conjecture}
\newtheorem*{conjecture*}{Conjecture}
\theoremstyle{definition}
\newtheorem{definition}[theorem]{Definition}
\newtheorem{instance}[theorem]{Example}
\newtheorem{remark}[theorem]{Remark}
\def\mC{\mathbb{C}}
\let\Re\relax
\let\Im\relax
\DeclareMathOperator{\Re}{Re}
\DeclareMathOperator{\Im}{Im}
\numberwithin{equation}{section}
\begin{document}
\thispagestyle{empty}
\begin{flushright}
preprint
\end{flushright}
\vspace{1cm}
\begin{center}
{\Large\bf Non-commutative 
resolutions as mirrors of singular Calabi--Yau varieties} 
\end{center}
\vspace{8mm}
\begin{center}
{\large \(\mbox{Tsung-Ju Lee}^{\sharp}\)\footnote{\texttt{tjlee@cmsa.fas.harvard.edu}}},
{\large \(\mbox{Bong H.~Lian}^{\dag}\)\footnote{\texttt{lian@brandeis.edu}}}, and
{\large \(\mbox{Mauricio Romo}^{\ast}\)\footnote{{\tt mromoj@tsinghua.edu.cn}}}
\end{center}
\vspace{6mm}
\begin{center}
${}^{\sharp}$CMSA, Harvard University, Cambridge, MA 02138\\
${}^{\dag}$Department of Mathematics, Brandeis University, Waltham, MA 02453\\
${}^{\dag}$BIMSA, Huairou District, Beijing 101408\\
${}^{\ast}$Yau Mathematical Sciences Center, Tsinghua University, Beijing, 100084, China
\end{center}
\vspace{15mm}

\begin{abstract}
\noindent
It has been conjectured that the hemisphere partition function \cites{Honda:2013uca,Hori:2013ika} in a gauged linear sigma model (GLSM) computes the central charge \cite{MR2373143} of an object in the bounded derived category of coherent sheaves for Calabi--Yau (CY) manifolds. There is also evidence in \cites{Hosono:1995bm,Hosono:2000eb}.
On the other hand, non-commutative resolutions of singular CY varieties have been studied in the context of abelian GLSMs \cite{Caldararu:2010ljp}.
In this paper, we study an analogous construction of abelian GLSMs for non-commutative resolutions and propose they can be used to study a class of recently discovered mirror pairs of singular CY varieties. Our main result shows that the hemisphere partition functions (a.k.a.~$A$-periods) in the new GLSM are in fact period integrals (a.k.a.~$B$-periods) of the singular CY varieties. We conjecture that the two are completely equivalent: $B$-periods are the same as $A$-periods. We give some examples to support this conjecture and formulate some expected 
homological mirror symmetry (HMS) relation between the GLSM theory and the CY. As shown in \cite{2020-Hosono-Lee-Lian-Yau-mirror-symmetry-for-double-cover-calabi-yau-varieties}, the $B$-periods in this case are precisely given by a certain fractional version of the $B$-series of \cite{Hosono:1995bm}. Since a hemisphere partition function is defined as a contour integral in a cone in the complexified secondary fan (or FI-theta parameter space) \cite{Hori:2013ika}, it can be reduced to a sum of residues \cites{zhdanov1998computation,passare1994multidimensional}. Our conjecture shows that this residue sum may now be amenable to computations in terms of the $B$-series.
\end{abstract}
\newpage
\setcounter{tocdepth}{3}
\tableofcontents
\setcounter{footnote}{0}

\section{Introduction}

Homological mirror symmetry (HMS) \cite{kontsevich1995homological} for Calabi--Yau (CY) manifolds can be generally formulated in terms of bounded derived categories of coherent sheaves, and Fukaya categories. Thus much focus in recent decades have been to understand their relationships, and their consequences. However, not until recently, have people begun to ask similar questions about the case of singular varieties. By general considerations of marginal deformations of superconformal field theories \cite{Cecotti:1991me} one can argue that supersymmetric quantities (for example chiral/anti-chiral rings or supersymmetric boundary conditions) on a singular CY variety are equivalent to their counterparts on a crepant resolution (if one exists) of the CY variety. A mathematical counterpart of this is the so-called crepant resolution conjecture \cite{ruan2006cohomology}. Therefore, to formulate HMS for such a singular variety, it is natural to look for a categorical version of the resolution for the bounded derived category of coherent sheaves, and also a substitute for the Fukaya category in the singular case. 
One possibility for the former, is to consider some kind of non-commutative (NC) resolution associated with a singular variety \cite{Caldararu:2010ljp}. 

This paper is an attempt to test the idea of using NC resolutions on a large class of recently discovered mirror pairs of singular CY varieties \cites{2020-Hosono-Lian-Takagi-Yau-k3-surfaces-from-configurations-of-six-lines-in-p2-and-mirror-symmetry-i,2020-Hosono-Lee-Lian-Yau-mirror-symmetry-for-double-cover-calabi-yau-varieties}. Let us recall briefly the results of \cites{2020-Hosono-Lian-Takagi-Yau-k3-surfaces-from-configurations-of-six-lines-in-p2-and-mirror-symmetry-i,2020-Hosono-Lee-Lian-Yau-mirror-symmetry-for-double-cover-calabi-yau-varieties}. It is shown that for each toric variety admitting a maximal projective crepant partial (MPCP) resolution, equipped with a given nef-partition, one can construct a family of equisingular CY varieties as double covers on the toric variety. The branching locus of the double cover is given by an union of hyperplanes (in general positions) specified by the nef-partition. The period integrals of the double cover are formally certain fractional counterparts of period integrals of ordinary complete intersections (hence the term `fractional complete intersection’).
Most importantly, if two toric varieties equipped with nef-partitions are dual to each other, in the sense of Batyrev--Borisov \cite{1996-Batyrev-Borisov-on-calabi-yau-complete-intersections-in-toric-varieties}, then it has been shown that their corresponding double cover CY varieties are mirror to each other. In fact, we can apply many `mirror tests’ to see that this is in fact the case. For example, the Yukawa couplings of one family can be shown to compute the genus zero orbifold Gromov--Witten invariants of the mirror family. These new singular mirror pairs of CY therefore provide a very interesting testing ground for the idea non-commutative resolutions.
In this paper, we shall experiment with this idea from the point of view of abelian GLSMs.




The paper is organized as follows. In Section \ref{sec:SingCY}, we give an overview of the construction of mirror pairs of singular CY varieties, given by double covers of toric varieties. This general construction was introduced in \cite{2020-Hosono-Lee-Lian-Yau-mirror-symmetry-for-double-cover-calabi-yau-varieties}. We recall some basics in toric geometry, and outline the construction of double covers on a general toric variety admitting an MPCP resolution, equipped with a nef-partition. We then summarize some of the main results in \cite{2020-Hosono-Lee-Lian-Yau-mirror-symmetry-for-double-cover-calabi-yau-varieties} on singular CY mirror pairs, and illustrate them in examples. 

In Section \ref{sec:GLSMmir}, we introduce the construction of the abelian GLSMs, which will later on play the role of NC resolutions for singular CYs given by double covers. We start by reviewing the setup for two dimensional $\mathcal{N}=(2,2)$ gauge theories with boundaries, and give an overview of GLSMs. We spell out the assumptions we impose on the GLSMs to be considered in this paper. In particular, we consider nonanomalous GLSMs, whose IR theory is completely characterized by the classical Higgs branch. The GLSMs of interest will be hybrid models: their underlying target space is an orbifold vector bundle over a base given by the critical locus of a superpotential function. We consider a particular form of such a `curved’ superpotential, and propose that the resulting GLSM is a quantum field theory realization of a NC resolution for our double cover singular CY.
Key ingredients introduced here include the matrix factorization category for a given gauge group and a superpotential, and the notion of the hemisphere partition function of an object in this category. Most of the expository discussion here follows \cites{Honda:2013uca,Hori:2013ika}. It has been conjectured that for smooth CYs, the hemisphere partition function coincides with the so-called the $A$-period or the central charge of an object. The latter was introduced and studied mathematically by \cites{MR2373143,Hosono:2000eb,Iritani:2009ab} from various viewpoints. Finally, we also review some basics on toric GITs, and the relation between the secondary fan and the (stringy K\"ahler) moduli space of CYs in mirror symmetry (or FI-theta parameter space in GLSM theory).

In Section \ref{sec:Periods}, we formulate and prove our main result in Theorem \ref{thm:main-theorem}. For each toric variety $X$ equipped with a choice of nef-partition, we consider the $A$-periods of the GLSM realizing the (singular) CY double covers $Y$ of $X$. In \cite{2020-Hosono-Lee-Lian-Yau-mirror-symmetry-for-double-cover-calabi-yau-varieties} it is shown that the sheaf of $B$-periods of $Y^{\vee}$ (the mirror of $Y$) can be completely characterized by a GKZ system. 
We then show that the $A$-periods are solutions to the same GKZ system, hence proving the $A$-periods of $Y$ are in fact $B$-periods $Y^{\vee}$. A crucial step here is the explicit determination of the $B$-brane factor in the hemisphere partition function in this case (Proposition \ref{prop:brane-factor}).
We end with a discussion of our conjecture (Conjecture \ref{main-conjecture}) on the equivalence of the two kinds of periods, and provide some numerical evidence.

\vspace{1cm}

\noindent{\bf Acknowledgements:} The authors thank W.~Gu for collaboration at an early stage of this work. We would like to thank L.~Borisov, D.~Pomerleano, and E.~Scheidegger
for discussions and comments. We would like to thank S.~Hosono for 
his collaboration, which helped inspire this project.
We would like to thank T.~Pantev for his interest on this project. MR thanks Harvard CMSA, Rutgers University, Heidelberg University and Uppsala University for hospitality while
part of this work has been performed. BHL would like to thank YMSC and BIMSA, where part of this collaboration was done.

MR acknowledges support from the National Key Research and Development Program of China, grant No.~2020YFA0713000, the Research Fund for International
Young Scientists, NSFC grant No.~1195041050. TJL is partially supported by AMS--Simons travel grant. BHL is partially supported by the Simons Collaboration Grant on Homological Mirror Symmetry and Applications 2015-2023.

\section{\label{sec:SingCY} Calabi--Yau double covers and their mirrors}
In this section, we recall the construction 
of pairs of singular Calabi--Yau double covers $(Y,Y^{\vee})$ 
in \cite{2020-Hosono-Lee-Lian-Yau-mirror-symmetry-for-double-cover-calabi-yau-varieties}
and review their properties.
To this end, let us fix the notation that will be 
used throughout this note.
\begin{enumerate}
\item[(1)] Let \(N=\mathbb{Z}^{n}\) be a rank \(n\) lattice and
\(M=\mathrm{Hom}_{\mathbb{Z}}(N,\mathbb{Z})\) be its dual lattice. 
Let \(N_{\mathbb{R}}:=N\otimes_{\mathbb{Z}}\mathbb{R}\) and 
\(M_{\mathbb{R}}:=M\otimes_{\mathbb{Z}}\mathbb{R}\).
We denote by \(\langle-,-\rangle\) the canonical
dual pairing between \(M\) and \(N\).
\item[(2)] Let \(\Sigma\) be a fan in \(N_{\mathbb{R}}\).
We denote by \(\Sigma(k)\) the set of \(k\)-dimensional cones in \(\Sigma\).
In particular, \(\Sigma(1)\) is the set of \(1\)-cones in \(\Sigma\). 
Similarly, for a cone \(\sigma\in\Sigma\),
we denote by \(\sigma(1)\) the set of \(1\)-cones belonging to \(\sigma\).
By abuse of the notation, we also denote by \(\rho\) 
the primitive generator of the corresponding 
\(1\)-cone.
\item[(3)] Denote by \(X_{\Sigma}\)
the toric variety determined by the fan \(\Sigma\). 
Each \(\rho\in\Sigma(1)\) determines a torus-invariant Weil divisor 
\(D_{\rho}\) on \(X_{\Sigma}\).
Any torus-invariant Weil divisor \(D\) is linearly equivalent to
\(\sum_{\rho\in\Sigma(1)} a_{\rho}D_{\rho}\). We 
define the \emph{polyhedron} of \(D\) 
\begin{equation*}
\Delta_{D}:=\left\{m\in M_{\mathbb{R}}~|~\langle m,\rho\rangle\ge -a_{\rho}~
\mbox{for all}~\rho\right\}.
\end{equation*}
Note that \(\Delta_{D}\) is a polytope if \(\Sigma\) is a complete fan.
In which case, \(\Delta_{D}\) is called the \emph{polytope} of \(D\).
The integral points \(M\cap\Delta_{D}\) gives rise to a canonical
basis of \(\mathrm{H}^{0}(X_{\Sigma},D)\).
\item[(4)] A polytope in \(M_{\mathbb{R}}\) is called a \emph{lattice polytope}
if its vertices belong to \(M\). For a lattice polytope \(\Delta\)
in \(M_{\mathbb{R}}\), we denote by \(\Sigma_{\Delta}\) the normal fan of 
\(\Delta\). The toric variety determined by \(\Delta\) is denoted by \(\mathbf{P}_{\Delta}\),
i.e., \(\mathbf{P}_{\Delta}=X_{\Sigma_{\Delta}}\).
\item[(5)] A \emph{reflexive polytope} \(\Delta\subset M_{\mathbb{R}}\) is a lattice polytope 
which contains the origin \(\mathbf{0}\in M_{\mathbb{R}}\) in its 
interior and such that the polar dual 
\begin{equation*}
\Delta^{\vee}:=\{n\in N_{\mathbb{R}}~|~\langle m,n\rangle\ge -1~\mbox{for}~m\in \Delta\}
\end{equation*}
is again a lattice polytope. If
\(\Delta\) is a reflexive polytope, then \(\Delta^{\vee}\) is also a lattice
polytope and satisfies \((\Delta^{\vee})^{\vee}=\Delta\). The normal fan of \(\Delta\) (resp.~face fan of \(\Delta\))
is the face fan of \(\Delta^{\vee}\) (resp.~the normal fan of \(\Delta^{\vee}\)).
\end{enumerate}

\subsection{The Batyrev--Borisov's duality construction}
\label{subsection:b-b-construction}
Let us begin with the notion of nef-partitions.
Let \(\Delta\subset M_{\mathbb{R}}\) be a reflexive polytope.
Recall that a \emph{nef-partition} on \(\mathbf{P}_{\Delta}\) 
is a decomposition of \(\Sigma_{\Delta}(1)=\sqcup_{k=1}^{r} I_{k}\)
such that each \(E_{k}:=\sum_{\rho\in I_{k}} D_{\rho}\) is 
numerically effective, i.e., \(D.C\ge 0\) for any irreducible
complete curve \(C\subset \mathbf{P}_{\Delta}\). 
Note that \(E_{1}+\cdots+E_{r}=-K_{\mathbf{P}_{\Delta}}\). 
This also gives rise to a Minkowski sum decomposition
\begin{equation*}
\Delta = \Delta_{1}+\cdots+\Delta_{r}~\mbox{where}~\Delta_{i}:=\Delta_{E_{i}}.
\end{equation*}
By abuse of terminology, both \(E_{1}+\cdots+E_{r}=-K_{\mathbf{P}_{\Delta}}\)
and \(\Delta = \Delta_{1}+\cdots+\Delta_{r}\) are call nef-partitions.

Let \(I_{1},\ldots,I_{r}\) be a nef-partition on \(\mathbf{P}_{\Delta}\).
Denote 
\begin{equation*}
\nabla_{k}=\operatorname{Conv}(I_{k}\cup\mathbf{0})~\mbox{and}~
\nabla = \nabla_{1}+\cdots+\nabla_{r}.
\end{equation*}
Borisov proved that \(\nabla\) is a reflexive polytope in \(N_{\mathbb{R}}\)
whose polar dual is 
\begin{equation*}
    \nabla^{\vee}=\mathrm{Conv}(\Delta_{1},\ldots,\Delta_{r})
\end{equation*}
and \(\nabla_1+\cdots+\nabla_{r}\) corresponds to a nef-partition on \(\mathbf{P}_{\nabla}\)
\cite{1993-Borisov-towards-the-mirror-symmetry-for-calabi-yau-complete-intersections-in-gorenstein-toric-fano-varieties}. This is 
called the \emph{dual nef-partition}
in \cite{1996-Batyrev-Borisov-on-calabi-yau-complete-intersections-in-toric-varieties}.
The corresponding nef toric divisors are denoted by \(F_{1},\ldots,F_{r}\).
Then the polytope of \(F_{j}\) is \(\nabla_{j}\).

Let \(X\to \mathbf{P}_{\Delta}\) and \(X^{\vee}\to \mathbf{P}_{\nabla}\)
be maximal projective crepant partial (MPCP for short hereafter) resolutions
for \(\mathbf{P}_{\Delta}\) and \(\mathbf{P}_{\nabla}\).
Recall that the polytopes \(\Delta_{i}\) and \(\nabla_{j}\)
correspond to \(E_{i}\) on \(\mathbf{P}_{\Delta}\) and 
\(F_{j}\) on \(\mathbf{P}_{\nabla}\).
The nef-partitions on \(\mathbf{P}_{\Delta}\) and \(\mathbf{P}_{\nabla}\)
pullback to nef-partitions on \(X\) and \(X^{\vee}\). 
To save the notation, the corresponding nef-partitions and toric divisors 
on \(X\) and \(X^{\vee}\) will be still denoted by \(\Delta_{i}\), \(\nabla_{j}\) and
\(E_{i}\), \(F_{j}\) respectively.

\subsection{Calabi--Yau double covers}
\label{subsection:cy-double-covers}
Suppose we are given the data in \S\ref{subsection:b-b-construction} and
let notation be the same as there.
Throughout this note, unless otherwise stated, we assume that 
\begin{center}
\textit{Both \(\mathbf{P}_{\Delta}\) and \(\mathbf{P}_{\nabla}\) admit a smooth MPCP
desingularization},
\end{center}
i.e., we assume that both \(\Delta\) and \(\nabla\) admit 
uni-modular triangulations.

From the duality, we have
\begin{equation*}
\mathrm{H}^{0}(X^{\vee},F_{i})\cong 
\bigoplus_{\rho\in\nabla_{i}\cap N}\mathbb{C}\cdot t^{\rho}~\mbox{and}~
\mathrm{H}^{0}(X,E_{i})\cong 
\bigoplus_{m\in\Delta_{i}\cap M}\mathbb{C}\cdot t^{m}.
\end{equation*}
Here we use the same notation \(t=(t_{1},\ldots,t_{n})\) to 
denote the coordinates on the maximal torus of \(X^{\vee}\) and \(X\).

A double cover \(Y^{\vee}\to X^{\vee}\) has trivial 
canonical bundle if and only if 
the branch locus is linearly equivalent to \(-2K_{X^{\vee}}\).
Let \(Y^{\vee}\to X^{\vee}\) be the double cover
constructed from the section \(s=s_{1}\cdots s_{r}\) with
\begin{equation*}
(s_{1},\ldots,s_{r})\in \mathrm{H}^{0}(X^{\vee},2F_{1})\times\cdots\times
\mathrm{H}^{0}(X^{\vee},2F_{r}).
\end{equation*}

We assume that \(s_{i}\in\mathrm{H}^{0}(X^{\vee},2F_{i})\) is 
of the form \(s_{i}=s_{i,1}s_{i,2}\) with \(s_{i,1},s_{i,2}\in \mathrm{H}^{0}(X^{\vee},F_{i})\).
We assume that \(s_{i,1}\) is the section corresponding to the lattice point
\(\mathbf{0}\in\nabla_{i}\cap N\)
and that \(\operatorname{div}(s)\) is a divisor with strictly normal crossings. 
This procedure, which is inspired by the work 
\cites{
2020-Hosono-Lian-Takagi-Yau-k3-surfaces-from-configurations-of-six-lines-in-p2-a
nd-mirror-symmetry-i,
2019-Hosono-Lian-Yau-k3-surfaces-from-configurations-of-six-lines-in-p2-and-mirr
or-symmetry-ii-lambda-k3-functions}, is called the \emph{partial gauge fixing}.
We obtain a subfamily of double covers of \(X^{\vee}\) by deforming \(s_{i,2}\)
The family is parameterized by an open subset
\begin{equation*}
V\subset \mathrm{H}^{0}(X^{\vee},F_{1})
\times\cdots\times\mathrm{H}^{0}(X^{\vee},F_{r}).
\end{equation*}

\begin{definition}
\label{definition:gauged-fixed-double-cover-family}
Given a decomposition \(\nabla=\nabla_{1}+\cdots+\nabla_{r}\) representing
a nef-partition \(F_{1}+\cdots+F_{r}\) on \(X^{\vee}\) as above,
the subfamily \(\mathcal{Y}^{\vee}\to V\) constructed above is called the 
\emph{gauge fixed double cover family branched along
the nef-partition over \(X^{\vee}\)} or simply the \emph{gauge fixed double cover family} if 
no confusion occurs.
\end{definition}

Likewise, applying the construction to the
decomposition \(\Delta=\Delta_{1}+\cdots+\Delta_{r}\) representing 
the dual nef-partition \(E_{1}+\cdots+E_{r}\) 
on \(X\) yields
another family \(\mathcal{Y}\to U\),
where \(U\) is an open subset in 
\begin{equation*}
\mathrm{H}^{0}(X,E_{1})\times\cdots\times\mathrm{H}^{0}(X,E_{r}).
\end{equation*} 
Denote by \(Y\) (resp.~\(Y^{\vee}\))
the fiber of \(\mathcal{Y}\to U\) (resp.~\(\mathcal{Y}^{\vee}\to V\)).
In \cite{2020-Hosono-Lee-Lian-Yau-mirror-symmetry-for-double-cover-calabi-yau-varieties},
it is conjectured that
\begin{conjecture*}
\(Y\) and \(Y^{\vee}\) are mirror.
\end{conjecture*}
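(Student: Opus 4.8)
The plan is to establish the mirror relation in the sense currently within reach — an identification of the systems of period integrals, equivalently of the attached (stringy) variations of Hodge structure — and then to indicate the homological upgrade for which the GLSM of Section~\ref{sec:GLSMmir} is designed. First, recall from \cite{2020-Hosono-Lee-Lian-Yau-mirror-symmetry-for-double-cover-calabi-yau-varieties} that the sheaf of $B$-periods of $Y^{\vee}$ over $V$ is characterized by a \emph{fractional} GKZ system, whose solutions are the fractional $B$-series of \cite{Hosono:1995bm} built from the lattice data $\{\nabla_{i}\cap N\}$ of the nef-partition $\nabla=\nabla_{1}+\cdots+\nabla_{r}$. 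The first step is to record the parallel statement on the other side: write the period integrals of the double cover $Y\to X$ as residue/contour integrals over cycles wrapping the branch divisor $\operatorname{div}(s)$, extract the fractional exponents produced by the square root defining the cover, and show these integrals are annihilated by a dual fractional GKZ system $\widehat{\cM}_{\cA^{\vee}}$ attached to $\{\Delta_{i}\cap M\}$.

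Second, use the Batyrev--Borisov duality $\nabla^{\vee}=\operatorname{Conv}(\Delta_{1},\dots,\Delta_{r})$ of \S\ref{subsection:b-b-construction}, together with the standard identification of the secondary fan of $(\Delta,\text{nef-partition})$ with that of $(\nabla,\text{dual nef-partition})$, to match the toric data underlying $\widehat{\cM}_{\cA}$ (governing $Y^{\vee}$) with that underlying $\widehat{\cM}_{\cA^{\vee}}$ (governing $Y$), under the involution exchanging the complex-structure parameters of $Y^{\vee}$ with the stringy K\"ahler parameters of $Y$ and vice versa; one then checks the two fractional GKZ systems are intertwined by this involution. Third, deduce numerical mirror symmetry: near the large-complex-structure point of $Y^{\vee}$ the distinguished GKZ solutions produce a mirror map and a normalized Yukawa coupling, and the aim is to prove, in general and not only in the examples already tested, that this coupling is the generating series of genus-zero orbifold Gromov--Witten invariants of $\cY\to U$, by combining a toric/quantum-Lefschetz argument adapted to the fractional (double-cover) setting with the combinatorial matching above. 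In parallel one checks the Hodge-theoretic consistency: a Batyrev--Borisov--type formula for these fractional complete intersections should give $h^{p,q}_{\mathrm{st}}(Y)=h^{\dim Y-p,\,q}_{\mathrm{st}}(Y^{\vee})$, and the rank of the fractional GKZ system should equal the total dimension of the relevant stringy cohomology on each side, pinning down that the isomorphism so obtained is the full variation of Hodge structure rather than a sub-object.

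Finally, one promotes this to a homological statement, with the abelian GLSM of Section~\ref{sec:GLSMmir} as the bridge: its matrix-factorization category plays the role of $D^{b}\mathrm{Coh}$ of a non-commutative crepant resolution of $Y$ (in the spirit of \cites{Caldararu:2010ljp,kontsevich1995homological,ruan2006cohomology}), so that the desired HMS statement becomes an equivalence between this category and a Fukaya-type category built from $Y^{\vee}$; Theorem \ref{thm:main-theorem}, combined with Conjecture \ref{main-conjecture}, asserts that the hemisphere partition function ($A$-period) of a brane equals its central charge ($B$-period), which is exactly the numerical decategorification any such equivalence must reproduce. Running the same construction with $Y$ and $Y^{\vee}$ interchanged then supplies the other direction.

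The main obstacle is that neither $Y$ nor $Y^{\vee}$ is a smooth Calabi--Yau, so classical Hodge theory, Gromov--Witten theory, and the Fukaya category are all unavailable off the shelf and must be replaced by their stringy/non-commutative-resolution substitutes; moreover the periods of $Y$ genuinely involve the multivalued square root of $s$, so even the $B$-side requires the fractional formalism. In practice the linchpin is Conjecture \ref{main-conjecture} of this paper: once the $A$-periods of the GLSM are known to coincide with the $B$-periods of $Y$ — for which the brane-factor computation of Proposition \ref{prop:brane-factor} is the decisive input — the Hodge-theoretic form of mirror symmetry follows from the GKZ matching of the second step, but the enumerative statement and the categorical equivalence demand new control of the symplectic geometry of singular double covers, which I expect to be the hardest part.
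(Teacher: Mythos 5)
You have not given a proof, and neither does the paper: the statement is explicitly labelled a \emph{conjecture} (imported from \cite{2020-Hosono-Lee-Lian-Yau-mirror-symmetry-for-double-cover-calabi-yau-varieties}), and the paper's contribution is evidence for it, not a demonstration. What the paper actually proves is (i) a numerical Hodge-theoretic check --- $\chi_{\mathrm{top}}(Y)=(-1)^{n}\chi_{\mathrm{top}}(Y^{\vee})$ and the relation $h^{p,q}(X)=h^{p,q}(Y)$ off the middle row, yielding $h^{p,q}(Y)=h^{n-p,q}(Y^{\vee})$ for $n\leq 4$ --- and (ii) Theorem~\ref{thm:main-theorem}, which shows that the $A$-periods $\hat{Z}_{\mathfrak{B}_{J}}$ of the abelian GLSM satisfy the same fractional GKZ system $\mathcal{M}_{A}^{\beta}$ that annihilates the $B$-periods of $Y^{\vee}$. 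Neither of these is an identification of variations of Hodge structure, let alone a categorical equivalence.

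Your outline is well oriented toward the circle of ideas in the paper, but several of its steps are themselves open problems and cannot be treated as if they were established. You invoke Conjecture~\ref{main-conjecture} (completeness of the $A$-periods as solutions of $\mathcal{M}_{A}^{\beta}$) as a ``linchpin,'' yet the paper only proves it for $X=\mathbf{P}^{n}$; without it, satisfying the GKZ system shows the $A$-periods lie in the solution space, not that they span it, so the claimed VHS matching is not obtained. Your second step asserts a duality between the fractional GKZ systems on the two sides and an involution exchanging complex-structure and K\"ahler parameters; this is plausible but is not argued, and the paper itself stops at matching $A$-periods of the GLSM for $Y$ with $B$-periods of $Y^{\vee}$ rather than producing the reverse identification. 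Your Hodge-theoretic ``consistency check'' $h^{p,q}_{\mathrm{st}}(Y)=h^{\dim Y-p,q}_{\mathrm{st}}(Y^{\vee})$ is, in the paper, a Remark expressing an expectation (``We expect\ldots''), supported only in low dimensions. Finally, the categorical upgrade rests on the chain $MF(Y_{\zeta},W_{\zeta})\cong D(X,Cl_{0})\cong D(Y)$ together with a Fukaya-type category for the singular $Y^{\vee}$; the paper explicitly flags both as conjectural (Conjecture~\ref{Brcat-conjecture}, and the sentence admitting no such Fukaya category is currently defined for singular cyclic covers). So you have reconstructed the paper's research program and correctly identified its dependencies, but you should not present this as a proof; each of the bullet points you list as a ``step'' is in fact an unproven assertion, and the honest conclusion --- which the paper itself draws --- is that the statement remains a conjecture supported by the period and Hodge-number computations.
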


The following proposition is also proven in 
\cite{2020-Hosono-Lee-Lian-Yau-mirror-symmetry-for-double-cover-calabi-yau-varieties}
and provides some numerical evidence of the conjecture.

\begin{proposition}
We have \(\chi_{\mathrm{top}}(Y)=(-1)^{n}\chi_{\mathrm{top}}(Y^{\vee})\).
Here \(\chi_{\mathrm{top}}(-)\) is the topological Euler characteristic.
Moreover, we have for \(p+q\ne n\), 
\begin{equation*}
h^{p,q}(X)=h^{p,q}(Y)~\mbox{and}~h^{p,q}(X^{\vee})=h^{p,q}(Y^{\vee}).
\end{equation*}
Consequently, when \(n\le 4\), we have \(h^{p,q}(Y)=h^{n-p,q}(Y^{\vee})\), i.e.,
\(Y\) and \(Y^{\vee}\) form a topological mirror pair.
\end{proposition}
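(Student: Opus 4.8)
The plan is to reduce all three assertions to combinatorial identities about the reflexive polytopes $\Delta,\nabla$ and the nef-partitions $\{\Delta_i\},\{\nabla_j\}$, and then to read those identities off from the Batyrev--Borisov polar duality. The first step is to unwind the branch locus: since $s_{i,1}$ is the monomial section attached to $\mathbf{0}\in\Delta_i\cap M$ and $\mathbf{0}$ lies in every $\Delta_i$, the divisor $\operatorname{div}(s_{i,1})$ equals the reduced toric divisor $E_i=\sum_{\rho\in I_i}D_\rho$ (one summand for each ray in the $i$-th part of the pulled-back nef-partition on $X$). Hence, with $T=(\mathbb{C}^{\ast})^{n}$ the open torus,
\[
  B:=\operatorname{div}(s)=\partial X\cup B',\qquad \partial X=\bigcup_{\rho\in\Sigma_X(1)}D_\rho,\qquad B'=\bigcup_{i=1}^{r}\operatorname{div}(s_{i,2}),
\]
with $B'$ a generic member of $|E_1|\times\cdots\times|E_r|$, and symmetrically for $Y^\vee\to X^\vee$. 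For the Euler characteristic I would use additivity of $\chi_{\mathrm{top}}$ together with the fact that $\pi\colon Y\to X$ is an unramified double cover over $X\setminus B$ and, on reduced spaces, a homeomorphism over $B$ (over a point of $B$ the equation $z^{2}=f$ with $f=0$ forces $z=0$); this gives $\chi_{\mathrm{top}}(Y)=2\chi_{\mathrm{top}}(X)-\chi_{\mathrm{top}}(B)$. Using $\chi_{\mathrm{top}}(T)=0$, hence $\chi_{\mathrm{top}}(\partial X)=\chi_{\mathrm{top}}(X)$, inclusion--exclusion collapses this to $\chi_{\mathrm{top}}(Y)=\chi_{\mathrm{top}}(X)-\chi_{\mathrm{top}}(B'\cap T)$, and $B'\cap T=\bigcup_i\{f_i=0\}$ with $f_i$ a generic Laurent polynomial of Newton polytope $\Delta_i$. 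By inclusion--exclusion and the Khovanskii formula for the Euler number of a generic complete intersection in an algebraic torus, $\chi_{\mathrm{top}}(B'\cap T)$ — and therefore $\chi_{\mathrm{top}}(Y)$, once $\chi_{\mathrm{top}}(X)=\#\Sigma_X(n)$ is written combinatorially — becomes an explicit expression in mixed volumes of the $\Delta_i$; the identity $\chi_{\mathrm{top}}(Y)=(-1)^{n}\chi_{\mathrm{top}}(Y^\vee)$ is then precisely the statement that this expression is carried to its $(\nabla,\{\nabla_j\})$-counterpart, up to the sign $(-1)^{n}$, by the Batyrev--Borisov duality.

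For the Hodge numbers with $p+q\ne n$ I would work with the covering involution $\iota$ of $\pi$. One has $\pi_{\ast}\mathbb{Q}_Y=\mathbb{Q}_X\oplus\mathcal{G}$ with $\mathcal{G}=(\pi_{\ast}\mathbb{Q}_Y)^{\iota=-1}$, and since the monodromy of the rank-one local system $\mathcal{L}:=\mathcal{G}|_{X\setminus B}$ around every branch component is $-1$, the stalks of $\mathcal{G}$ along $B$ vanish, so $\mathcal{G}=j_{!}\mathcal{L}$ for the open immersion $j\colon X\setminus B\hookrightarrow X$. As $\pi$ is finite, $H^{k}(Y)=H^{k}(X)\oplus H^{k}(X,j_{!}\mathcal{L})$ and $H^{k}(X,j_{!}\mathcal{L})=H^{k}_{c}(U,\mathcal{L})$ where $U=X\setminus B=T\setminus B'$ is affine of dimension $n$. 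The anti-invariant summand $H^{\bullet}_{c}(U,\mathcal{L})$ is the twisted cohomology attached to the nef-partition with the ``half-integral'' exponents $\tfrac12$, and I would invoke a non-resonance/concentration statement for such twisted cohomology — of the type underlying the GKZ description of the periods recalled in the introduction; in the complete-intersection/Cayley setting with non-resonant exponents the cohomology is concentrated in the middle degree — to conclude $H^{k}_{c}(U,\mathcal{L})=0$ for $k\ne n$. Hence $H^{k}(Y)\cong H^{k}(X)$ for $k\ne n$, compatibly with (mixed) Hodge structures; since $H^{k}(X)$ is pure of type $(k/2,k/2)$, this yields $h^{p,q}(Y)=h^{p,q}(X)$ for $p+q\ne n$, and likewise $h^{p,q}(Y^\vee)=h^{p,q}(X^\vee)$.

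The case $n\le 4$ follows by bookkeeping: smooth toric varieties have $h^{p,q}=0$ off the diagonal, so the previous step already forces $h^{p,q}(Y)$ and $h^{p,q}(Y^\vee)$ to vanish for $p\ne q$ whenever $p+q\ne n$, leaving only the row $p+q=n$. When $n\le 3$ the Hodge diamond of the $n$-dimensional Calabi--Yau $Y$ is completely determined by the diagonal numbers $h^{p,p}(X)$ together with the value of $\chi_{\mathrm{top}}(Y)$, and the desired equalities $h^{p,q}(Y)=h^{n-p,q}(Y^\vee)$ then reduce, via the Euler-characteristic identity, to the Batyrev--Borisov equalities relating the $h^{p,p}(X)$ to the $h^{p,p}(X^\vee)$. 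When $n=4$ one further Hodge number, say $h^{3,1}(Y)$, must be supplied directly from the fractional-complete-intersection description — a Danilov--Khovanskii/Batyrev-type lattice-point count — and matched to the corresponding datum of $X^\vee$; then $h^{2,2}(Y)$ is pinned down by $\chi_{\mathrm{top}}(Y)$, and $h^{p,q}(Y)=h^{4-p,q}(Y^\vee)$ follows in every bidegree.

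The conceptual reductions above are routine; the real work — and the step I expect to be the main obstacle — is the combinatorial core, namely checking that the mixed-volume and lattice-point invariants produced in the $\chi_{\mathrm{top}}$-computation (and in the $n=4$ Hodge computation) are genuinely interchanged by the polar duality $\Delta\leftrightarrow\nabla$, $\{\Delta_i\}\leftrightarrow\{\nabla_j\}$. A cleaner but heavier alternative is to identify $Y$ and $Y^\vee$, at the level of stringy Hodge--Deligne data, with an honest Batyrev--Borisov mirror pair of complete intersections attached to a ``doubled'' nef-partition, and then quote their theorem directly; there the delicate point is controlling the singular strata of $Y$ in the identification.
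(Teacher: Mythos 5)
The paper does not give its own proof: it cites \cite{2020-Hosono-Lee-Lian-Yau-mirror-symmetry-for-double-cover-calabi-yau-varieties} and only records the key intermediate identity
\[
\chi_{\mathrm{top}}(Y)=\chi_{\mathrm{top}}(X)+(-1)^{n}\chi_{\mathrm{top}}(X^{\vee}),
\]
from which the Euler-characteristic assertion is immediate by symmetry: applying the same identity with $(X,Y)$ and $(X^{\vee},Y^{\vee})$ swapped and taking the alternating combination gives $\chi(Y)=(-1)^{n}\chi(Y^{\vee})$ in one line. Your route is a genuine reconstruction of a proof rather than the citation the paper makes, and the framework is sound: the degree count $\chi(Y)=2\chi(X)-\chi(B)$, the reduction to $\chi(Y)=\chi(X)-\chi(B'\cap T)$ using $\chi(T)=0$, the splitting $\pi_{*}\mathbb{Q}_{Y}=\mathbb{Q}_{X}\oplus j_{!}\mathcal{L}$ (the stalk-vanishing of the anti-invariant along $B$ is correct, including at deeper strata where the equation $z^{2}=f$ still forces $z=0$), and the reduction of the remaining Hodge numbers to $H^{\bullet}_{c}(U,\mathcal{L})$ are all exactly the moves one would want. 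That said, as a \emph{proof} it has three real gaps, which you largely acknowledge.

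First, the Euler-characteristic core. You leave it at ``the combinatorial identity follows from Batyrev--Borisov duality.'' What one actually needs, and what the paper records, is the asymmetric identity displayed above; from it the self-dual statement is automatic, whereas trying to show directly that the Khovanskii/mixed-volume expression for $\chi(X)-\chi(B'\cap T)$ is sent to $(-1)^{n}$ times its counterpart under $\Delta\leftrightarrow\nabla$, $\{\Delta_{i}\}\leftrightarrow\{\nabla_{j}\}$ is strictly harder (it entangles $\chi(X)$ and $\chi(X^{\vee})$, which are not themselves related by a sign). You would do better to aim for $\chi(B'\cap T)=(-1)^{n+1}\chi(X^{\vee})$: inclusion--exclusion plus Khovanskii reduces this to a lattice-volume statement about the Cayley polytope / $\nabla^{\vee}=\operatorname{Conv}(\Delta_{1},\dots,\Delta_{r})$, and that is where the combinatorial work lives. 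As written this step is asserted, not proved.

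Second, the concentration $H^{k}_{c}(U,\mathcal{L})=0$ for $k\neq n$ is the load-bearing input for all the off-middle Hodge numbers, and it is not a formal consequence of $U$ being affine: Artin vanishing plus duality gives $H^{k}_{c}=0$ for $k<n$, but the range $k>n$ needs $H^{j}(U,\mathcal{L})=0$ for $j<n$, which is a non-resonance statement about the order-two character (exponent $\tfrac12$) on the toric arrangement complement. You gesture at the GKZ/Cayley literature; a correct proof must supply a precise reference or argument for this local system, and must also say in what sense $h^{p,q}(Y)$ is defined for the singular $Y$ (the mixed-Hodge-module upgrade of $\pi_{*}\mathbb{Q}_{Y}=\mathbb{Q}_{X}\oplus j_{!}\mathcal{L}$, so that the splitting and the concentration hold in $\mathrm{MHM}$ and $H^{k}(Y)$ is seen to be pure for $k\neq n$). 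Third, for $n=4$ you correctly observe that $\chi$ plus the off-middle data do not pin down $(h^{3,1},h^{2,2})$, so a separate computation of one middle Hodge number is unavoidable; this is a true gap, not bookkeeping, and the proposition as stated silently relies on the cited reference for it. In short: a plausible and honestly flagged blueprint, along a different (self-contained) route than the paper's citation, but not yet a proof at the three junctures above.
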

Indeed, in \cite{2020-Hosono-Lee-Lian-Yau-mirror-symmetry-for-double-cover-calabi-yau-varieties},
it is shown that 
\begin{equation*}
\chi_{\mathrm{top}}(Y) = \chi_{\mathrm{top}}(X) + (-1)^{n} \chi_{\mathrm{top}}(X^{\vee}).
\end{equation*}
It follows that 
\begin{equation*}
\dim\mathrm{H}^{n}(Y,\mathbb{C})=\dim\mathrm{H}^{n}(X,\mathbb{C})+
\chi_{\mathrm{top}}(X^{\vee}).
\end{equation*}

\begin{instance}[Calabi--Yau double cover of \(\mathbf{P}^{3}\) branch over eight hyperplanes]
\label{ex:p3}
Consider the reflexive polytope \(\Delta = \operatorname{Conv}(\{(3,-1,-1),(-1,3,-1),(-1,-1,3),(-1,-1,-1)\})\).
We then have \(\mathbf{P}_{\Delta}\cong\mathbf{P}^{3}\). In the present case, 
we have \(X=\mathbf{P}_{\Delta}\). (Recall that \(X\) is a MPCP desingularization.)

Let us consider the nef-partition \(E_{1}+E_{2}+E_{3}+E_{4}=H+H+H+H=-K_{X}\) on \(X\). 
Here \(H\) is the hyperplane class of \(X\) and the partition corresponds to 
the partition on the set of \(1\)-cones
\begin{equation}
\{(1,0,0)\}\cup\{(0,1,0)\}\cup\{(0,0,1)\}\cup\{(-1,-1,-1)\}.
\end{equation}
Then the associated gauge fixed double cover family
\(\mathcal{Y}\to U\) is the family of double covers of \(\mathbf{P}^{3}\) branched along eight hyperplanes in
general position. Let \(Y\) denote a fiber in the family. The Hodge diamond of \(Y\) is given by
\begin{equation}
\label{eq:hodge-diamond-p3}
\begin{tikzcd}[column sep=0.1em, row sep=0.1em]
& & & &1 & & &\\
& & &0 & &0 & &\\
& & 0& &1 & &0 &\\
&1 & &9 & &9 & &1\\
& & 0& &1 & &0 &\\
& & &0 & &0 & &\\
& & & &1 & & &
\end{tikzcd}
\end{equation}
Let us investigate the mirror. From Batyrev--Borisov's duality construction,
the dual polytope associated with the partition \(H+H+H+H=-K_{X}\)is
\begin{equation} 
\nabla = \nabla_{1}+\nabla_{2}+\nabla_{3}+\nabla_{4}
\end{equation}
where \(\nabla_{i}=\operatorname{Conv}(\{\mathbf{0}\}\cup (\delta_{i1},\delta_{i2},\delta_{i3}))\) for \(i=1,2,3\) and
\(\nabla_{4}=\operatorname{Conv}(\{\mathbf{0}\}\cup (-1,-1,-1))\). (\(\nabla\) is a zonotope.)
In the present case, \(\mathbf{P}_{\nabla}\) is singular and admits a MPCP desingularization \(X^{\vee}\to\mathbf{P}_{\nabla}\).
Denote by \(\mathcal{Y}^{\vee}\to V\) the gauge fixed double cover family over \(X^{\vee}\) branched along
the dual nef-partition \(F_{1}+F_{2}+F_{3}+F_{4}=-K_{X^{\vee}}\) and let \(Y^{\vee}\) be a fiber.
One can check that the Hodge diamond of \(Y^{\vee}\) is given by
\begin{equation}
\label{eq:hodge-diamond-dual-p3}
\begin{tikzcd}[column sep=0.1em, row sep=0.1em]
& & & &1 & & &\\
& & &0 & &0 & &\\
& & 0& &9 & &0 &\\
&1 & &1 & &1 & &1\\
& & 0& &9 & &0 &\\
& & &0 & &0 & &\\
& & & &1 & & &
\end{tikzcd}
\end{equation}
It is easy to see that \eqref{eq:hodge-diamond-p3} and 
\eqref{eq:hodge-diamond-dual-p3} are mirror Hodge diamonds.
\end{instance}

\begin{remark}
We expect that
the Hodge diamond of \(Y\) and \(Y^{\vee}\) for general smooth toric bases are related in
a simple way; they are isomorphic after a \(\pi\slash 2\)-rotation
(cf.~Figure~\ref{fig:hodge}).
\begin{figure}
\begin{center}
\includegraphics[scale=1.2]{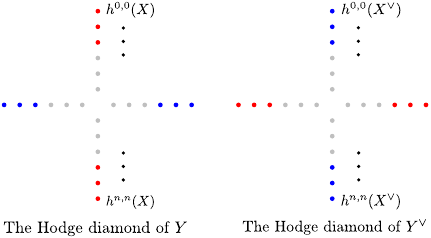}
\caption{
\centering The picture shows the expected Hodge diamonds of \(Y\) and \(Y^{\vee}\); the red dots denote
\(h^{p,p}(X)\) for \(0\le p\le n\) and the blue ones
denote \(h^{q,q}(X^{\vee})\) for \(0\le q\le n\).
}
\end{center}
\label{fig:hodge}
\end{figure}

\end{remark}

\section{\label{sec:GLSMmir} GLSMs as NC resolutions of \texorpdfstring{\(Y\)}{Y}}


\subsection{2d \texorpdfstring{$\mathcal{N}=(2,2)$}{} Gauge Theories with Boundaries}

In this section we will review certain aspects of gauged linear sigma 
models (GLSM) \cite{Witten:1993yc} with boundaries \cite{Herbst:2008jq}. We 
will 
define a set that we will denote GLSM data by the following elements:
\begin{itemize}
  \item \textbf{Gauge group}: a compact Lie group $\mathsf{G}$.
  \item \textbf{Chiral matter fields}: a faithful unitary representation 
$\rho_{m}\colon G\rightarrow \mathrm{GL}(V)$ of $\mathsf{G}$ on some complex vector space $V\cong 
\mathbb{C}^{N}$.
  \item \textbf{Superpotential}: a holomorphic, $G$-invariant polynomial $W\colon
V\rightarrow \mathbb{C}$, namely $W\in \mathrm{Sym}(V^{\vee})^\mathsf{G}$.
  \item \textbf{Fayet--Illiopolous (FI)-theta parameters}: a set of complex 
parameters $t$ such that
  \begin{eqnarray}
  \exp(t)\in \mathrm{Hom}(\pi_{1}(  \mathsf{G} ),\mathbb{C}^{*})^{\pi_{0}(\mathsf{G})}
  \end{eqnarray}
  i.e., $\exp(t)$ is a group homomorphism from $\pi_{1}(  \mathsf{G} )$ to $\mathbb{C}^{*}$ 
that is invariant under the adjoint action of $  \mathsf{G} $ \footnote{Recall that $\pi_{0}(  \mathsf{G} )\cong   \mathsf{G} /  \mathsf{G} _{0}$, where $  \mathsf{G} _{0}$ is the identity component of $\mathsf{G}$, is the only subset of $\mathsf{G}$ that acts nontrivially on $\pi_{1}(  \mathsf{G} )$.}. It is customary to write 
$t=\zeta-i\theta$, therefore \cite{hori2019notes}
  \begin{eqnarray}
  t\in \left(\frac{\mathfrak{t}^{\vee}_{\mathbb{C}}}{2\pi i 
\mathrm{P}}\right)^{W_{  \mathsf{G} }}\cong\frac{\mathfrak{z}^{\vee}_{\mathbb{C}}}{2\pi i 
\mathrm{P}^{W_{  \mathsf{G} }}},
  \end{eqnarray}
  where $\mathrm{P}$ is the weight lattice, $W_{  \mathsf{G} }$ is the Weyl subgroup of 
$  \mathsf{G} $, 
$\mathfrak{t}=\mathrm{Lie}(T)$ is the Cartan subalgebra of $\mathfrak{g}=\mathrm{Lie}(  \mathsf{G} )$ and 
$\mathfrak{z}=\mathrm{Lie}(Z(  \mathsf{G} ))$. 
  
  \item \textbf{\(R\)-symmetry}: a vector $U(1)_{V}$ (and also an axial $U(1)_{A}$, 
but we do not use it in this work) 
\(R\)-symmetry. That is a $U(1)$ action on $V$ that commute with the action of $  \mathsf{G} $ 
on $V$. This action is determined by a representation $R:U(1)_{V}\rightarrow 
\mathrm{GL}(V)$, which is not required to be faithful (so, it weights can be real).
    The superpotential $W$ is required to
have weight $2$ under the $U(1)_{V}$ action:
    \begin{eqnarray}
    W(R(\lambda)\cdot\phi)=\lambda^{2}W(\phi),
    \end{eqnarray}
    where $\phi$ denotes the coordinates in $V$.
  \end{itemize}
\begin{definition}
\label{definition:glsm-data}
We call a tuple $(  \mathsf{G} ,W,\rho_{m},t,R)$ satisfying the conditions 
above a GLSM data.
We call $(  \mathsf{G} ,W,\rho_{m},t,R)$ a \emph{nonanomalous} GLSM data if
furthermore the representation $\rho_{m}$ factors through $\mathrm{SL}(V)$.
\end{definition}  

\begin{remark}
In the following, since we are interested on properties of CY varieties as complex geometries, we will find convenient to use the complexified gauge group. We use the short notation
$$
G:=\mathsf{G}_{\mathbb{C}}
$$
\end{remark}
  
In the following we will only work with nonanomalous GLSM data which is the 
relevant case for CY mirrors. We will focus our attention on B-type boundary 
conditions, 
that is, boundary conditions preserving the combination of supercharges 
$\mathbf{Q}_{B}:=\overline{\mathbf{Q}}_{+}+\overline{\mathbf{Q}}_{-}$ (and its 
charge conjugate $\mathbf{Q}^{\dag}_{B}:=\mathbf{Q}_{+}+\mathbf{Q}_{-}$) 
\cites{Herbst:2008jq,Hori:2013ika}. These 
boundary conditions are termed B-branes and they will play a central role in 
the 
present work. We define them for a fixed value of the FI-theta parameter 
$t$ and superpotential $W$. Moreover they form a triangulated 
category. We denote such category by $MF_{  \mathsf{G} }(W)$. Its objects are denoted 
$(\mathcal{B},L_{t})$ which we will describe in the following. Let us start 
defining $\mathcal{B}$, the \emph{algebraic data}:
\begin{definition}
\label{definition:algebraic-data}
we call algebraic data of the element $(\mathcal{B},L_{t})\in MF_{G}(W)$ to the 
quadruple $\mathcal{B}=(M,\rho_{M},R_{M},\mathbf{T})$ where the elements are 
defined as:
\begin{itemize}
\itemsep = -3pt
\item \textbf{Chan--Paton vector space}: a $\mathbb{Z}_{2}$-graded, finite
rank free $\mathrm{Sym}(V^{\vee})$-module denoted by $M=M_{0}\oplus 
M_{1}$.
\item \textbf{Boundary gauge and (vector) $R$-charge representation}: $\rho_{M}\colon 
G\rightarrow \mathrm{GL}(M)$, and $R_{M}\colon U(1)_{V}\rightarrow \mathrm{GL}(M)$ commuting and even 
representations, where the weights of $R_{M}$ are allowed to be rational.
\item \textbf{Matrix factorization of $W$}: Also known as the \emph{tachyon profile}, 
a 
$\mathbb{Z}_2$-odd endomorphism $\mathbf{T} \in 
\mathrm{End}^{1}_{\mathrm{Sym}(V^{\vee})}(M)$ satisfying 
$\mathbf{T}^{2}=W\cdot\mathrm{id}_{M}$.
\end{itemize}
The group actions $\rho_{M}$ and $R_{M}$ must be compatible with $\rho_{m}$ and 
$R$, i.e.,
  for all $\lambda\in U(1)_{V}$ and $g\in   \mathsf{G} $, we demand
    \begin{equation}\label{rhodef}
   \begin{aligned}
      R_{M}(\lambda)\mathbf{T}(R(\lambda)\phi)R_{M}(\lambda)^{-1} & = \lambda 
\mathbf{T}(\phi) , \\
\rho_{M}(g)^{-1}\mathbf{T}(\rho_{m}(g)\cdot \phi)\rho_{M}(g) & = 
\mathbf{T}(\phi) .
    \end{aligned}
  \end{equation}
\end{definition}   

For later use, we denote the weights of $\rho_{m}$ as 
$Q_{j}\colon\mathfrak{t}\rightarrow\mathbb{R}$ and the weights of $R$ as 
$R_{j}\in\mathbb{R}$ for $j=1,\ldots,N=\mathrm{dim}_{\mathbb{C}}V$. We denote 
by $\mathcal{H}\subset\mathfrak{t}_{\mathbb{C}}$ the collection of hyperplanes 
\begin{eqnarray}
  \mathcal{H}=\bigcup_{j=1}^{N}\bigcup_{n\in\mathbb{Z}_{\geq 
0}}\{\sigma\in \mathfrak{t}_{\mathbb{C}}~|~Q_{j}(\sigma)-\mathrm{i}\frac{R_{j}}{2}-\mathrm{i}n=0\}
  \end{eqnarray}
  
The other piece of data that we need, termed $L_{t}$, is a profile for the 
vector multiplet scalar.
\begin{definition}
\label{definition:admissible-contour}
 Consider a gauge-invariant (i.e.~invariant under the $G$-action) 
middle-dimensional subvariety 
$L_{t}\subset \mathfrak{g}_{\mathbb{C}}\setminus \mathcal{H}$ of the 
complexified Lie algebra of $  \mathsf{G} $ or equivalently its intersection 
$L_{t}\subset\mathfrak{t}_{\mathbb{C}}\setminus \mathcal{H}$ invariant under 
the action of the Weyl group $\mathcal{W}_{  \mathsf{G} }\subset   \mathsf{G} $. We define an 
\emph{admissible contour} as a contour
$L_{t}$ satisfying the following two conditions.
\begin{itemize}
\itemsep = -3pt
\item[(a)] \(L_{t}\) is a continuous deformation of the real contour 
$L_{\mathbb{R}}:=\{\Im\tau=0~|~\tau\in 
\mathfrak{t}_{\mathbb{C}}\}$;
\item[(b)] the imaginary part of the boundary effective twisted superpotential 
$\widetilde{W}_{\text{eff},q}\colon\mathfrak{t}_{\mathbb{C}}\rightarrow \mathbb{C}$
\begin{equation*}
\label{twistedbdry}
\widetilde{W}_{\text{eff},q}(\sigma):= \left(\sum_{\alpha>0}\pm 
i\pi\,\alpha\cdot\sigma\right)-\left(\sum_j 
(Q_j(\sigma))\left(\log\left(\frac{iQ_j(\sigma)}{\Lambda}
\right)-1\right)\right)-t(\sigma)+2\pi i q(\sigma)
\end{equation*}
approaches $+\infty$ in all asymptotic directions of $L_{t}$ and for all the 
weights $q\in \mathfrak{t}^{\vee\arg}$ of $\rho_{M}$.
Signs in the sum over positive roots $\alpha$ of $  \mathsf{G} $ depend on the Weyl chamber 
in which $\Re\sigma$ lies.
\end{itemize}
\end{definition}  

\begin{remark}
The existence of an admissible contour \(L_{t}\) remains widely open
in higher dimensions, i.e., when \(\dim Z(  \mathsf{G} )\ge 2\).
In the later section, we will give a construction of \(L_{t}\)
when the gauge group \(  \mathsf{G} \) is \emph{abelian} and connected.
\end{remark}

\begin{definition}
\label{definition:fullMF-data}
 The objects $(\mathcal{B},L_{t})\in \mathrm{MF}_{  \mathsf{G} }(W)$ are composed from algebraic 
data $\mathcal{B}$ and an admissible contour $L_{t}$. the morphisms between two 
objects $(\mathcal{B}_{1},L_{t}), (\mathcal{B}_{2},L'_{t})\in \mathrm{MF}_{  \mathsf{G} }(W)$, at 
the same $t$, is defined by the class of the maps
\begin{equation}
\Psi\in \mathrm{Hom}(M_{1},M_{2})
  \end{equation}
 in the cohomology defined by the differential $D$:
\begin{equation}
D\Psi:=\mathbf{T}_{2}\Psi-(-1)^{|\Psi|}\Psi \mathbf{T}_{1}
  \end{equation}
  where $|\Psi|\in\{ 0,1\}$ is the $\mathbb{Z}_{2}$-degree of $\Psi$.
\end{definition}  
It is important to point out that the algebraic data $\mathcal{B}$ is defined in 
\cite{ballard2019variation} is a more general context. However the category 
$\mathrm{MF}_{\mathsf{G}}(W)$ with objects given by pairs $(\mathcal{B},L_{t})$ has its origins 
on the dynamics of $B$-branes on GLSMs \cite{Hori:2013ika} and we are not aware 
on an analogous definition in the mathematics literature.
\begin{remark}[Grade restriction rule]
The parameter $e^{t}$ corresponds to local coordinates on the stringy K\"{a}hler moduli space $\mathcal{M}_{\mbox{\footnotesize K\"{a}hler}}$. 
The space $\mathcal{M}_{\mbox{\footnotesize K\"{a}hler}}$ is locally isomorphic to $(\mathbb{C}^{\ast})^{\dim(\mathfrak{z})}$. Globally it corresponds to a partial compactification of $(\mathbb{C}^{\ast})^{\dim(\mathfrak{z})}\setminus \Delta$, where $\Delta$ is a complex codimension 1 closed subset. The space $\mathcal{M}_{\mbox{\footnotesize K\"{a}hler}}$ can be determined from the nonanomalous GLSM data using physical considerations \cite{Witten:1993yc} however we are not aware of a general and purely mathematical definition of $\mathcal{M}_{\mbox{\footnotesize K\"{a}hler}}$ that does not rely on indirect methods, such as mirror symmetry. When $  \mathsf{G} \cong U(1)^{s}$, then $\mathcal{M}_{\mbox{\footnotesize K\"{a}hler}}$ can be determined from toric geometry considerations plus a study of Coulomb and mixed Coulomb--Higgs branches of the GLSM. This gives, in the abelian case, $\mathcal{M}_{\mbox{\footnotesize K\"{a}hler}}$ a chamber structure and the different chambers can be classified by cones in the secondary fan, as we will see in section \ref{subsec:toric-git}. Each of these chambers is termed a \emph{phase}, in the physics jargon and whenever $\zeta=\Re(t)$ belongs to the interior of a chamber (and $e^{t}\notin\Delta$) it is expected that there always exists a unique, up to homotopy, an 
admissible $L_{t}$ for any quadruple $\mathcal{B}$. We will show this explicitly by constructing $L_{t}$ for the examples (and the chamber) we are interested in, in section... Hence, on a fixed phase, we can ignore this 
piece of data and work with an entirely algebraic category whose objects are 
the algebraic data $\mathcal{B}$. Is important to remark that this is not generically true if we consider paths in 
$t$-space. For a given $\mathcal{B}$ an admissible $L_{t}$ can stop being admissible as $t$ crosses certain regions (\emph{phase walls}). This problem can be solved by taking the cone between $\mathcal{B}$ and some $\mathcal{B}_{nh}$ which is nullhomotopic. This operation results in an equivalent algebraic data $\mathcal{B}'$ but now $L_{t}$ remains admissible as we cross the phase wall. This phenomenon of `B-brane transport' along phase walls is known as the grade restriction rule and was originally studied and solved for $  \mathsf{G} $ abelian in \cite{Herbst:2008jq} and has been rigourosly formulated in the 
mathematics literature in 
\cites{segal2011equivalences,halpern2015derived,ballard2019variation} for general 
$G$. A physics perspective on the case of general group $  \mathsf{G} $, including also anomalous GLSMs, can be 
found in 
\cites{Clingempeel:2018iub,Hori:2013ika,hori2019notes,eager2017beijing}. In the present work we are only concerned on a specific phase for a family of abelian GLSMs we will define in section \ref{subsec:glsm-for-nc} therefore we will not be concerned with the grade restriction rule.
\end{remark}
We are ready to define our main function that will give rise to the $A$-periods. 

\begin{definition}
\label{definition:hemispherePF}
We define the central charge of an object $(\mathcal{B},L_{t})\in MF_{G}(W)$ to 
be the function
\begin{equation}
\label{ZD2}
Z_{D^{2}}(\mathcal{B}):=\int_{L_{t}\subset 
\mathfrak{t}_{\mC}} \mathrm{d}^{l_{  \mathsf{G} }}\sigma 
\prod_{\alpha>0}\alpha(\sigma)\sinh(\pi\alpha(\sigma))\prod_{j=1}
^{N}\Gamma\left(iQ_
{j}(\sigma)+\frac{R_{j}}{2}\right)e^{it(\sigma)}f_{\mathcal{B}}(\sigma).
\end{equation}
where
\begin{equation}
f_{\mathcal{B}}(\sigma):=\mathrm{tr}_{M}\left(R_{M}(e^{i
\pi})\rho_{M}(e^{2\pi\sigma
} )\right)
\end{equation}
the symbol $\prod_{\alpha>0}$ denotes the product over the positive roots of 
$G$  and $l_{  \mathsf{G} }:=\mathrm{dim}(\mathfrak{t})$.
\end{definition}  
The function (\ref{ZD2}) was computed by direct supersymmetric localization 
methods in quantum field theory 
\cites{Hori:2013ika,Honda:2013uca,Sugishita:2013jca}.

\subsection{GLSMs and \texorpdfstring{$A$}{A}-periods}

In order to study the IR theory of a nonanomalous GLSM, characterized by the 
GLSM data $(  \mathsf{G} ,W,\rho_{m},t,R)$ we need to define the classical Higgs branch for 
a given value of $\zeta=\mathrm{Re}(t)$,
\begin{definition}
\label{definition:class-higgs}
We define the classical Higgs branch at $\zeta$, associated to the GLSM data 
$(  \mathsf{G} ,W,\rho_{m},t,R)$ as the pair $(Y_{\zeta},W_{\zeta})$
where
\begin{equation}\label{sympquotglsm}
Y_{\zeta}:=\mu^{-1}(\zeta)/  \mathsf{G} 
\end{equation}
where $\mu\colon V\rightarrow \mathfrak{g}^{\vee}$ denotes the moment map associated 
to $\rho_{m}$ and 
\begin{equation}
W_{\zeta}:=W|_{Y_{\zeta}}
\end{equation}
\end{definition}  
For $\zeta$ in a phase, there exist a projection functor $\pi_{\zeta}$ \cites{Herbst:2008jq,ballard2019variation}
\begin{equation}
\pi_{\zeta}\colon \mathrm{MF}_{  \mathsf{G} }(W)\rightarrow \mathcal{D}_{\zeta},
\end{equation}
$\mathcal{D}_{\zeta}$ is a triangulated 
category, known as the the IR B-brane category, and its specific description depends on the particular model we are considering. 
The functor $\pi_{\zeta}$ and the category $\mathcal{D}_{\zeta}$  only depends on the chamber where $\zeta$ belongs.
For an object $\mathcal{E}\in\mathcal{D}_{\zeta}$ we can define the $A$-period 
$Z(\mathcal{E})$, where $Z\colon\mathcal{D}_{\zeta}\rightarrow\mathbb{C}$ is a 
function known as the central charge of $\mathcal{E}$. This function is related 
to Bridgeland's central charge \cite{MR2373143}, however its intrinsic definition, 
relevant for mirror symmetry, has been studied in the physics literature 
for general superconformal field theories \cites{Knapp:2020oba,hori2019notes}. In \cite{Aleshkin:2023hbu}, a mathematical proof of he relation between (\ref{ZD2}) and appropriately defined invariants over a space of quasimaps associated to the GLSM has been presented (such space of quasimaps have been studied for instance in \cites{Ciocan-Fontanine:2018oaw,Favero:2020cke,Fan:2015vca}).
For the case that $W_{\zeta}$ is Morse--Smale, then 
$X_{\zeta}:=\mathrm{d}W^{-1}_{\zeta}(0)\cap Y_{\zeta}$ is a smooth Calabi--Yau manifold and
\begin{equation}
\mathcal{D}_{\zeta}=D^{b}Coh(X_{\zeta})
\end{equation}
in that case, a definition of \(A\)-periods has been proposed in the physics 
\cite{Hosono:2000eb} and mathematics \cite{Iritani:2009ab} literature. It is conjectured in  
\cite{Hori:2013ika} that
\begin{equation}
Z_{D^{2}}(\mathcal{B})=Z(\pi_{\zeta}(\mathcal{B})).
\end{equation}
We will then make use of this conjecture to define the \(A\)-periods in the following. For the cases when the stabilizer of the points 
$\mu^{-1}(\zeta)$ is a finite subgroup of $  \mathsf{G} $, we are in the physics situation 
known as a \emph{weakly coupled phase} and the IR B-brane category can be 
characterized as \cite{hori2019notes}
\begin{equation}
\mathcal{D}_{\zeta}=\mathrm{MF}(Y_{\zeta},W_{\zeta})
\end{equation}
where $\mathrm{MF}(Y_{\zeta},W_{\zeta})$ denoted the category of coherent matrix 
factorizations \cites{buchweitz1987maximal,orlov2004triangulated,efimovcoherent,
ballard2012resolutions}. It is worth noting that this category is independent 
of the choice of $\zeta$ inside 
the chamber.

\subsection{GLSMs for noncommutative resolutions}\label{subsec:glsm-for-nc}

We will study a particular class of GLSMs, whose IR theory in a particular 
phase, i.e., the classical Higgs branch $(Y_{\zeta},W_{\zeta})$, for $\zeta$ in 
the interior of some cone, satisfies the following conditions 
\begin{itemize}
 \item The variety $Y_{\zeta}$ takes the form of a vector bundle 
$Y_{\zeta}=\mathcal{V}\rightarrow B$, where $B$ is compact K\"ahler and we 
allow the posibility of $B$ having orbifold singularities and $\mathcal{V}$ 
being a orbibundle. More precisely, one should consider $B$ an algebraic stack \cite{Pantev:2005zs}.
\item We require $W_{\zeta}\in 
\mathrm{H}^{0}(\mathcal{O}_{Y_{\zeta}})$ and 
\begin{equation}
\mathrm{d}W_{\zeta}^{-1}(0)=B
\end{equation}
as a set.
\item There exist a holomorphic vertical Killing vector field $\xi$ in 
$Y_{\zeta}$ implementing a $\mathbb{C}^{\ast}$ action on $Y_{\zeta}$ and satisfying
\begin{equation}
\mathcal{L}_{\xi}W_{\zeta}=W_{\zeta}
\end{equation}
\end{itemize}
these conditions characterize what is called a \emph{good hybrid} 
\cites{Aspinwall:2009qy,Bertolini:2013xga} (originally studied in 
\cite{Witten:1993yc}).
We will be particularly interested in the case when the function $W_{\zeta}$ is 
quadratic in the fiber coordinates of $\mathcal{V}$. Let us outline the construction suitable for our singular double covers. A toric description of the variety $X$ can be given in terms homogeneous coordinates $\phi_{I}$ i.e. the Cox ring $X$. These coordinates are weighted by the action of the gauge group. Denote their $  \mathsf{G} $-weights by $\theta_{I}$. Then rather than considering $X$, we will consider the GLSM describing a $\mathbb{Z}_{2}$ gerbe over $X$. This amounts to consider the toric variety $\widetilde{X}$ described by homogeneous coordinates $\tilde{\phi}_{I}$ whose weights are $2\theta_{I}$. Then there is a trivially acting $\mathbb{Z}_{2}$ subgroup of $G$. Consider a collection of line bundles $E_{a}$ over $X$, then any section $s_{a}(\phi)\in H^0(X,E_{a})$ can be written as a section of $\widetilde{X}$ by considering $f_{a}(\tilde{\phi}):=s_{a}(\tilde{\phi})$. The $G$-weights of $f_{a}(\tilde{\phi})$ will be a multiple of $2$ therefore it makes sense to consider the line bundles $E^{-1}_{a}\rightarrow\tilde{X}$ so that $z_{a}^{2}f_{a}(\tilde{\phi})\in H^{0}(\mathcal{O}_{\widetilde{X}})$, where $z_{a}$ is a coordinate on the fiber of $E^{-1}_{a}$. 

The hybrid model we want to consider has target space  
\begin{equation}
Y\cong\mathcal{V}\rightarrow\widetilde{X}
\end{equation}
where
\begin{equation}
\mathcal{V}= \oplus_{a}E^{-1}_{a}\rightarrow \widetilde{X}
\end{equation}
with $E_{a}$'s line bundles 
over $X$ as above. Therefore locally, if we denote the fiber coordinates as $z_{a}$, 
$a=1,\ldots, \mathrm{rk}(\mathcal{V})$ and the base coordinates in the $\mathbb{Z}_{2}$ gerbe as $\tilde{\phi}$ then 
$W_{\zeta}$ can be 
written as
\begin{equation}
W_{\zeta}=\sum_{a=1}^{\mathrm{rk}(\mathcal{V})}z_{a}^{2}f_{a}(\tilde{\phi})
\end{equation}
where $f_{a}(\tilde{\phi})\in \mathrm{H}^{0}(\widetilde{X},2E_{a})$. 
The resulting quantum field theory is interpreted as a 
noncommutative (NC) resolution of the (possibly singular) variety given 
by a double cover of $X$ branched along $\prod_{a}f_{a}(\phi)$ 
\cite{Caldararu:2010ljp}. Equivalently we can write
\begin{equation}
Y\cong \oplus_{a}E_{a}^{-\frac{1}{2}}\rightarrow\left[X/\mathbb{Z}_{2}\right]
\end{equation}
where $\mathbb{Z}_{2}$ acts trivially on $X$ and the transition functions of $E_{a}^{-\frac{1}{2}}$ are given by $g_{ij}^{-\frac{1}{2}}$ with $g_{ij}$ the transition functions of $E_{a}$. This description of $Y$ as a bundle over a gerbe appeared originally in \cite{Pantev:2005zs} in the context of GLSMs and in \cite{Caldararu:2010ljp} it was identified with a NC resolution for double covers of $\mathbf{P}^3$. More recently, in \cites{Katz:2022lyl,Katz:2023zan} the authors have presented a detailed study of the enumerative geometry of CY (singular) double covers over
$\mathbf{P}^3$, and their noncommutative resolutions. It is important to remark that even though we can consider $\tilde{X}$ as an orbifold (with trivial action), it is crucial to consider $\tilde{X}$ as an algebraic stack, in order to identify the correct category of $B$-branes \cites{Addington:2012zv,Guo:2021aqj}. We will give more details on the GLSM construction, its $B$-branes and their $A$-periods in section \ref{subsubsec:a-period-glsm}.

\subsection{Abelian GLSMs}
\label{subsec:abelian}

Since we will be working on 
abelian GLSMs, in this subsection, 
we will apply the theory in \S\ref{sec:GLSMmir} to 
abelian (and nonanomalous) GLSMs. For our purpose, we also assume that \(  \mathsf{G} \) is connected
in what follows,
i.e., \(  \mathsf{G} =\mathrm{U}(1)^{s}\).
In which case, we have
\begin{itemize}
  \itemsep = -3pt
\item \( \mathsf{G}=T\) and \(W_{ \mathsf{G}}=1\);
\item \(\pi_{1}(G)\cong\mathbb{Z}^{s}\);
\item\(\mathfrak{t}\cong\mathbb{R}^{s}\),
\(\mathfrak{t}_{\mathbb{C}}=\mathfrak{t}\otimes_{\mathbb{R}}\mathbb{C}\cong\mathbb{C}^{s}\),
and the weight lattice \(\mathrm{P}\cong\mathbb{Z}^{s}\).
The set of the Fayet--Illiopolous (FI)-theta parameter is given by
\begin{equation*}
\mathfrak{t}^{\vee}_{\mathbb{C}}\slash 2\pi\mathrm{i} \mathrm{P}\cong \mathbb{R}^{s}\oplus (\mathrm{i}\mathbb{R}\slash 2\pi
\mathrm{i}\mathbb{Z})^{s}.
\end{equation*}
\end{itemize}
The hemisphere partition function \eqref{ZD2} is simplified to
\begin{equation}
\label{eq:partition-fcn-abelian}
Z_{D^{2}}(\mathcal{B}) = \int_{L_{t}} \mathrm{d}^{s}\sigma 
\prod_{j=1}^{N} \Gamma\left(\mathrm{i}Q_{j}(\sigma)+\frac{R_{j}}{2}\right) 
e^{\mathrm{i}t(\sigma)} f_{\mathcal{B}}(\sigma).
\end{equation}
Here \(\sigma=(\sigma_{1},\ldots,\sigma_{s})\) is
the coordinate on \(\mathfrak{t}_{\mathbb{C}}\) and 
\(t=(t_{1},\ldots,t_{s})\in \mathfrak{t}_{\mathbb{C}}^{\vee}\slash 2\pi\mathrm{i} \mathrm{P}\).
We will also use the notation \(t=\zeta-\mathrm{i}\theta\)
with \(\zeta=(\zeta_{1},\ldots,\zeta_{s})\in\mathbb{R}^{s}\) and 
\(\theta=(\theta_{1},\ldots,\theta_{s})\in (\mathrm{i}\mathbb{R}\slash 2\pi
\mathrm{i}\mathbb{Z})^{s}\). 

From the construction, the central charge \(Z_{D^{2}}(\mathcal{B})\)
is apparently a multi-valued function in \(t\). However,
it would be more convenient to regard \(\theta\) as
the coordinate on the universal cover \(\mathbb{R}^{s}\)
and \(Z_{D^{2}}(\mathcal{B})\) as a function on \(\mathfrak{t}_{\mathbb{C}}\)
rather than a multi-valued function on the 
quotient \(\mathfrak{t}_{\mathbb{C}}^{\vee}\slash 2\pi\mathrm{i} \mathrm{P}
\cong (\mathrm{i}\mathbb{R}\slash 2\pi
\mathrm{i}\mathbb{Z})^{s}\). Accordingly, we will write
\begin{equation*}
t(\sigma) = \langle t,\sigma\rangle = \sum_{k=1}^{s} t_{k}\sigma_{k}
\end{equation*}
the usual canonical dual pairing between \(\mathfrak{t}_{\mathbb{C}}\) and 
\(\mathfrak{t}_{\mathbb{C}}^{\vee}\).
Consequently, 
\begin{equation*}
e^{\mathrm{i}t(\sigma)} = 
e^{\mathrm{i}\langle t,\sigma\rangle} = 
\prod_{k=1}^{s} (e^{-t_{k}})^{-\mathrm{i}\sigma_{k}}=
\prod_{k=1}^{s} q_{k}^{-\mathrm{i}\sigma_{k}}
\end{equation*}
and each factor is a multi-valued function on \(\mathbb{C}^{\ast}\)
(with coordinate \(q_{k}\)).

\subsection{Toric GIT and the secondary fan}
\label{subsec:toric-git}
For the sake of completeness, we recall somes basics of toric GIT. We recall the definition of the secondary fan and we relate it with the concept of phases of a GLSM.

Consider an
algebraic \emph{subgroup} \(G\subset (\mathbb{C}^{\ast})^{r}\). It is known that the induced map on the character groups
\begin{equation}
\label{eq:char-morphism}
\begin{tikzcd}
&\mathbb{Z}^{r}\ar[r] &\widehat{G}
\end{tikzcd}
\end{equation}
is surjective. Note that we do not assume that \(G\) is an algebraic torus; 
\(G\) might have non-trivial torsion elements. Denote by \(M\) its kernel. We obtain a 
short exact sequence
\begin{equation}
\label{eq:char-exact-seq-z}
\begin{tikzcd}
& 0\ar[r] &M\ar[r] &\mathbb{Z}^{r}\ar[r] &\widehat{G}\ar[r]&0.
\end{tikzcd}
\end{equation}
Let \(G\) act on \(\mathbb{C}^{r}\) via the inclusion \(G\subset (\mathbb{C}^{\ast})^{r}\).
Each \(\theta\in \widehat{G}\) determines a GIT quotient \(\left[\mathbb{C}^{r}\sslash_{\theta} G\right]\).
Applying \(\mathrm{Hom}_{\mathbb{Z}}(-,\mathbb{Z})\) to \eqref{eq:char-exact-seq-z}, we get
\begin{equation}
\label{eq:char-exact-seq-z-dual}
\begin{tikzcd}[column sep=1em]
& 0\ar[r] &\mathrm{Hom}_{\mathbb{Z}}(\widehat{G},\mathbb{Z})\ar[r] 
&\mathbb{Z}^{r}\ar[r] &N:=\mathrm{Hom}_{\mathbb{Z}}(M,\mathbb{Z})\ar[r]
&\mathrm{Ext}^{1}_{\mathbb{Z}}(\widehat{G},\mathbb{Z})\ar[r]&0.
\end{tikzcd}
\end{equation}
The morphism \(\mathbb{Z}^{r}\to N\) has a finite cokernel.
For \(\theta\in\widehat{G}\), it is known that 
the underlying space of the GIT quotient \(\left[\mathbb{C}^{r}\sslash_{\theta} G\right]\) is a toric variety.
To describe the underlying space, let \(\nu_{i}\) be the image of \(e_{i}=(\delta_{1i},\ldots,\delta_{ri})\in\mathbb{Z}^{r}\)
under the morphism \(\mathbb{Z}^{r}\to N\) in \eqref{eq:char-exact-seq-z-dual}.

For each \(\theta\in\widehat{G}\), we choose a lifting \(\mathbf{a}:=(a_{1},\ldots,a_{r})\in\mathbb{Z}^{r}\)
from \eqref{eq:char-morphism}; this is possible since \eqref{eq:char-morphism} is surjective.
Define the polyhedron
\begin{equation}
P_{\mathbf{a}}:=\{m\in M_{\mathbb{R}}~|~\langle m,\nu_{i}\rangle \ge -a_{i},~i=1,\ldots,r\}.
\end{equation}
It is known that the toric variety associated to \(P_{\mathbf{a}}\) is the
underlying toric variety of the GIT quotient \(\left[\mathbb{C}^{r}\sslash_{\theta} G\right]\).

\begin{instance}
\label{ex:git}
Let \(G =\{(t,s)~|~s^{2}=t^{2}\}\subset (\mathbb{C}^{\ast})^{2}\) acting on \(\mathbb{C}^{2}\).
Then \(\mathbb{C}^{\ast}\times \mu_{2}\cong G\)
(via \((t,\zeta_{2})\mapsto (t,\zeta_{2}t)\)).
Here \(\mu_{2}\) is the abelian of order two, written additively.
Under this identification, the induced map between their character group is
\begin{equation}
\mathbb{Z}^{2} \to \widehat{G}=\mathbb{Z}\times\mu_{2},~~(a,b)\mapsto (a+b, b\mod{2})
\end{equation}
whose kernel is given by \(M=\{(a,b)~|~a+b=0,~b\in2\mathbb{Z}\}\cong\mathbb{Z}\). 
Applying \(\mathrm{Hom}_{\mathbb{Z}}(-,\mathbb{Z})\) to
\begin{equation}
\label{eq:char-ext}
\begin{tikzcd}
& 0 \ar[r] &M\ar[r]& \mathbb{Z}^{2} \ar[r] &\widehat{G}\ar[r] &0,
\end{tikzcd}
\end{equation}
we obtain an exact sequence
\begin{equation}
0\to\mathbb{Z}=\mathrm{Hom}_{\mathbb{Z}}(\widehat{G},\mathbb{Z})
\xrightarrow{a\mapsto (a,a)}\mathbb{Z}^{2} \xrightarrow{(x,y)\mapsto 2x-2y}  N\cong\mathbb{Z} \to \mu_{2}\to 0.
\end{equation}
Let \(\nu_{i}\in N\) denote the image of \(e_{i}\in \mathbb{Z}^{2}\).
Consider the following various situations.
\begin{itemize}
\item[(a)] \(\theta=(1,1)\in\widehat{G}=\mathbb{Z}\times\mu_{2}\). 
We can choose a lifting of \((1,1)\) from \eqref{eq:char-ext}. 
For example, \(\mathbf{a}=(0,1)\in\mathbb{Z}^{2}\) will do. Then consider the polytope in \(M_{\mathbb{R}}\cong\mathbb{R}\)
\begin{equation}
P_{\mathbf{a}}=\{m\in M_{\mathbb{R}}~|~\langle m,\nu_{1}\rangle\ge 0,~\langle m,\nu_{2}\rangle\ge -1\}=\{m\in M_{\mathbb{R}}~|~0\le m\le 1/2\}
\end{equation}
whose normal fan \(\Sigma\) is the fan defining \(\mathbf{P}^{1}\). The triple \((N,\Sigma,\mathbb{Z}^{2}\to N)\)
is the \emph{stacky fan} in the sense of Borisov--Chen--Smith
\cite{2005-Borisov-Chen-Smith-the-orbifold-chow-ring-of-toric-deligne-mumford-stacks}.
The quotient stack we obtain is \(\left[\mathbb{C}^{2}\sslash_{\theta}G\right]\cong \left[\mathbf{P}^{1}\slash \mu_{2}\right]\)
whose coarse moduli space is equal to \(\mathbf{P}^{1}\).
\item[(b)] \(\theta=(1,0)\in\mathbb{Z}\times\mu_{2}\). In this case, we can use \(\mathbf{a}=(1,0)\in\mathbb{Z}^{2}\) as our lifting.
It turns out that we obtain the same toric stack in (a).
\item[(c)] \(\theta = (0,1)\in\mathbb{Z}\times \mu_{2}\). Let \(\mathbf{a}=(-1,1)\) be a lifting. In this case, we have
\begin{equation}
P_{\mathbf{a}}=\{m\in M_{\mathbb{R}}~|~\langle m,\nu_{1}\rangle\ge -1,~\langle m,\nu_{2}\rangle \ge 1\}=\{1/2\}
\end{equation}
and the underlying space of the GIT quotient \(\left[\mathbb{C}^{2}\sslash_{\theta} G\right]\) is indeed a point.
\item[(d)] \(\theta=(-1,0)\in\mathbb{Z}\times \mu_{2}\). We can pick a lifting \(\mathbf{a}=(-1,1)\in\mathbb{Z}^{2}\)
and the polyhedron in the present situation is 
\begin{equation}
P_{\mathbf{a}}=\{m\in M_{\mathbb{R}}~|~\langle m,\nu_{1}\rangle\ge 1,~\langle m,\nu_{2}\rangle \ge 0\}=\emptyset
\end{equation}
so \(\left[\mathbb{C}^{2}\sslash_{\theta} G\right]=\emptyset\) in this case, i.e., 
there is no semistable point under the stability condition \(\theta=(-1,0)\).
\end{itemize}
We can see the wall-crossing phenomenon clearly in this example.
\end{instance}

Note that given \(G\subset (\mathbb{C}^{\ast})^{r}\) there are only finitely many distinct GIT quotients 
\(\left[\mathbb{C}^{r}\sslash_{\theta} G\right]\) up to isomorphism. Denote by
\(\chi_{i}\) the image of \(e_{i}=(\delta_{1i},\ldots,\delta_{ri})\in\mathbb{Z}^{r}\)
under the morphism \(\mathbb{Z}^{r}\to \widehat{G} \to \widehat{G}_{\mathbb{R}}\)
and let
\begin{equation}
C_{\boldsymbol\chi}:=\operatorname{Cone}(\chi_{1},\ldots,\chi_{r})
\end{equation}
be the cone generated by \(\chi_{i}\) in the vector space \(\widehat{G}_{\mathbb{R}}\).
It is known that there is a fan, called
the \emph{secondary fan}, whose support is \(C_{\boldsymbol\chi}\) and
satisfying the following property:
\begin{itemize}
\item the collection of the relative interior of cones in the secondary fan give rise to a decomposition
of \(C_{\boldsymbol\chi}\) such that the GIT quotient \(\left[\mathbb{C}^{r}\sslash_{\theta} G\right]\)
is constant on each subset.
\end{itemize}
We can think of the secondary fan as the space parameterizing all the possibly non-empty GIT quotients.

\begin{instance}[Example \ref{ex:git} continued]
We have \(\widehat{G}_{\mathbb{R}}=\mathbb{R}\) in this case. 
The secondary fan (under the identification \(\mathbb{C}^{\ast}\times\mu_{2}\cong G\)) consists of two cones
\begin{equation}
\mathbb{R}_{\ge 0}~\mbox{and}~\{0\}
\end{equation}
and has the ``chamber decomposition'' \(\mathbb{R}_{\ge 0} = \{0\}\cup \mathbb{R}_{>0}\).
\end{instance}



\section{
\texorpdfstring{\(A\)}{A}-periods of GLSM and 
\texorpdfstring{\(B\)}{B}-periods of \texorpdfstring{\(Y^{\vee}\)}{Yvee}}
\label{sec:Periods} 


\subsection{GKZ systems for periods of Calabi--Yau double covers}
\label{subsec:gkz}
Given an integral matrix \(A=(a_{ij})\in\mathrm{Mat}_{d\times m}(\mathbb{Z})\) and
a parameter \(\beta=(\beta_{i})\in\mathbb{C}^{d}\), 
the GKZ system \(\mathcal{M}_{A}^{\beta}\)
is a set of partial differential equations on \(\mathbb{C}^{m}\) consisting
of the following two types of operators
\begin{itemize}
\item the box operators \(\Box_{\ell}:=\partial^{\ell^{+}}-\partial^{\ell^{-}}\)
where \(\ell^{\pm}\in \mathbb{Z}^{m}_{\ge 0}\) such that \(A\ell^{+}=A\ell^{-}\);
\item the Euler operator \(E_{i}-\beta_{i}:=\sum_{j=1}^{m} a_{ij}x_{j}\partial_{j}-
\beta_{i}\) for \(i=1,\ldots,d\).
\end{itemize}
In the bullets, \(x_{1},\ldots,x_{m}\) are coordinates on \(\mathbb{C}^{m}\)
corresponding to the columns of \(A\),
\(\partial_{j}\equiv\partial\slash\partial x_{j}\) is the partial
derivative, and 
the multi-index convention is used.

Suppose we are given the data in \S\ref{subsection:b-b-construction} and
let us retain the notation there.
It is known that the period integrals
of \(\mathcal{Y}^{\vee}\to V\) satisfy a certain type of 
GKZ systems with a fractional exponent. 
\begin{itemize}
\item[(6)] Let \(\Sigma\) be the fan defining \(X\) and
\(J_{1}\sqcup\cdots\sqcup J_{r}=\Sigma(1)\) be the corresponding 
nef-partition on \(X\). Put \(J_{k}=\{\rho_{k,1},\ldots,\rho_{k,m_{k}}\}\).
Also we put additionally \(\rho_{k,0}=\mathbf{0}\in N\) for each \(k=1,\ldots,r\).
From the duality construction, we have
\(\nabla_{k}\cap N = J_{k}\cup\{\rho_{k,0}\}\).
We also use the same notation \(\rho_{i,j}\)
to denote the primitive generator of the
corresponding \(1\)-cone in \(\Sigma\).

\item[(7)] 
Denote by \(\{e_{1},\ldots,e_{r}\}\) the standard basis of \(\mathbb{R}^{r}\).
For \(1\le i\le r\) and \(0\le j\le m_{i}\), we put \(\mu_{i,j}:=(e_{i},\rho_{i,j})\in\mathbb{Z}^{r+n}\).
Regarding \(\mu_{i,j}\) as column vectors, we define for each \(i=1,\ldots,r\) a matrix
\begin{equation*}
A_{i} = \begin{bmatrix}
\vline height 1ex & & \vline height 1ex \\
\mu_{i,0} & \cdots & \mu_{i,m_{i}}\\
\vline height 1ex & & \vline height 1ex 
\end{bmatrix}\in \mathrm{Mat}_{(r+n)\times (m_{i}+1)}(\mathbb{Z})
\end{equation*}
and put \(p=m_{1}+\cdots+m_{r}\) and
\begin{equation*}
A = \begin{bmatrix}
A_{1} & \cdots & A_{r}
\end{bmatrix}\in\mathrm{Mat}_{(r+n)\times (r+p)}(\mathbb{Z}).
\end{equation*}
We also set 
\begin{equation*}
\beta = \begin{bmatrix}
-1/2\\
\vdots\\
-1/2\\
\vline height 1ex\\
\mathbf{0}\\
\vline height 1ex
\end{bmatrix}\in \mathbb{Q}^{r+n},
\end{equation*}
where the first $r$ entries are $-\frac{1}{2}$. It will be easier to label the columns of \(A\) by
\((i,j)\) where \(1\le i\le r\) and
\(0\le j\le m_{i}\); the \((i,j)\)\textsuperscript{th}
column of \(A\) is precisely the vector \(\mu_{i,j}\).
\item[(8)] Given the matrix \(A\) and \(\beta\) as in 
\S\ref{subsec:gkz} (7), we denote by \(\mathcal{M}_{A}^{\beta}\)
the associated GKZ system.
The variables in the GKZ systems \(\mathcal{M}_{A}^{\beta}\) are
called \(x_{i,j}\) where \(1\le i\le r\) and \(0\le j\le m_{i}\);
it is the variable corresponding to the \((i,j)\)\textsuperscript{th}
column of \(A\).
\end{itemize}
The following proposition can be checked easily.
\begin{proposition}
The period integrals of \(\mathcal{Y}^{\vee}\to V\)
satisfy the GKZ system \(\mathcal{M}_{A}^{\beta}\).
\end{proposition}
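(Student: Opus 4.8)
The plan is to verify directly that period integrals of the family $\mathcal{Y}^{\vee}\to V$ are annihilated by the two types of operators generating $\mathcal{M}_A^\beta$, namely the box operators $\Box_\ell$ and the Euler operators $E_i-\beta_i$. Recall from \S\ref{subsection:cy-double-covers} that a fiber $Y^\vee$ is the double cover of $X^\vee$ branched along $\operatorname{div}(s_1\cdots s_r)$, with each $s_i = s_{i,1}s_{i,2}$ and $s_{i,1}$ fixed to the section of the lattice point $\mathbf{0}\in\nabla_i\cap N$. After partial gauge fixing, the deformation of $s_{i,2}\in\mathrm{H}^0(X^\vee,F_i)$ is encoded by coefficients $x_{i,j}$ attached to the lattice points $\rho_{i,j}\in\nabla_i\cap N = J_i\cup\{\rho_{i,0}\}$, exactly as in \S\ref{subsec:gkz}(6)--(8). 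A period integral has the schematic form
\begin{equation*}
\Pi(x) = \int_{\gamma} \frac{\Omega_{X^\vee}}{\bigl(\prod_{i=1}^r \sum_{j=0}^{m_i} x_{i,j}\,t^{\rho_{i,j}}\bigr)^{1/2}},
\end{equation*}
the $1/2$ being the hallmark of the double cover (this is the "fractional complete intersection" picture of \cite{2020-Hosono-Lee-Lian-Yau-mirror-symmetry-for-double-cover-calabi-yau-varieties}), where $\gamma$ is a suitable cycle and $\Omega_{X^\vee}$ the toric holomorphic volume form on the maximal torus.

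First I would treat the box operators. For $\ell = \ell^+ - \ell^-$ with $A\ell^+ = A\ell^-$, write $\ell$ with components $\ell_{i,j}$. Differentiating $\Pi$ under the integral sign, each $\partial_{i,j}$ brings down a factor of $(-\tfrac12)(\tfrac12+1)\cdots$ times $t^{\rho_{i,j}}$ divided by an extra power of the corresponding linear form $\sum_j x_{i,j}t^{\rho_{i,j}}$; because the Pochhammer prefactors produced by $\partial^{\ell^+}$ and $\partial^{\ell^-}$ depend only on the total orders $|\ell^+|$ and $|\ell^-|$ \emph{within each block} $i$ — and the first $r$ rows of $A$ (coming from the $e_i$) force $\sum_j \ell^+_{i,j} = \sum_j \ell^-_{i,j}$ for every $i$ — those prefactors match. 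The monomial factors $\prod t^{\rho_{i,j}\ell^+_{i,j}}$ and $\prod t^{\rho_{i,j}\ell^-_{i,j}}$ agree on the torus precisely because the last $n$ rows of $A$ give $\sum_{i,j}\rho_{i,j}\ell^+_{i,j} = \sum_{i,j}\rho_{i,j}\ell^-_{i,j}$. Hence $(\partial^{\ell^+}-\partial^{\ell^-})\Pi = 0$. This is the standard Gelfand--Kapranov--Zelevinsky computation, adapted to a product of linear forms each raised to the power $-1/2$; the block structure of $A$ is exactly what makes the fractional exponent harmless.

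Next the Euler operators. The first $r$ of them read $E_i - \beta_i = \sum_{j=0}^{m_i} x_{i,j}\partial_{i,j} + \tfrac12$. Applying $\sum_j x_{i,j}\partial_{i,j}$ to $\Pi$ scales the $i$-th linear form by its own degree-one Euler vector field, producing $-\tfrac12\,\Pi$ (one factor of that form sits in the denominator with exponent $\tfrac12$), so $E_i\Pi = -\tfrac12\Pi$, i.e. $(E_i-\beta_i)\Pi = 0$. The remaining $n$ Euler operators, $E_{r+k} = \sum_{i,j}(\rho_{i,j})_k\, x_{i,j}\partial_{i,j}$, act by the torus-scaling vector fields on $N_{\mathbb{R}}$; their effect on the integrand is cancelled by the compensating transformation of $\Omega_{X^\vee}$ under the torus (equivalently, one changes variables $t\mapsto \lambda t$ and uses invariance of the integral), giving $E_{r+k}\Pi = 0$, matching $\beta_{r+k} = 0$. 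This is the usual homogeneity argument for toric period integrals.

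The only genuinely delicate point — and the one I would write out with care — is the interchange of differentiation and integration together with the well-definedness of the $1/2$-power: one must check that $\prod_i \sum_j x_{i,j}t^{\rho_{i,j}}$ is nonvanishing on (a neighborhood of) the chosen cycle $\gamma$ so that the branch of the square root is unambiguous and the integrand depends holomorphically on $x$, and that $\gamma$ can be chosen compact (or with controlled behavior near the branch locus) so the differentiations are legitimate. Since the proposition is asserted to "be checked easily," I expect this to be dispatched by invoking the setup of \cite{2020-Hosono-Lee-Lian-Yau-mirror-symmetry-for-double-cover-calabi-yau-varieties}, where the period sheaf and its cycles are constructed, rather than redone from scratch; the algebraic verification of the $\Box_\ell$ and $E_i$ relations above is then the whole content.
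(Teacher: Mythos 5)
The paper itself gives no written proof here---the statement is prefaced only by ``The following proposition can be checked easily,'' deferring to the period-integral constructions of the cited reference. Your proposal supplies exactly the expected easy check: you realize the period as a fractional hypergeometric integral $\int_\gamma \Omega\,\big/\prod_{i=1}^{r}\bigl(\sum_{j=0}^{m_i} x_{i,j}\,t^{\rho_{i,j}}\bigr)^{1/2}$ over the torus, use the first $r$ rows of $A$ (the block structure) to match the Pochhammer prefactors and denominator exponents produced by $\partial^{\ell^+}$ and $\partial^{\ell^-}$ block by block, and the last $n$ rows to match the torus monomials in the numerators, yielding $\Box_\ell\Pi=0$; the first $r$ Euler operators give $E_i\Pi=-\tfrac12\Pi$ from the $-1/2$ homogeneity of each linear form, matching $\beta_i=-\tfrac12$, and the last $n$ vanish by torus invariance of the holomorphic form. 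This is correct, and it is the standard GKZ computation the authors are implicitly invoking; the only thing I would tighten is the phrasing around the branch of the square root and the compactness of $\gamma$, which, as you say, is set up in the cited work and need not be rederived here.
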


\begin{instance}[Example \ref{ex:p3} continued]
Let us explicitly write down the GKZ system for the periods of
\(\mathcal{Y}^{\vee}\to V\). From the construction, 
\(\nabla_{i}\) is the divisor polytope of \(F_{i}\) and 
\begin{equation}
\dim\mathrm{H}^{0}(X^{\vee},F_{i}) = 2,~i=1,2,3,4.
\end{equation}
Moreover, 
\begin{align*}
\nabla_{1} \cap N &= \{\mathbf{0},(1,0,0)\},\\
\nabla_{2} \cap N &= \{\mathbf{0},(0,1,0)\},\\
\nabla_{3} \cap N &= \{\mathbf{0},(0,0,1)\},\\
\nabla_{4} \cap N &= \{\mathbf{0},(-1,-1,-1)\}.
\end{align*}
The GKZ system \(\mathcal{M}_{A}^{\beta}\) describing the double cover family \(\mathcal{Y}^{\vee}\to V\) is given by 
\begin{equation}
A = \begin{bmatrix}
1 & 1 & 0 & 0 & 0 & 0 & 0 & 0\\
0 & 0 & 1 & 1 & 0 & 0 & 0 & 0\\
0 & 0 & 0 & 0 & 1 & 1 & 0 & 0\\
0 & 0 & 0 & 0 & 0 & 0 & 1 & 1\\
0 & 1 & 0 & 0 & 0 & 0 & 0 &-1\\
0 & 0 & 0 & 1 & 0 & 0 & 0 &-1\\
0 & 0 & 0 & 0 & 0 & 1 & 0 &-1\\
\end{bmatrix}~\mbox{and}~
\beta = 
\begin{bmatrix}
-1/2\\
-1/2\\
-1/2\\
-1/2\\
0\\
0\\
0
\end{bmatrix}.
\end{equation}
\end{instance}

\subsection{The \texorpdfstring{\(A\)}{A}-periods of the non-commutative resolutions}

\label{subsubsec:a-period-glsm}
Let us review some basics of toric varieties.
We will follow the notation introduced in (1)\ndash(5) in the
beginning of \S\ref{sec:SingCY} and in (6)\ndash(8) in \S\ref{subsec:gkz}.

Let \(X\) be a smooth projective toric variety. We have
the short exact sequence 
\begin{equation}
\label{eq:basic-ses}
\begin{tikzcd}
&0\ar[r] &M\ar[r] &\mathbb{Z}^{p}\ar[r] &\operatorname{Cl}(X)\ar[r] &0.
\end{tikzcd}
\end{equation}
Here \(\operatorname{Cl}(X)\) is the Weil divisor class group of \(X\)
which equals the Cartier divisor class group owing to our hypothesis on \(X\).
Taking dual \(\operatorname{Hom}_{\mathbb{Z}}(-,\mathbb{Z})\) yields
\begin{equation}
\label{eq:basic-ses-dual}
\begin{tikzcd}
&0\ar[r] &\operatorname{Hom}_{\mathbb{Z}}(\operatorname{Cl}(X),\mathbb{Z})\ar[r] 
&\mathbb{Z}^{p}\ar[r,"B"] &N\ar[r] &0,
\end{tikzcd}
\end{equation}
where the matrix \(B\) is given by 
\begin{equation}
\label{eq:b-definition}
B = \begin{bmatrix}
\vline height 1ex & & \vline height 1ex \\
\rho_{1,1} & \cdots & \rho_{r,m_{r}}\\
\vline height 1ex & & \vline height 1ex 
\end{bmatrix}\in\mathrm{Mat}_{n\times p}(\mathbb{Z}).
\end{equation}
Applying \(\operatorname{Hom}_{\mathbb{Z}}(-,\mathbb{C}^{\ast})\)
to \eqref{eq:basic-ses}, 
we obtain the short exact sequence 
\begin{equation}
\label{eq:basic-ses-group}
\begin{tikzcd}
&1 \ar[r] &\operatorname{Hom}_{\mathbb{Z}}(\operatorname{Cl}(X),\mathbb{C}^{\ast})\ar[d,equal]
\ar[r] &(\mathbb{C}^{\ast})^{p} \ar[r] &T_{N}\ar[r] &1.\\
& &G & & &
\end{tikzcd}
\end{equation}


In order to describe the \(A\)-periods, we 
now introduce the following notation.

\begin{itemize}
\item[(9)] We fix once for all an integral basis of \(\operatorname{Cl}(X)\)
consisting of ample divisors
and hence an isomorphism \(\operatorname{Cl}(X)\cong\mathbb{Z}^{s}\). Under this basis,
the third morphism in \eqref{eq:basic-ses} is represented by 
an integral matrix
\begin{equation*}
\begin{bmatrix}
\rule[.5ex]{2.5ex}{0.5pt} & \theta^{1} & \rule[.5ex]{2.5ex}{0.5pt}\\
 & \vdots& \\
\rule[.5ex]{2.5ex}{0.5pt} & \theta^{s} & \rule[.5ex]{2.5ex}{0.5pt}
\end{bmatrix}:=
\begin{bmatrix}
\theta^{1}_{1,1} & \cdots & \theta^{1}_{r,m_{r}}\\
\vdots & \ddots & \vdots\\
\theta^{s}_{1,1} & \cdots & \theta^{s}_{r,m_{r}}
\end{bmatrix}\in\mathrm{Mat}_{s\times p}(\mathbb{Z}).
\end{equation*}
Consequently, the character matrix of the 
second morphism in \eqref{eq:basic-ses-group}
is given by its transpose; in other words,
\(G=(\mathbb{C}^{\ast})^{s}\to (\mathbb{C}^{\ast})^{p}\) is given by
\begin{equation*}
(g_{1},\ldots,g_{s})\mapsto 
\left(\prod_{i=1}^{s}g_{i}^{\theta^{i}_{1,1}},\ldots,
\prod_{i=1}^{s}g_{i}^{\theta^{i}_{r,m_{r}}}\right)
\end{equation*}
and we have for all \(1\le k\le s\)
\begin{equation*}
\sum_{i=1}^{r}\sum_{j=1}^{m_{i}} \theta^{k}_{i,j}\rho_{i,j}=0.
\end{equation*}
\item[(10)] Let \(\phi_{i,j}\), \(1\le i\le r\) and \(1\le j\le m_{i}\),
be the homogeneous coordinates of \(X\)
associated with the divisor \(\rho_{i,j}\). Under the basis chosen in (9),
\(\phi_{i,j}\) has weight \((\theta^{1}_{i,j},\ldots,\theta^{s}_{i,j})\).
\end{itemize}

\subsubsection{The abelian GLSM associated to CY double covers}

We propose a `curved' Landau--Ginzburg potential \(W\) to 
define the GLSM for
singular Calabi--Yau double covers; the
potential is given by
\begin{equation}
\label{eq:superpot}
W = \sum_{i,j} z_{i,j}^{2}\phi_{i,j} + \sum_{k=1}^{r} z_{k,0}^{2} f_{k}(\phi),
\end{equation}
where \(f_{k}(\phi)\in\mathrm{H}^{0}(X,E_{k})\) is a section,
regarding as a homogeneous polynomial in \(\phi\)'s.
As we shall see, the notation \(z_{k,0}\) will be 
much more convenient for us and will give a more concise formula at the end.
Recall that \(E_{k}=\sum_{j=1}^{m_{k}} D_{k,j}\) and sections of
\(E_{k}\) corresponds to a \((G,\theta_{k,0})\)-equivariant algebraic
function \(f_{k}(\phi)\)
on \(\mathbb{C}^{p}\) with 
\begin{equation}
\label{eq:theta-k-definition}
\theta_{k,0} \equiv (\theta^{1}_{k,0},\ldots,\theta^{s}_{k,0})
:= \sum_{j=1}^{m_{k}} (\theta^{1}_{k,j},\ldots,\theta^{s}_{k,j})
\end{equation}
We choose the \(G\)-weight of \(\phi_{i,j}\) to be $\theta^{s}_{i,j}$ rescaled by 2; namely
\(\phi_{i,j}\) is equipped with \(G\)-weight
\begin{equation}
\label{eq:w-weight}
2(\theta^{1}_{i,j},\ldots,\theta_{i,j}^{s}).
\end{equation}
Then, to make \eqref{eq:superpot} \(G\)-invariant, the \(G\)-weight of \(z_{i,j}\) and \(z_{k,0}\)
are chosen as 
\begin{equation}
\label{eq:weight-sections}
\begin{cases}
\textstyle-(\theta^{1}_{i,j},\ldots,\theta^{s}_{i,j}),&~\mbox{for}~z_{i,j},\\
\textstyle-(\theta^{1}_{k,0},\ldots,\theta^{s}_{k,0}),
&~\mbox{for}~z_{k,0}.
\end{cases}
\end{equation}
\begin{lemma}
Under the weight assignments \eqref{eq:w-weight}
and \eqref{eq:weight-sections},  
\(W\) becomes \(G\)-invariant.
\end{lemma}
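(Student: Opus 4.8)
The plan is to verify $G$-invariance of $W$ term by term, using the weight assignments and the combinatorial identity \eqref{eq:theta-k-definition}. First I would recall that a monomial in the Cox ring variables has a well-defined $G$-weight, obtained by summing the $G$-weights of the variables appearing in it (with multiplicity), and that $W$ is $G$-invariant if and only if each summand $z_{i,j}^{2}\phi_{i,j}$ and $z_{k,0}^{2}f_{k}(\phi)$ has total $G$-weight $\mathbf{0}\in\widehat{G}\otimes\mathbb{R}\cong\mathbb{R}^{s}$. Since the $G$-action on the $z$'s is linear and the $G$-weight is additive on products, this reduces to a direct bookkeeping computation.

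For the first family of terms, the monomial $z_{i,j}^{2}\phi_{i,j}$ has $G$-weight
\begin{equation*}
2\cdot\bigl(-(\theta^{1}_{i,j},\ldots,\theta^{s}_{i,j})\bigr)+2(\theta^{1}_{i,j},\ldots,\theta^{s}_{i,j})=\mathbf{0},
\end{equation*}
using \eqref{eq:w-weight} for the weight of $\phi_{i,j}$ and \eqref{eq:weight-sections} for the weight of $z_{i,j}$. For the second family, I first recall from \S\ref{subsubsec:a-period-glsm} that $f_{k}(\phi)\in\mathrm{H}^{0}(X,E_{k})$ is, in Cox coordinates, a polynomial in which every monomial is $(G,\theta_{k,0})$-equivariant with respect to the \emph{original} weights $(\theta^{1}_{i,j},\ldots,\theta^{s}_{i,j})$; hence with respect to the doubled weights \eqref{eq:w-weight} that $W$ uses, every monomial of $f_{k}(\phi)$ carries $G$-weight $2\theta_{k,0}=2(\theta^{1}_{k,0},\ldots,\theta^{s}_{k,0})$, where $\theta_{k,0}$ is defined by \eqref{eq:theta-k-definition}. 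Combining this with the assigned weight $-(\theta^{1}_{k,0},\ldots,\theta^{s}_{k,0})$ of $z_{k,0}$ from \eqref{eq:weight-sections}, the monomial $z_{k,0}^{2}f_{k}(\phi)$ has $G$-weight $2\cdot\bigl(-(\theta^{1}_{k,0},\ldots,\theta^{s}_{k,0})\bigr)+2(\theta^{1}_{k,0},\ldots,\theta^{s}_{k,0})=\mathbf{0}$. Since each summand of $W$ is $G$-invariant and $G$-invariance is preserved under finite sums, $W$ is $G$-invariant.

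The only subtle point — and the step I expect to require the most care — is the passage from ``$f_{k}$ is a section of $E_{k}$'' to ``every monomial of $f_{k}(\tilde\phi)$ has $G$-weight exactly $2\theta_{k,0}$ for the doubled action.'' This is exactly the content of the discussion of $\widetilde{X}$ as a $\mathbb{Z}_{2}$ gerbe over $X$ in \S\ref{subsec:glsm-for-nc}: a section $s_{k}\in H^{0}(X,E_{k})$ pulls back to a function $f_{k}(\tilde\phi)$ on $\widetilde{X}$ whose weights are twice the original ones, so that $z_{k,0}^{2}f_{k}(\tilde\phi)$ descends to $H^{0}(\mathcal{O}_{\widetilde{X}})$. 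Once one is careful to distinguish the original $\mathrm{Cl}(X)$-grading from the doubled grading defining $\widetilde X$ (and to note that $E_{k}=\sum_{j=1}^{m_{k}}D_{k,j}$ has class $\theta_{k,0}$ by \eqref{eq:theta-k-definition}), the computation is immediate and no further input is needed; there is no genuine obstacle beyond this indexing care.
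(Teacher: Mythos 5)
Your proof is correct and is precisely the routine weight bookkeeping that the paper leaves implicit: the paper gives no explicit proof of this lemma, instead presenting the assignments \eqref{eq:weight-sections} as \emph{chosen} so that each summand of \eqref{eq:superpot} becomes weight zero. Your identification of the one nontrivial input — that $f_k$, being $(G,\theta_{k,0})$-equivariant for the original Cox grading, acquires weight $2\theta_{k,0}$ under the doubled grading \eqref{eq:w-weight}, so that $z_{k,0}^2 f_k(\phi)$ has total weight $-2\theta_{k,0}+2\theta_{k,0}=\mathbf{0}$ — is exactly what \eqref{eq:theta-k-definition} and the gerbe discussion in \S\ref{subsec:glsm-for-nc} are set up to deliver, and the rest is additivity of weights on monomials.
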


By the discussion in section \ref{subsec:glsm-for-nc}, the category of B-branes of this GLSM in the nc resolution phase is given by
\begin{eqnarray}
MF(Y_{\zeta},W_{\zeta})
\end{eqnarray}
where 
\begin{eqnarray}\label{orbibundle}
Y_{\zeta}\cong 
\mathcal{V}\bigoplus_{k=1}^{r}\mathcal{L}_{(k,0)}\rightarrow 
[X/\mathbb{Z}_{2}]\qquad 
\mathcal{V}:=\bigoplus_{i=1}^{r}\bigoplus_{j=1}^{m_{i}}\mathcal{L}_{(i,j)}
\end{eqnarray}
and the line bundles are orbibundles given by
\begin{eqnarray}
\mathcal{L}_{(i,j)}=\mathcal{O}_{X}(-\theta^{1}_{i,j}/2,\ldots,-\theta^{s}_{i,j}
/2 )\qquad 
\mathcal{L}_{(k,0)}=\mathcal{O}_{X}(-\theta^{1}_{k,0}/2,\ldots,-\theta^{s}_{k,0}
/2 )
\end{eqnarray}
\begin{conjecture}
\label{Brcat-conjecture}
As argued in \cite{Guo:2021aqj},  the category $MF(Y_{\zeta},W_{\zeta})$ is equivalent to the derived category of sheaves of $\mathcal{A}_{0}\sharp \mathbb{Z}_{2}$-modules over $X$:
\begin{eqnarray}
MF(Y_{\zeta},W_{\zeta})\cong D(X,\mathcal{A}_{0}\sharp \mathbb{Z}_{2})
\end{eqnarray}
where $\mathcal{A}_{0}$ is the endomorphism algebra of the matrix factorization $\mathbf{T}_{0}$ (see equation (\ref{T0mf})):
\begin{eqnarray}\label{endalgebra}
\mathcal{A}_{0}\cong \mathrm{End}(\mathbf{T}_{0})
\end{eqnarray}
taken in the category of coherent matrix factorizations, ignoring the global orbifold structure (for a precise description, see  \cite{Guo:2021aqj}) and $\sharp$  denotes the smash product. In the case at hand where the 
superpotential is quadratic in the fiber coordinates,  $\mathcal{A}_{0}$ becomes a Clifford algebras generated by the symbols $\psi_{(i,j)}$, 
$\psi_{(k,0)}$ satisfying
\begin{eqnarray}
\{\psi_{A},\psi_{B}\}=\frac{\partial^{2}W}{\partial z_{A}\partial z_{B}},\qquad A,B \in \{(i,j),(k,0)\}
\end{eqnarray}
therefore \cite{Guo:2021aqj}
\begin{eqnarray}
D(X,\mathcal{A}_{0}\sharp \mathbb{Z}_{2})\cong D(X,Cl_{0})
\end{eqnarray}
where $Cl_{0}$ denotes the even part of the sheaf of Clifford algebras 
$\mathcal{A}_{0}$. Then, we can conjecture the chain of equivalences
\begin{eqnarray}
D(X,Cl_{0})\cong D(Y)
\end{eqnarray}
where $D(Y):=D^{b}Coh(Y)$ that can be defined following \cite{Kawamata2003}. By homological mirror symmetry we expect in addition that an appropriate definition of the Fukaya category exists for $Y^{\vee}$ and it is related to $D(Y)$. However we are not aware of such definitions of the Fukaya category for a singular cyclic cover such as $Y^{\vee}$.
\end{conjecture}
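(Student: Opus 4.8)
The plan is to approach Conjecture \ref{Brcat-conjecture} by splitting the asserted chain $MF(Y_{\zeta},W_{\zeta})\cong D(X,\mathcal{A}_{0}\sharp\mathbb{Z}_{2})\cong D(X,Cl_{0})\cong D(Y)$ into its three links and concentrating effort on the last one, the only genuinely new assertion. For the first two links I would specialize the machinery of \cite{Guo:2021aqj} to the hybrid target \eqref{orbibundle}, whose curved potential $W$ of \eqref{eq:superpot} is quadratic along the fibers $z_{i,j},z_{k,0}$: for a good hybrid with a fiberwise-quadratic potential, graded matrix factorizations are equivalent to modules over the sheaf of Clifford algebras $\mathcal{A}_{0}=\mathrm{End}(\mathbf{T}_{0})$ determined by $\partial^{2}W/\partial z_{A}\partial z_{B}$; the trivially acting $\mathbb{Z}_{2}$ --- the gerbe structure on $[X/\mathbb{Z}_{2}]$ --- is recorded by the smash product, and the $\mathbb{Z}_{2}$-weight decomposition isolates the even part $Cl_{0}$. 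I would carry this out while checking that the $U(1)_{V}$ grading and shift conventions of Definitions \ref{definition:algebraic-data}--\ref{definition:fullMF-data} match, and --- the only subtle point here --- that the half-integral twists $\mathcal{L}_{(i,j)}=\mathcal{O}_{X}(-\theta_{i,j}/2)$ appearing in \eqref{orbibundle}, which live only on the gerbe, are absorbed correctly into the even Clifford part.

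For the last link the key observation is that the quadratic form cutting out $W$ is diagonal with entries $\phi_{i,j}$ and $f_{k}(\phi)$, so its spectral double cover is exactly the double cover of $X$ branched along the divisor $\{\prod_{i,j}\phi_{i,j}\cdot\prod_{k}f_{k}(\phi)=0\}$, i.e.\ the partial gauge-fixed double cover $Y\to X$ of \S\ref{subsection:cy-double-covers}, whose branch divisor $\operatorname{div}(s)$ is by assumption strictly normal crossings. The plan is then: (i) prove the equivalence over the open locus $X^{\circ}\subset X$ where $\operatorname{div}(s)$ is smooth --- there $Y^{\circ}$ is smooth, $Cl_{0}$ is the even Clifford algebra of a corank-one quadratic form, and $D(X^{\circ},Cl_{0})\cong D^{b}\mathrm{Coh}(Y^{\circ})$ is the rank-one case of the even-Clifford-algebra / double-cover correspondence used in \cite{Caldararu:2010ljp}, itself a specialization of Kuznetsov's homological projective duality for quadric bundles; (ii) extend across the normal-crossings strata of $\operatorname{div}(s)$, where $Y$ is singular, by setting up $D(Y)=D^{b}\mathrm{Coh}(Y)$ following \cite{Kawamata2003} and showing that the sheaf of algebras $Cl_{0}$ on $X$ is a global noncommutative (crepant-type) resolution of $Y$ whose module category is the intended one, so that gluing the local equivalences produces no spurious derived content; (iii) carry the $\mathbb{Z}_{2}$-gerbe twists from the first two links through this identification.

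The technical core, and the step I expect to be the real obstacle, is (ii). Both sides must be pinned down independently. On one side, $D^{b}\mathrm{Coh}(Y)$ for the mildly singular $Y$ needs Kawamata's framework --- it is encouraging that $Y$, a double cover branched over a simple normal crossings divisor, has only toroidal (quotient) singularities, exactly the class where that framework behaves well. On the other side, one must verify that $Cl_{0}$ is \emph{the} noncommutative resolution attached to $Y$ and not some nonminimal algebra: e.g.\ that it has finite global dimension, is maximal Cohen--Macaulay over $\mathcal{O}_{X}$, and restricts to $\mathcal{O}_{Y}$ over $X^{\circ}$. I would reduce everything to the affine-local model $W=\sum_{A}z_{A}^{2}u_{A}$ with the $u_{A}$ part of a regular system of parameters of a local ring $R$, where $Cl_{0}$-modules can be compared explicitly with modules over $R[w]/(w^{2}-\prod_{A}u_{A})$ and with a chosen normal-crossings resolution of the latter, and then glue over $X$ while carrying along the gerbe bookkeeping from the first step.

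Finally, the conjecture also predicts a mirror companion: an appropriate Fukaya-type category for the singular cyclic cover $Y^{\vee}$, matched with $D(Y)$. I would not attempt to construct it here; rather I would record that the period identity of Theorem \ref{thm:main-theorem} --- the $A$-periods of the GLSM for $Y$ equal the $B$-periods of $Y^{\vee}$ --- is exactly the numerical shadow such a homological mirror symmetry statement would produce, and offer it as supporting evidence.
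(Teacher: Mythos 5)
You are proposing a roadmap for a statement that the paper itself labels as a \emph{conjecture} and does not prove; the paper only cites \cite{Guo:2021aqj} for the first two links and \cite{Caldararu:2010ljp},\cite{Kawamata2003} as context for the last, so there is no ``paper's own proof'' to compare against. With that understood, your decomposition into three links is exactly the one the paper implicitly suggests, and your identification of $D(X,Cl_{0})\cong D(Y)$ as the only genuinely new assertion is correct.

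That said, what you have written is a research plan, not a proof, and the gaps you flag yourself are real. Concretely: (i) your step over the smooth locus invokes the even-Clifford/double-cover equivalence for a corank-one degeneration, but the quadratic form here is diagonal with $p+r$ entries $\phi_{i,j}$ and $f_{k}(\phi)$, so you must actually stratify $\operatorname{div}(s)$ by the number of vanishing entries and verify that ``smooth branch locus'' really coincides with ``corank at most one'' — that is a statement about the normal-crossings hypothesis, not automatic; (ii) the extension across the higher-codimension strata, where both sides become genuinely singular, is not a gluing of local equivalences that one can simply declare — one needs either a global functor (e.g.\ a Fourier--Mukai kernel over $X\times_{X}Y$ built from $Cl_{0}$) whose fully-faithfulness and essential surjectivity can be checked on strata, or a direct appeal to a homological projective duality statement for quadric bundles with nonconstant corank along a normal crossings discriminant, which is not in the literature in the generality you need; (iii) the ``check that $Cl_{0}$ is \emph{the} NC resolution'' step (finite global dimension, MCM over $\mathcal{O}_{X}$, restriction to $\mathcal{O}_{Y}$ on the complement of the branch locus) is precisely the content of Kuznetsov's and \v{S}penko--Van den Bergh's criteria, and verifying them here would constitute most of the actual work. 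Your affine-local model $W=\sum_{A}z_{A}^{2}u_{A}$ is the right one to compute with, but until that computation is carried out and glued the chain $D(X,Cl_{0})\cong D(Y)$ remains exactly as conjectural as the paper leaves it. Finally, note that Theorem \ref{thm:main-theorem} is evidence for Conjecture \ref{main-conjecture} (the period identity), which is a numerical shadow of an expected HMS statement involving the Fukaya side of $Y^{\vee}$; it does not bear directly on the B-side equivalence $D(X,Cl_{0})\cong D(Y)$ you are trying to establish, so it should not be cited as supporting evidence for this particular link.
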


\begin{remark} In \cite{borisov2018clifford} a construction for so-called double mirrors involving noncommutative resolutions is given. From the point of view of the GLSM, this implies that most, if not all, the GLSMs we propose in this section, must have a geometric phase. This is not immediately clear from our construction (and not required for our results). We expect to return to this in a sequel.
\end{remark}

\subsubsection{The \texorpdfstring{\(A\)}{A}-periods of abelian GLSMs}
We can apply the discussion in \S\ref{subsec:abelian}
to the present situation.
To write down the \(A\)-periods,
the only missing piece of information is
an admissible contour \(L_{t}\subset\mathfrak{t}_{\mathbb{C}}\).
To this end, we will need the results in \S\ref{subsec:toric-git}.

Given \(t\in\mathfrak{t}_{\mathbb{C}}\),
let \(\operatorname{Re}(t)=\zeta\in\mathbb{R}^{s}\) as before.
For \(\zeta\) regular, it determines the geometry of the GLSM via
the symplectic quotient (\ref{sympquotglsm})
\begin{equation*}
Y_{\zeta} = \mu^{-1}(\zeta)\slash \mathsf{G}.
\end{equation*}
We may assume that \(\zeta\) belongs to a cone
of maximal dimension in the secondary fan \(S\Sigma\).
In general, the fan \(S\Sigma\) may be singular (even non-simplicial).
Denote by \(S\Sigma'\) a simplicialization\footnote{Is important to remark that, when we fix the GLSM data, we are automatically choosing a subdivision \(S\Sigma'\) i.e. the secondary fan describing GLSM phases has implicitly chosen a simplicialization.} of \(S\Sigma\).
We may as well assume that \(\zeta\) belongs to
the interior of a cone \(\tau\) of maximal dimension in \(S\Sigma'\).
According to our construction, \(S\Sigma\) and \(S\Sigma'\) are
fans in the Euclidean space \(\mathfrak{t}\).

The product of Gamma functions in \eqref{eq:partition-fcn-abelian}
is then given by 
\begin{equation*}
\begin{split}
F(\sigma) := \prod_{i=1}^{r}\prod_{j=1}^{m_{i}}
\Gamma \left(2\mathrm{i}\sum_{m=1}^{s} \sigma_{m}\theta_{i,j}^{m}\right)
&\prod_{i=1}^{r}\prod_{j=1}^{m_{i}}
\Gamma \left(-\mathrm{i}\sum_{m=1}^{s} \sigma_{m}\theta_{i,j}^{m}+\frac{1}{2}\right)
\prod_{i=1}^{r}\Gamma\left(-\mathrm{i}\sum_{m=1}^{s}
\sigma_{m}\theta_{i,0}^{m}+\frac{1}{2}\right).
\end{split}
\end{equation*}

\begin{definition}
\label{def:a-periods}
The \(A\)-periods of a non-commutative resolution are defined to be
\begin{equation}
\label{eq:a-periods}
Z_{\mathfrak{B}}(q_{1},\ldots,q_{s})=\int_{L} F(\sigma_{1},\ldots,\sigma_{s})
f_{\mathfrak{B}}(\sigma_{1},\ldots,\sigma_{s})
q_{1}^{-\mathrm{i}\sigma_{1}}\cdots q_{s}^{-\mathrm{i}\sigma_{s}}\mathrm{d}\sigma,
\end{equation}
where
\begin{equation*}
\begin{split}
F(\sigma_{1},\ldots,\sigma_{s}) = \prod_{i=1}^{r}\prod_{j=1}^{m_{i}}
\Gamma \left(2\mathrm{i}\sum_{m=1}^{s} \sigma_{m}\theta_{i,j}^{m}\right)
&\prod_{i=1}^{r}\prod_{j=1}^{m_{i}}
\Gamma \left(-\mathrm{i}\sum_{m=1}^{s} \sigma_{m}\theta_{i,j}^{m}+\frac{1}{2}\right)\\
&\prod_{i=1}^{r}\Gamma\left(-\mathrm{i}\sum_{m=1}^{s}
\sigma_{m}\theta_{i,0}^{m}+\frac{1}{2}\right),
\end{split}
\end{equation*}
\(f_{\mathfrak{B}}(\sigma)\) is a brane factor, and we  give an explicit construction for
\(L\) in \ref{app:contours}.
\end{definition}
\begin{remark}
\label{rmk:poles-of-f}
The function \(F(\sigma_{1},\ldots,\sigma_{s})\) has a pole 
at \((\sigma_{1},\ldots,\sigma_{s})\)
whenever 
\begin{align*}
\sum_{m=1}^{s} \sigma_{m}\theta_{i,j}^{m}\in \frac{\mathrm{i}\mathbb{Z}_{\geqslant 0}}{2},~
\sum_{m=1}^{s} \sigma_{m}\theta_{i,j}^{m}\in -\frac{\mathrm{i}}{2}+\mathrm{i}\mathbb{Z}_{\leqslant 0},
~\mbox{or}~
\sum_{m=1}^{s} \sigma_{m}\theta_{i,0}^{m}\in -\frac{\mathrm{i}}{2}+\mathrm{i}\mathbb{Z}_{\leqslant 0}.
\end{align*}
\end{remark}

For simplicity, we introduce
\begin{equation}
\label{eq:def-q-p}
Q_{i,j}(\sigma):=\sum_{m=1}^{s}\sigma_{m}\theta_{i,j}^{m},~\mbox{and}~
P_{i}(\sigma):=\sum_{m=1}^{s}\sigma_{m}\theta_{i,0}^{m}=\sum_{j=1}^{m_{i}} Q_{i,j}(\sigma).
\end{equation}
The function \(F(\sigma)\) is then transformed into
\begin{equation*}
F(\sigma) = \prod_{i=1}^{r}\prod_{j=1}^{m_{i}}
\Gamma \left(2\mathrm{i} Q_{i,j}(\sigma)\right)\prod_{i=1}^{r}\prod_{j=1}^{m_{i}}
\Gamma \left(-\mathrm{i}Q_{i,j}(\sigma)+\frac{1}{2}\right)
\prod_{i=1}^{r}\Gamma\left(-\mathrm{i}P_{i}(\sigma)+\frac{1}{2}\right).
\end{equation*}

Using the identities
\begin{equation*}
\Gamma(z)\Gamma(1-z)=\frac{\pi}{\sin \pi z},~\mbox{and}~
\Gamma(z)\Gamma\left(z+\frac{1}{2}\right) = 2^{1-2z} \sqrt{\pi}\Gamma(2z),
\end{equation*}
the formula for \(F(\sigma)\) can be simplified into
\begin{equation}
\label{eq:period-abelian-simplified}
F(\sigma)=\frac{\sqrt{\pi}^{p+r}\pi^{r}}{2^{p+r}}\prod_{i=1}^{r}\prod_{j=1}^{m_{i}}
\frac{2^{\mathrm{i}Q_{i,j}(\sigma)}
\Gamma\left(\mathrm{i}Q_{i,j}(\sigma)\right)}{\cos(\mathrm{i}\pi Q_{i,j}(\sigma))}
\prod_{i=1}^{r}\frac{1}{\cos(\mathrm{i}\pi P_{i}(\sigma))}
\Gamma\left(\mathrm{i}P_{i}(\sigma)+\frac{1}{2}\right)^{-1}.
\end{equation}
Now let us focus on the brane factors.
Let us begin with a baby example.
\begin{instance}[Calabi--Yau
double cover of \(\mathbf{P}^{1}\)]
Consider the GLSM data associated to 
a Calabi--Yau double cover of \(\mathbf{P}^{1}\).
\begin{itemize}
  \itemsep=-3pt
\item \(V=\mathbb{C}^{6}=\mathbb{C}^{2}\times\mathbb{C}^{4}\) with coordinates 
\((\phi_{1,1},\phi_{2,1},z_{1,1},z_{2,1},z_{1,0},z_{2,0})\);
\item \(G=\mathbb{C}^{\ast}\) such that 
\(\phi_{i,j}\) has weight \(2\) and \(z_{i,j}\) has weight \(-1\).
\item The \(R\)-weight of \(\phi_{i,j}\) is \(0\) whereas
the \(R\)-weight of \(z_{i,j}\) is \(1\);
\item \(W\) is the superpotential for singular double cover
\begin{equation*}
W(\phi,z) = \sum_{i=1}^{2} z_{i,1}^{2}\phi_{i,1} +  
\sum_{k=1}^{2} z_{k,0}^{2}f_{k}(\phi),
\end{equation*}
where \(f_{1}(\phi),f_{2}(\phi)\) are the
defining (linear) equations for the branch locus.
It is also clear that \(W\) has \(R\)-weight \(2\).
\end{itemize}
Consider square matrices \(\eta_{1,0},\eta_{1,1},\eta_{2,0},\eta_{2,1}\) and 
\(\bar{\eta}_{1,0},\bar{\eta}_{1,1},\bar{\eta}_{2,0},\bar{\eta}_{2,1}\) satisfying the Clifford relations 
\begin{equation*}
\{\eta_{i,j},\eta_{k,l}\}=\{\bar{\eta}_{i,j},\bar{\eta}_{k,l}\}=0~\mbox{and}~
\{\eta_{i,j},\bar{\eta}_{k,l}\}=\delta_{ik}\delta_{jl}.
\end{equation*}
One can construct the matrices as follows. Consider the 
exterior algebra \(\wedge^{\bullet}\mathbb{C}^{4}\);
this is a complex vector space of dimension \(16\). 
Denote by \(\{e_{1,0},e_{1,1},e_{2,0},e_{2,1}\}\) the standard basis of \(\mathbb{C}^{4}\). 
(This label is more convenient.) Let 
\begin{equation}
\eta_{i,j}:=\iota_{e_{i,j}}\colon \wedge^{\bullet}\mathbb{C}^{4}\to 
\wedge^{\bullet}\mathbb{C}^{4}~
\mbox{and}~\bar{\eta}_{k,l}:=e_{k,l}\wedge -\colon 
\wedge^{\bullet}\mathbb{C}^{4}\to \wedge^{\bullet}\mathbb{C}^{4}.
\end{equation}
It is easy to check that \(\eta_{i,j}\) and \(\bar{\eta}_{k,l}\) 
obey the commutator relations.

For convenience, let us denote by \(\mathcal{I}\)
the index set \(\{(i,j)~|~1\le i\le 2,~0\le j\le 1\}\). 
Let \(\mathbb{C}v\) be a one dimensional trivial \(G\) and \(R\) representation.

Set
\begin{equation*}
\mathrm{M}=\operatorname{Span}_{\mathbb{C}}\{\prod_{(i,j)\in I}
\bar{\eta}_{i,j}v~|~\mbox{\(I\) is a subset of \(\mathcal{I}\)}\}
\end{equation*}
and 
\begin{equation*}
\mathbf{T}_{0}:=\sum_{i=1}^{2}z_{i,1}\eta_{i,1}+
\sum_{i=1}^{2}z_{i,1}\phi_{i,1}\bar{\eta}_{i,1}+
\sum_{k=1}^{2}z_{k,0}\eta_{k,0}+
\sum_{k=1}^{2}z_{k,0}f_{k}(\phi)\bar{\eta}_{k,0}.
\end{equation*}
From the commutator relations among \(\eta_{i,j}\) and \(\bar{\eta}_{k,l}\),
we have \(\mathbf{T}_{0}^{2} = W\cdot\mathrm{id}_{\mathrm{M}}\).
We also require the factorization \(\mathbf{T}_{0}\)
to be \(G\) and \(R\)-equivariant; namely
\begin{align}
\rho_{\mathrm{M}}(g)^{-1} \mathbf{T}_{0}(\rho(g)\cdot (\phi,z)) 
\rho_{\mathrm{M}}(g) &= \mathbf{T}_{0}(\phi,z)\label{ex-11}\\
R_{\mathrm{M}}(\lambda) \mathbf{T}_{0}(R(\lambda)\cdot (\phi,z)) 
R_{\mathrm{M}}(\lambda)^{-1} &= \lambda\mathbf{T}_{0}(\phi,z)\label{ex-21}.
\end{align}
The equation \eqref{ex-11} is transformed into
\begin{align*}
&\sum_{i=1}^{2} g^{-1}z_{i,1}\rho_{\mathrm{M}}(g)^{-1}\eta_{i,1}\rho_{\mathrm{M}}(g) + 
\sum_{i=1}^{2} g z_{i,1}\phi_{i,1}\rho_{\mathrm{M}}(g)^{-1}\bar{\eta}_{i,1}\rho_{\mathrm{M}}(g)\\
&+\sum_{k=1}^{2} g^{-1}z_{k,0}\rho_{\mathrm{M}}(g)^{-1}\eta_{k,0}\rho_{\mathrm{M}}(g) + 
\sum_{k=1}^{2} g z_{k,0} f_{k}(\phi)\rho_{\mathrm{M}}(g)^{-1}\bar{\eta}_{k,0}\rho_{\mathrm{M}}(g) = \mathbf{T}_{0}(\phi,z)
\end{align*}
which yields
\begin{equation*}
\rho_{\mathrm{M}}(g)^{-1}\eta_{i,j}\rho_{\mathrm{M}}(g) = g\eta_{i,j}~\mbox{and}~
\rho_{\mathrm{M}}(g)^{-1}\bar{\eta}_{i,j}\rho_{\mathrm{M}}(g) = g^{-1}\bar{\eta}_{i,j}.
\end{equation*}
Similarly, the equation \eqref{ex-21} is transformed into
\begin{align*}
&\sum_{i=1}^{2} \lambda z_{i,1}R_{\mathrm{M}}(\lambda)\eta_{i,1}R_{\mathrm{M}}(\lambda)^{-1} + 
\sum_{i=1}^{2} \lambda z_{i,1}\phi_{i}R_{\mathrm{M}}(\lambda)\bar{\eta}_{i,1}R_{\mathrm{M}}(\lambda)\\
&+\sum_{k=1}^{2} \lambda z_{k,0}R_{\mathrm{M}}(\lambda)\eta_{k,0}R_{\mathrm{M}}(\lambda)^{-1}+ 
\sum_{k=1}^{2} \lambda z_{k,0} f_{k}(\phi)R_{\mathrm{M}}(\lambda)\bar{\eta}_{k,0}R_{\mathrm{M}}(\lambda)^{-1} = 
\lambda\mathbf{T}_{0}(\phi,z)
\end{align*}
which yields
\begin{equation*}
R_{\mathrm{M}}(\lambda)\eta_{i,j}R_{\mathrm{M}}(\lambda)^{-1} = \eta_{i,j}~\mbox{and}~
R_{\mathrm{M}}(\lambda)\bar{\eta}_{i,j}R_{\mathrm{M}}(\lambda)^{-1} = \bar{\eta}_{i,j}.
\end{equation*}
From the relations above, we have
\begin{align*}
\rho_{\mathrm{M}}(g) \bar{\eta}_{I} v &= 
\rho_{\mathrm{M}}(g) \bar{\eta}_{I} \rho_{\mathrm{M}}(g)^{-1}
\rho_{\mathrm{M}}(g) v =  g^{|I|} \bar{\eta}_{I}v,\\
R_{\mathrm{M}}(\lambda) \bar{\eta}_{I} v &= 
R_{\mathrm{M}}(\lambda) \bar{\eta}_{I} R_{\mathrm{M}}(\lambda)^{-1}
R_{\mathrm{M}}(\lambda)  v = \bar{\eta}_{I} v.
\end{align*}
Let us now compute the trace \(\operatorname{Tr}(R_{\mathrm{M}}(e^{\mathrm{i}\pi}) 
\rho_{\mathrm{M}}(e^{2\pi\sigma}))\).
Note that \(R_{\mathrm{M}}\) acts trivially on \(\mathrm{M}\) and therefore it suffices
to compute \(\operatorname{Tr}(\rho_{\mathrm{M}}(e^{2\pi\sigma}))\). It follows that
\begin{equation*}
\operatorname{Tr}(\rho_{\mathrm{M}}(e^{2\pi\sigma})) = 
\sum_{I} \genfrac(){0pt}{0}{4}{|I|} e^{2\pi |I|\sigma} = (1+e^{2\pi \sigma})^{4}.
\end{equation*}
We also remark that if \(\mathbb{C}v\) is chosen to have \(G\)-weight \(q\) and 
\(R\)-weight \(r\), i.e.,
\begin{equation*}
\rho_{\mathrm{M}}(g) v = g^{q} v,~\mbox{and}~R_{\mathrm{M}}(\lambda)v = \lambda^{r} v,
\end{equation*}
then the trace becomes \(e^{2\pi q\sigma} e^{\mathrm{i}\pi r} (1+e^{2\pi \sigma})^{4}\);
this is another way to generate linearly independent \(A\)-periods but we will use a different choice of generators hence we can keep the weight of  \(\mathbb{C}v\) fixed to be zero.

Consider another matrix factorization
\begin{align*}
\begin{split}
\mathbf{T}_{1}=z_{1,1}^{2}\eta_{1,1}&+\phi_{1,1}\bar{\eta}_{1,1}+
z_{2,1}\eta_{2,1}+
z_{2,1}\phi_{2,1}\bar{\eta}_{2,1}\\&+
\sum_{k=1}^{2}z_{k,0}\eta_{k,0}+
\sum_{k=1}^{2}z_{k,0}f_{k}(\phi)\bar{\eta}_{k,0}.
\end{split}
\end{align*}
It turns out that \(\mathbf{T}_{1}\) is also a matrix factorization.
The \(G\)- and \(R\)-equivariant conditions for \(\mathbf{T}_{1}\) now imply
for \((i,j)\ne (1,1)\)
\begin{align*}
\rho_{\mathrm{M}}(g)^{-1}\eta_{i,j}\rho_{\mathrm{M}}(g) = g\eta_{i,j}~&\mbox{and}~
\rho_{\mathrm{M}}(g)^{-1}\bar{\eta}_{i,j}\rho_{\mathrm{M}}(g) = g^{-1}\bar{\eta}_{i,j}\\
R_{\mathrm{M}}(\lambda)\eta_{i,j}R_{\mathrm{M}}(\lambda)^{-1} = \eta_{i,j}~&\mbox{and}~
R_{\mathrm{M}}(\lambda)\bar{\eta}_{i,j}R_{\mathrm{M}}(\lambda)^{-1} = \bar{\eta}_{i,j}.
\end{align*}
and 
\begin{align*}
\rho_{\mathrm{M}}(g)^{-1}\eta_{1,1}\rho_{\mathrm{M}}(g) = g^{2}\eta_{1,1}~&\mbox{and}~
\rho_{\mathrm{M}}(g)^{-1}\bar{\eta}_{1,1}\rho_{\mathrm{M}}(g) = g^{-2}\bar{\eta}_{1,1}\\
R_{\mathrm{M}}(\lambda)\eta_{1,1}R_{\mathrm{M}}(\lambda)^{-1} = 
\lambda\eta_{1,1}~&\mbox{and}~
R_{\mathrm{M}}(\lambda)\bar{\eta}_{1,1}R_{\mathrm{M}}(\lambda)^{-1} = 
\lambda^{-1}\bar{\eta}_{1,1}.
\end{align*}
Let us compute the trace. Again take a trivial representation \(\mathbb{C}v\). We have
for a subset \(I\) with \((1,1)\notin I\)
\begin{equation*}
\begin{cases}
\rho_{\mathrm{M}}(g) \bar{\eta}_{I} v &= g^{|I|} \bar{\eta}_{I}v\\
R_{\mathrm{M}}(\lambda) \bar{\eta}_{I}v &= \bar{\eta}_{I}v
\end{cases}
\end{equation*}
and for a subset \(I\) with \((1,1)\in I\)
\begin{equation*}
\begin{cases}
\rho_{\mathrm{M}}(g) \bar{\eta}_{I} v &= g^{|I|+1} \bar{\eta}_{I}v\\
R_{\mathrm{M}}(\lambda) \bar{\eta}_{I} v &= \lambda\bar{\eta}_{I}v
\end{cases}
\end{equation*}
Then it follows that 
\begin{equation*}
\operatorname{Tr}(R_{\mathrm{M}}(e^{\mathrm{i}\pi}) 
\rho_{\mathrm{M}}(e^{2\pi\sigma}))=(1+e^{2\pi \sigma})^{3} ( 1 - e^{4\pi\sigma})
=(1+e^{2\pi\sigma})^{4}(1-e^{2\pi\sigma}).
\end{equation*}
We can carry out the period integrals. Recall that 
the \(A\)-periods are of the form
\begin{equation}
\label{eq:poles}
\int_{L} \Gamma(2\sigma\mathrm{i})^{2}\Gamma\left(-\sigma\mathrm{i}+\frac{1}{2}\right)^{4} 
q^{-\mathrm{i}\sigma}f_{\mathcal{B}}(\sigma)\mathrm{d}\sigma.
\end{equation}
In the present case, 
\begin{equation*}
f_{\mathcal{B}}(\sigma) = (1+e^{2\pi \sigma})^{4}(1-e^{2\pi \sigma})^{m},~m=0,1.
\end{equation*}
\hfill\qedsymbol
\end{instance}

Now we return to the general setup.
Consider a GLSM data \((V,\rho\colon G\to\mathrm{GL}(V),
R\colon \mathbb{C}^{\ast}\to \mathrm{GL}(V),W)\)
associated to a Calabi--Yau double cover, i.e.,
\begin{itemize}
  \itemsep=-3pt
\item \(V=\mathbb{C}^{2p+r}=\mathbb{C}^{p}\times\mathbb{C}^{p+r}\) is a vector space
with coordinates \(\phi_{k,l}\) (\(1\le k\le r\) and \(1\le l\le m_{k}\)) and
\(z_{i,j}\) (\(1\le i\le r\) and \(0\le j\le m_{i}\)).
The coordinates on \(V\) will be abbreviated as \((\phi,z)\);
\item \(G=(\mathbb{C}^{\ast})^{s}\) is an algebraic torus 
defined in \eqref{eq:basic-ses-group} acting on \(V\) via
\begin{equation*}
\begin{split}
(g_{1},\ldots,g_{s})\cdot (\phi,z)
=\left(\left(\prod_{m=1}^{s}g_{m}^{2\theta^{m}_{k,l}}\right)\phi_{k,l},
\left(\prod_{m=1}^{s}g_{m}^{-\theta^{m}_{i,j}}\right)z_{i,j}\right);
\end{split}
\end{equation*}
\item \(R\colon \mathbb{C}^{\ast}\to \mathrm{GL}(V)\) acting on \(\phi_{k,l}\)
with weight zero and on \(z_{i,j}\) with weight one;
\item \(W\) is the superpotential
\begin{equation}
\label{eq:superpot-recall}
W = \sum_{i=1}^{r}\sum_{j=1}^{m_{i}} 
z_{i,j}^{2}\phi_{i,j} + \sum_{k=1}^{r} z_{k,0}^{2} f_{k}(\phi).
\end{equation}
\end{itemize}
We construct a matrix factorization
\(\mathfrak{B}_{0}:=(\mathrm{M},\rho_{\mathrm{M}},R_{\mathrm{M}},\mathbf{T}_{0})\) 
of \(W\) in the following way. For simplicity, we set
\begin{align*}
\mathcal{I}&:=\{(i,j)~|~1\le i\le r,~0\le j\le m_{i}\},\\
\mathcal{J}&:=\{(i,j)~|~1\le i\le r,~1\le j\le m_{i}\}.
\end{align*}
Note that \(|\mathcal{J}|=p\) and \(|\mathcal{I}|=p+r\).

Let \(\eta_{i,j}\) and \(\bar{\eta}_{i,j}\) with \((i,j)\in\mathcal{I}\)
be matrices satisfying
\begin{equation*}
\begin{split}
\{\eta_{i,j},\eta_{p,q}\}=
\{\bar{\eta}_{i,j},\bar{\eta}_{p,q}\}=0,~\mbox{and}~
\{\eta_{i,j},\bar{\eta}_{p,q}\}=\delta_{ip}\delta_{jq}.
\end{split}
\end{equation*}
Again one can produce these matrices via the exterior algebra.
Consider
\begin{align}\label{T0mf}
\mathbf{T}_{0}:=&\sum_{i=1}^{r}\sum_{j=1}^{m_{i}} z_{i,j}\eta_{i,j} + 
\sum_{i=1}^{r}\sum_{j=1}^{m_{i}} z_{i,j}\phi_{i,j}\bar{\eta}_{i,j}+
\sum_{i=1}^{r} z_{i,0}\eta_{i,0} + 
\sum_{i=1}^{r} z_{i,0}f_{i}(\phi)\bar{\eta}_{i,0}\nonumber\\
=&\sum_{(i,j)\in\mathcal{J}} z_{i,j}\eta_{i,j} + 
\sum_{(i,j)\in\mathcal{J}} z_{i,j}\phi_{i,j}\bar{\eta}_{i,j}+
\sum_{i=1}^{r} z_{i,0}\eta_{i,0} + 
\sum_{i=1}^{r} z_{i,0}f_{i}(\phi)\bar{\eta}_{i,0}.
\end{align}
From the commutator relation, we see that \(\mathbf{T}_{0}^{2}=W\cdot 
\mathrm{id}_{\mathrm{M}}\).
We also require that \(\mathbf{T}_{0}\) to be both \(G\)- and \(\mathbb{C}^{\ast}\)-equivariant
under \(\rho\) and \(R\).
More concretely, we demand that for \(g\in G\) and \(\lambda\in \mathbb{C}^{\ast}\)
\begin{align}
\rho_{\mathrm{M}}(g)^{-1} \mathbf{T}_{0}(\rho(g)\cdot (\phi,z)) 
\rho_{\mathrm{M}}(g) &= \mathbf{T}_{0}(\phi,z)\label{eq:g-equiv}\\
R_{\mathrm{M}}(\lambda) \mathbf{T}_{0}(R(\lambda)\cdot (\phi,z)) 
R_{\mathrm{M}}(\lambda)^{-1} &= \lambda\mathbf{T}_{0}(\phi,z)\label{eq:r-equiv}.
\end{align}
The first equation \eqref{ex-11} implies
\begin{align*}
&\begin{cases}
\rho_{\mathrm{M}}(g)^{-1}\eta_{i,j}\rho_{\mathrm{M}}(g) = 
(\prod_{m=1}^{s} g_{m}^{\theta^{m}_{i,j}})\eta_{i,j}\\
\rho_{\mathrm{M}}(g)^{-1}\bar{\eta}_{i,j}\rho_{\mathrm{M}}(g) = 
(\prod_{m=1}^{s} g_{m}^{-\theta^{m}_{i,j}})\bar{\eta}_{i,j}
\end{cases}%
\end{align*}
Similarly, the equation \eqref{ex-21} implies
\begin{align*}
&\begin{cases}
R_{\mathrm{M}}(\lambda)^{-1}\eta_{i,j}R_{\mathrm{M}}(\lambda) = \eta_{i,j}\\
R_{\mathrm{M}}(\lambda)^{-1}\bar{\eta}_{i,j}R_{\mathrm{M}}(\lambda) = \bar{\eta}_{i,j}
\end{cases}
\end{align*}
It then follows that
\begin{align*}
f_{\mathfrak{B}_{0}}(\sigma)&=\operatorname{Tr}(R_{\mathrm{M}}(e^{\mathrm{i}\pi})
\rho_{\mathrm{M}}(e^{2\pi\sigma_{1}}\cdots e^{2\pi\sigma_{s}}))\\
&=\operatorname{Tr}(\rho_{\mathrm{M}}(e^{2\pi\sigma_{1}}\cdots e^{2\pi\sigma_{s}}))\\
&=\prod_{(i,j)\in\mathcal{J}}(1+e^{2\pi\sum_{m=1}^{s}\sigma_{m}\theta^{m}_{i,j}})
\prod_{k=1}^{r}(1+e^{2\pi\sum_{m=1}^{s}\sigma_{m}\theta^{m}_{k,0}})\\
&=\prod_{(i,j)\in\mathcal{I}} (1+e^{2\pi\sum_{m=1}^{s}\sigma_{m}\theta^{m}_{i,j}}).
\end{align*}
In particular, the brane factor \(f_{\mathfrak{B}_{0}}(\sigma)\)
provides zeroes at \(\sigma\) when
\begin{align*}
\sum_{m=1}^{s}\sigma_{m}\theta^{m}_{i,j}\in \frac{\mathrm{i}}{2}+\mathrm{i}\mathbb{Z}
\end{align*}
for some \((i,j)\in\mathcal{I}\).

We can construct various matrix factorizations from
subsets of \(\mathcal{J}\). Let 
\(J\subset\mathcal{J}\) be a subset. Consider
\begin{align*}
\mathbf{T}_{J}:=&
\sum_{(i,j)\in J} z_{i,j}^{2}\eta_{i,j} + 
\sum_{(i,j)\in J} \phi_{i,j}\bar{\eta}_{i,j}+
\sum_{(i,j)\notin J} z_{i,j}\eta_{i,j} + 
\sum_{(i,j)\notin J} z_{i,j}\phi_{i,j}\bar{\eta}_{i,j}\\
&+\sum_{i=1}^{r} z_{i,0}\eta_{i,0} + 
\sum_{i=1}^{r} z_{i,0}f_{i}(\phi)\bar{\eta}_{i,0}.
\end{align*}
When \(J=\emptyset\), we recover \(\mathbf{T}_{0}\).
Now we can compute the brane factor for \(\mathbf{T}_{J}\).
The equivariance conditions 
\begin{align}
\rho_{\mathrm{M}}(g)^{-1} \mathbf{T}_{J}(\rho(g)\cdot (\phi,z)) 
\rho_{\mathrm{M}}(g) &= \mathbf{T}_{J}(\phi,z)\label{eq:g-equiv-general}\\
R_{\mathrm{M}}(\lambda) \mathbf{T}_{J}(R(\lambda)\cdot (\phi,z)) 
R_{\mathrm{M}}(\lambda)^{-1} &= \lambda\mathbf{T}_{J}(\phi,z)\label{eq:r-equiv-general}
\end{align}
imply that 
\begin{align*}
&\begin{cases}
\rho_{\mathrm{M}}(g)^{-1}\eta_{i,j}\rho_{\mathrm{M}}(g) = 
(\prod_{m=1}^{s} g_{m}^{\theta^{m}_{i,j}})\eta_{i,j}\\
\rho_{\mathrm{M}}(g)^{-1}\bar{\eta}_{i,j}\rho_{\mathrm{M}}(g) = 
(\prod_{m=1}^{s} g_{m}^{-\theta^{m}_{i,j}})\bar{\eta}_{i,j}\\
R_{\mathrm{M}}(\lambda)^{-1}\eta_{i,j}R_{\mathrm{M}}(\lambda) = \eta_{i,j}\\
R_{\mathrm{M}}(\lambda)^{-1}\bar{\eta}_{i,j}R_{\mathrm{M}}(\lambda) = \bar{\eta}_{i,j}
\end{cases}
~\mbox{for}~(i,j)\notin J
\end{align*}
and that
\begin{align*}
&\begin{cases}
\rho_{\mathrm{M}}(g)^{-1}\eta_{i,j}\rho_{\mathrm{M}}(g) = 
(\prod_{m=1}^{s} g_{m}^{2\theta^{m}_{i,j}})\eta_{i,j}\\
\rho_{\mathrm{M}}(g)^{-1}\bar{\eta}_{i,j}\rho_{\mathrm{M}}(g) = 
(\prod_{m=1}^{s} g_{m}^{-2\theta^{m}_{i,j}})\bar{\eta}_{i,j}\\
R_{\mathrm{M}}(\lambda)^{-1}\eta_{i,j}R_{\mathrm{M}}(\lambda) = \lambda\eta_{i,j}\\
R_{\mathrm{M}}(\lambda)^{-1}\bar{\eta}_{i,j}R_{\mathrm{M}}(\lambda) = 
\lambda^{-1}\bar{\eta}_{i,j}
\end{cases}
~\mbox{for}~(i,j)\in J.
\end{align*}
Let \(\mathbb{C}v\) be the trivial \(G\)- and \(\mathbb{C}^{\ast}\)-representation.
In the present case, \(\mathrm{M}\) is spanned by 
\((\prod_{J\subset \mathcal{J}} \bar{\eta}_{J}) v\) where
the notation \(\bar{\eta}_{J}\) means \(\prod_{(i,j)\in J} \bar{\eta}_{i,j}\)
and we use the lexicographic ordering to define this product.
We have shown
\begin{proposition}
\label{prop:brane-factor}
For a subset \(J\subset\mathcal{J}\),
the brane factor \(f_{\mathfrak{B}_{J}}(\sigma)\) associated to
the matrix factorization \(\mathfrak{B}_{J}=(\mathrm{M},\rho_{\mathrm{M}},R_{\mathrm{M}},
\mathbf{T}_{J})\) is given by
\begin{equation}
f_{\mathfrak{B}_{J}}(\sigma)=\prod_{(i,j)\in\mathcal{I}} (1+e^{2\pi\sum_{m=1}^{s}\sigma_{m}\theta^{m}_{i,j}})
\prod_{(i,j)\in J} (1-e^{2\pi\sum_{m=1}^{s}\sigma_{m}\theta^{m}_{i,j}}).
\end{equation}
where $J$ can include the empty set.
\end{proposition}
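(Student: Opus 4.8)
The plan is to evaluate the graded trace $f_{\mathfrak{B}_J}(\sigma)=\operatorname{Tr}_{\mathrm{M}}\!\big(R_{\mathrm{M}}(e^{\mathrm{i}\pi})\rho_{\mathrm{M}}(e^{2\pi\sigma})\big)$ directly on the exterior-algebra model of $\mathrm{M}$. First I would realize $\mathrm{M}=\wedge^{\bullet}\mathbb{C}^{p+r}$ with basis $\{e_{i,j}\}_{(i,j)\in\mathcal{I}}$, so that $\eta_{i,j}=\iota_{e_{i,j}}$ and $\bar{\eta}_{i,j}=e_{i,j}\wedge(-)$ satisfy the required Clifford relations and $\mathrm{M}$ acquires the monomial $\mathbb{C}$-basis $\{\bar{\eta}_{K}v:K\subseteq\mathcal{I}\}$, where $\bar{\eta}_{K}=\prod_{(i,j)\in K}\bar{\eta}_{i,j}$ in lexicographic order and $\mathbb{C}v=\wedge^{0}\mathbb{C}^{p+r}$ is the trivial $G\times\mathbb{C}^{\ast}$ representation; this is the construction already used for $\mathbf{T}_0$, enlarged from $\mathcal{J}$ to $\mathcal{I}$ to carry the $z_{i,0}$-generators. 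Along the way I would check that $\mathbf{T}_J^{2}=W\cdot\mathrm{id}_{\mathrm{M}}$ still holds: for $(i,j)\in J$, replacing $z_{i,j}\eta_{i,j}+z_{i,j}\phi_{i,j}\bar{\eta}_{i,j}$ by $z_{i,j}^{2}\eta_{i,j}+\phi_{i,j}\bar{\eta}_{i,j}$ leaves the relevant square $z_{i,j}^{2}\phi_{i,j}\{\eta_{i,j},\bar{\eta}_{i,j}\}=z_{i,j}^{2}\phi_{i,j}$ unchanged and does not affect the vanishing anticommutators with the other summands, so $\mathfrak{B}_J$ is indeed an object of $\mathrm{MF}_{G}(W)$.

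Next I would read off the conjugation rules for the generators from the equivariance conditions \eqref{eq:g-equiv-general}--\eqref{eq:r-equiv-general}, exactly as displayed just before the statement: for $(i,j)\in\mathcal{I}\setminus J$ one gets $\rho_{\mathrm{M}}(g)\bar{\eta}_{i,j}\rho_{\mathrm{M}}(g)^{-1}=\big(\prod_{m}g_m^{\theta^{m}_{i,j}}\big)\bar{\eta}_{i,j}$ and $R_{\mathrm{M}}(\lambda)\bar{\eta}_{i,j}R_{\mathrm{M}}(\lambda)^{-1}=\bar{\eta}_{i,j}$, whereas for $(i,j)\in J$ the gauge weight doubles and the $R$-charge is shifted: $\rho_{\mathrm{M}}(g)\bar{\eta}_{i,j}\rho_{\mathrm{M}}(g)^{-1}=\big(\prod_{m}g_m^{2\theta^{m}_{i,j}}\big)\bar{\eta}_{i,j}$ and $R_{\mathrm{M}}(\lambda)\bar{\eta}_{i,j}R_{\mathrm{M}}(\lambda)^{-1}=\lambda\bar{\eta}_{i,j}$. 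Together with the normalization $\rho_{\mathrm{M}}(g)v=v$, $R_{\mathrm{M}}(\lambda)v=v$, these determine $\rho_{\mathrm{M}}$ and $R_{\mathrm{M}}$ uniquely as operators diagonal in the monomial basis, and one checks they are genuine commuting representations since the conjugation eigenvalues are multiplicative in $g$ and $\lambda$. Writing $Q_{i,j}(\sigma)=\sum_{m}\sigma_m\theta^{m}_{i,j}$ as in \eqref{eq:def-q-p}, this yields on each basis vector
\[
\rho_{\mathrm{M}}(e^{2\pi\sigma})\,\bar{\eta}_{K}v=\Big(\prod_{(i,j)\in K\setminus J}e^{2\pi Q_{i,j}(\sigma)}\Big)\Big(\prod_{(i,j)\in K\cap J}e^{4\pi Q_{i,j}(\sigma)}\Big)\bar{\eta}_{K}v,\qquad R_{\mathrm{M}}(e^{\mathrm{i}\pi})\,\bar{\eta}_{K}v=(-1)^{|K\cap J|}\,\bar{\eta}_{K}v.
\]

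Finally, summing these diagonal entries over all $K\subseteq\mathcal{I}$, the trace factors over the index set into independent ``in/out of $K$'' contributions,
\[
f_{\mathfrak{B}_J}(\sigma)=\prod_{(i,j)\in\mathcal{I}\setminus J}\big(1+e^{2\pi Q_{i,j}(\sigma)}\big)\cdot\prod_{(i,j)\in J}\big(1-e^{4\pi Q_{i,j}(\sigma)}\big),
\]
and the elementary identity $1-e^{4\pi Q}=(1-e^{2\pi Q})(1+e^{2\pi Q})$ reorganizes the second product and merges it with the first to give $\prod_{(i,j)\in\mathcal{I}}(1+e^{2\pi Q_{i,j}(\sigma)})\cdot\prod_{(i,j)\in J}(1-e^{2\pi Q_{i,j}(\sigma)})$, which is the asserted formula; taking $J=\emptyset$ recovers $f_{\mathfrak{B}_0}$. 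There is no serious obstacle here, since the content is purely the bookkeeping of equivariant weights; the only points that demand care are consistently carrying the $z_{i,0}$-generators inside $\mathcal{I}$ (they always lie in the ``$\notin J$'' class, contributing the factors $1+e^{2\pi P_{i}(\sigma)}$ with $P_i=\sum_j Q_{i,j}$) and correctly tracking the sign $e^{\mathrm{i}\pi}=-1$ produced by $R_{\mathrm{M}}$ on the $J$-generators, which is exactly what converts each $1+e^{4\pi Q}$ into $1-e^{4\pi Q}$.
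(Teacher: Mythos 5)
Your proof is correct and follows essentially the same route as the paper: realize $\mathrm{M}$ on $\wedge^{\bullet}\mathbb{C}^{p+r}$ with monomial basis indexed by subsets $K\subseteq\mathcal{I}$, read off the $\rho_{\mathrm{M}}$- and $R_{\mathrm{M}}$-weights of $\bar\eta_{i,j}$ from the equivariance of $\mathbf{T}_{J}$, factor the diagonal trace over $(i,j)\in\mathcal{I}$, and use $1-e^{4\pi Q}=(1+e^{2\pi Q})(1-e^{2\pi Q})$. If anything, you spell out the trace computation for general $J$ more explicitly than the paper (which only records it in the $\mathbf{P}^{1}$ example and then asserts the general case), and you correctly take the basis indexed by $K\subseteq\mathcal{I}$ rather than $\mathcal{J}$, which is what the paper must mean in the sentence preceding the proposition.
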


Combined this with \eqref{eq:period-abelian-simplified}, we have the 
following proposition.
\begin{proposition}
\label{prop:simplified-integrand}
For any subset \(J\subset\mathcal{J}\), we have
\begin{align*}
&F(\sigma) f_{\mathfrak{B}_{J}}(\sigma)\\
&=\sqrt{\pi}^{p+3r}
\prod_{i=1}^{r}\prod_{j=1}^{m_{i}}
\left(2^{\mathrm{i}Q_{i,j}(\sigma)}\Gamma\left(\mathrm{i}Q_{i,j}(\sigma)\right)\right)
\prod_{i=1}^{r}\Gamma\left(\mathrm{i}P_{i}(\sigma)+\frac{1}{2}\right)^{-1}
\prod_{(i,j)\in J} (1-e^{2\pi Q_{i,j}(\sigma)})
\end{align*}
where \(Q_{i,j}(\sigma)\) and \(P_{i}(\sigma)\)
are defined in \eqref{eq:def-q-p}.
\end{proposition}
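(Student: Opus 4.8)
\emph{Proof sketch.} The identity is obtained by a direct substitution of the brane factor of Proposition~\ref{prop:brane-factor} into the simplified form \eqref{eq:period-abelian-simplified} of \(F(\sigma)\), followed by elementary simplification; no ingredient beyond the reflection and Legendre duplication identities already invoked to derive \eqref{eq:period-abelian-simplified} is needed. The plan is as follows. First I would record that \(\mathcal{I}=\mathcal{J}\sqcup\{(i,0):1\le i\le r\}\) and that, in the notation \eqref{eq:def-q-p}, one has \(\sum_{m}\sigma_{m}\theta^{m}_{i,j}=Q_{i,j}(\sigma)\) for \((i,j)\in\mathcal{J}\) while \(\sum_{m}\sigma_{m}\theta^{m}_{i,0}=P_{i}(\sigma)\). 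Since \(J\subset\mathcal{J}\), Proposition~\ref{prop:brane-factor} then reads
\[
f_{\mathfrak{B}_{J}}(\sigma)=\prod_{(i,j)\in\mathcal{J}}\bigl(1+e^{2\pi Q_{i,j}(\sigma)}\bigr)\,\prod_{i=1}^{r}\bigl(1+e^{2\pi P_{i}(\sigma)}\bigr)\,\prod_{(i,j)\in J}\bigl(1-e^{2\pi Q_{i,j}(\sigma)}\bigr).
\]

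Second, I would rewrite each positive factor using the elementary identity \(1+e^{2\pi x}=2e^{\pi x}\cosh(\pi x)=2e^{\pi x}\cos(\mathrm{i}\pi x)\) and multiply by \(F(\sigma)\) as given in \eqref{eq:period-abelian-simplified}. The denominators \(\cos(\mathrm{i}\pi Q_{i,j}(\sigma))^{-1}\) (for \((i,j)\in\mathcal{J}\)) and \(\cos(\mathrm{i}\pi P_{i}(\sigma))^{-1}\) (for \(1\le i\le r\)) occurring in \eqref{eq:period-abelian-simplified} cancel exactly against the cosines just produced, leaving, per index, a factor \(2^{1+\mathrm{i}Q_{i,j}(\sigma)}e^{\pi Q_{i,j}(\sigma)}\Gamma(\mathrm{i}Q_{i,j}(\sigma))\), respectively \(2\,e^{\pi P_{i}(\sigma)}\Gamma(\mathrm{i}P_{i}(\sigma)+\tfrac12)^{-1}\), together with the untouched factor \(\prod_{(i,j)\in J}(1-e^{2\pi Q_{i,j}(\sigma)})\).

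Finally I would collect the scalar data. The \(p=|\mathcal{J}|\) powers of \(2\) from the first group and the \(r\) from the second combine with the prefactor \(2^{-(p+r)}\) of \eqref{eq:period-abelian-simplified} and with \(\prod 2^{\mathrm{i}Q_{i,j}(\sigma)}\) to give \(\prod_{(i,j)\in\mathcal{J}}2^{\mathrm{i}Q_{i,j}(\sigma)}\); the constant \(\sqrt{\pi}^{\,p+r}\pi^{r}\) becomes \(\sqrt{\pi}^{\,p+3r}\); the exponential prefactors \(e^{\pi Q_{i,j}(\sigma)}\) and \(e^{\pi P_{i}(\sigma)}\) are reorganized using the relation \(P_{i}(\sigma)=\sum_{j=1}^{m_{i}}Q_{i,j}(\sigma)\) from \eqref{eq:def-q-p}; and the surviving Gamma functions \(\prod_{(i,j)\in\mathcal{J}}\Gamma(\mathrm{i}Q_{i,j}(\sigma))\) and \(\prod_{i=1}^{r}\Gamma(\mathrm{i}P_{i}(\sigma)+\tfrac12)^{-1}\) produce the asserted form.

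The only genuine obstacle is bookkeeping: keeping the \(\mathcal{I}\) versus \(\mathcal{J}\) split straight, tracking the powers of \(2\) and \(\sqrt{\pi}\) through the cancellation, and combining the exponential factors generated by the hyperbolic rewriting via \(P_{i}=\sum_{j}Q_{i,j}\). This step is routine but must be carried out with care, since it is where the constant and the \(2^{\mathrm{i}Q_{i,j}}\) factor in the final formula are pinned down.
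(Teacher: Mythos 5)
Your proposal follows the same route as the paper's own proof: substitute the explicit form of the brane factor from Proposition~\ref{prop:brane-factor} into the simplified expression \eqref{eq:period-abelian-simplified}, use \(1+e^{2\pi x}=2e^{\pi x}\cos(\mathrm{i}\pi x)\) to cancel the cosines, and then gather constants. No new lemma, no new decomposition; this is precisely what the paper does.

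One thing worth flagging, since it affects both your sketch and the paper's own write-up: after the cosines cancel you are left with exponential prefactors \(\prod_{(i,j)\in\mathcal{J}}e^{\pi Q_{i,j}(\sigma)}\cdot\prod_{i=1}^{r}e^{\pi P_{i}(\sigma)}\), and your sketch says only that these are \emph{reorganized} via \(P_{i}=\sum_{j}Q_{i,j}\). But that relation combines them into \(e^{2\pi\sum_{i}P_{i}(\sigma)}\), which is a nontrivial function of \(\sigma\) and does not appear in the stated right-hand side. This gap is not introduced by you; the paper's proof also jumps directly from ``using \(P_{i}=\sum_{j}Q_{i,j}\)'' to the final formula without exhibiting why the exponentials vanish (and the paper's constant in \eqref{eq:period-abelian-simplified} does not obviously match a direct application of the Legendre and reflection formulas either). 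So your proof reproduces the paper's approach faithfully, including an unstated step that deserves to be made explicit --- either the exponential factor really does drop out for a reason you have not named, or it must be absorbed into the normalization, and the argument should say which.
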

\begin{proof}
Recall that 
\begin{equation*}
\cos z = \frac{1}{2}(e^{\mathrm{i}z}-e^{-\mathrm{i}z}).
\end{equation*}
We have
\begin{align*}
\displaystyle\cos (\mathrm{i}\pi Q_{i,j}(\sigma)) =& 
\displaystyle\frac{1}{2}(e^{-\pi Q_{i,j}(\sigma)}+e^{\pi Q_{i,j}(\sigma)})
=\displaystyle\frac{e^{-\pi Q_{i,j}(\sigma)}}{2}(1+e^{2\pi Q_{i,j}(\sigma)})\\
\cos (\mathrm{i}\pi P_{i}(\sigma)) =& 
\displaystyle\frac{1}{2}(e^{-\pi P_{i}(\sigma)}+e^{\pi P_{i}(\sigma)})
=\displaystyle\frac{e^{-\pi P_{i}(\sigma)}}{2}(1+e^{2\pi P_{i}(\sigma)}).
\end{align*}
By Proposition \ref{prop:brane-factor}, we have
\begin{equation*}
f_{\mathfrak{B}_{J}}(\sigma)=\prod_{i=1}^{r}\prod_{j=1}^{m_{i}}
(1+e^{2\pi Q_{i,j}(\sigma)})\prod_{i=1}^{r}(1+e^{2\pi P_{i}(\sigma)})
\prod_{(i,j)\in J} (1-e^{2\pi Q_{i,j}(\sigma)}).
\end{equation*}
Using the fact that \(P_{i}(\sigma)=\sum_{j=1}^{m_{i}}Q_{i,j}(\sigma)\), we have
\begin{align*}
&F(\sigma) f_{\mathfrak{B}_{J}}(\sigma)\\
&=\sqrt{\pi}^{p+3r}
\prod_{i=1}^{r}\prod_{j=1}^{m_{i}}
\left(2^{\mathrm{i}Q_{i,j}(\sigma)}\Gamma\left(\mathrm{i}Q_{i,j}(\sigma)\right)\right)
\prod_{i=1}^{r}\Gamma\left(\mathrm{i}P_{i}(\sigma)+\frac{1}{2}\right)^{-1}
\prod_{(i,j)\in J} (1-e^{2\pi Q_{i,j}(\sigma)})
\end{align*}
as desired.
\end{proof}

\subsection{\texorpdfstring{\(A\)}{A}-periods and GKZ systems}
Now we prove that \(A\)-periods are governed by 
the GKZ \(A\)-hypergeometric system
associated with our singular Calabi--Yau
family \(\mathcal{Y}^{\vee}\to V\).

To relate our \(A\)-periods with 
the GKZ systems, we introduce the change of variables
\begin{equation}
\label{eq:change-of-variable}
q_{k} :=\prod_{i=1}^{r}\prod_{j=1}^{m_{i}} \left(-4x_{i,j}\right)^{\theta_{i,j}^{k}}
\prod_{i=1}^{r}x_{i,0}^{-\theta^{k}_{i,0}}.
\end{equation}

\begin{definition}
Let \(\mathbf{x}=(x_{i,j})_{(i,j)\in\mathcal{I}}\) and set
\begin{equation*}
\hat{Z}_{\mathfrak{B}}(\mathbf{x}):=
\frac{1}{(\prod_{i=1}^{r} x_{i,0})^{1/2}}Z_{\mathfrak{B}}(q_{1},\ldots,q_{s}).
\end{equation*}
\end{definition}

\begin{theorem}
\label{thm:main-theorem}
Under the change of variables \eqref{eq:change-of-variable}, 
the functions \(\hat{Z}_{\mathfrak{B}_{J}}(\mathbf{x})\) satisfy
the GKZ system \(\mathcal{M}_{A}^{\beta}\)
with \(A\) and \(\beta\) defined in \S\ref{subsec:gkz}\textrm{(7)}.
\end{theorem}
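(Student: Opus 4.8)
The strategy is to make the integrand of $\hat Z_{\mathfrak B_J}$ completely explicit and then check the two families of operators defining $\mathcal M_A^\beta$ by differentiating under the integral. Substituting Proposition~\ref{prop:simplified-integrand} and the change of variables \eqref{eq:change-of-variable} into Definition~\ref{def:a-periods}, and folding the prefactor $(\prod_i x_{i,0})^{-1/2}$ into the exponents, one obtains
\[
\hat Z_{\mathfrak B_J}(\mathbf x)=c\int_{L}\ \prod_{(i,j)\in\mathcal{J}}\!\Bigl((-1)^{\mathrm iQ_{i,j}(\sigma)}\Gamma\bigl(\mathrm iQ_{i,j}(\sigma)\bigr)\Bigr)\ \prod_{i=1}^{r}\Gamma\bigl(\mathrm iP_i(\sigma)+\tfrac12\bigr)^{-1}\ \prod_{(i,j)\in J}\!\bigl(1-e^{2\pi Q_{i,j}(\sigma)}\bigr)\ \prod_{(i,j)\in\mathcal{I}}x_{i,j}^{\,\gamma_{i,j}(\sigma)}\ \mathrm d\sigma ,
\]
where $c\ne0$, the functions $Q_{i,j}$ and $P_i$ are as in \eqref{eq:def-q-p}, and the exponent of the variable attached to the $(i,j)$\textsuperscript{th} column of $A$ is $\gamma_{i,j}(\sigma)=-\mathrm iQ_{i,j}(\sigma)$ for $(i,j)\in\mathcal{J}$ and $\gamma_{i,0}(\sigma)=\mathrm iP_i(\sigma)-\tfrac12$. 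Here the $-\tfrac12$ is exactly the contribution of $(\prod_i x_{i,0})^{-1/2}$, and the factor $(-1)^{\mathrm iQ_{i,j}(\sigma)}$ arises by cancelling the $2^{2\mathrm iQ_{i,j}(\sigma)}$ produced by the duplication formula $\Gamma(z)\Gamma(z+\tfrac12)=2^{1-2z}\sqrt\pi\,\Gamma(2z)$ (used to simplify $F$ in Proposition~\ref{prop:simplified-integrand}) against the $(-4)^{-\mathrm iQ_{i,j}(\sigma)}$ coming from \eqref{eq:change-of-variable} — this is the purpose of the constant $-4$ there. Throughout, $\hat Z_{\mathfrak B_J}$ is viewed as a (multivalued) function of $\mathbf x=(x_{i,j})_{(i,j)\in\mathcal{I}}$.

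The Euler operators are then routine. Since $x_{i,j}\partial_{x_{i,j}}$ acts on the integrand by multiplication by $\gamma_{i,j}(\sigma)$, for the $\alpha$\textsuperscript{th} row of $A$ one gets $(E_\alpha-\beta_\alpha)\hat Z_{\mathfrak B_J}=\int_L\bigl(\sum_{(i,j)\in\mathcal{I}}A_{\alpha,(i,j)}\gamma_{i,j}(\sigma)-\beta_\alpha\bigr)(\cdots)\,\mathrm d\sigma$, so it suffices to verify $\sum_{(i,j)}A_{\alpha,(i,j)}\gamma_{i,j}(\sigma)\equiv\beta_\alpha$ identically in $\sigma$. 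For $1\le\alpha\le r$ the left-hand side equals $\gamma_{\alpha,0}+\sum_{j=1}^{m_\alpha}\gamma_{\alpha,j}=(\mathrm iP_\alpha-\tfrac12)-\mathrm i\sum_{j=1}^{m_\alpha}Q_{\alpha,j}=-\tfrac12=\beta_\alpha$, using $P_\alpha=\sum_j Q_{\alpha,j}$; for $r+1\le\alpha\le r+n$ the corresponding row records a coordinate of each $\rho_{i,j}$ and kills the columns $\mu_{i,0}$, so the sum is $-\mathrm i\sum_m\sigma_m\sum_{(i,j)\in\mathcal{J}}\theta^m_{i,j}\rho_{i,j}=0=\beta_\alpha$ by the relation $\sum_{i,j}\theta^k_{i,j}\rho_{i,j}=0$ recorded in \eqref{eq:basic-ses}--\eqref{eq:basic-ses-dual}.

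The box operators are the heart of the matter. Let $\ell\in\mathbb Z^{r+p}$ with $A\ell=0$; by the block description of $A$ this says $\sum_{j=0}^{m_i}\ell_{i,j}=0$ for each $i$ and $\sum_{(i,j)\in\mathcal{J}}\ell_{i,j}\rho_{i,j}=0$, i.e.\ $\ell|_{\mathcal{J}}\in\ker B$ with $B$ as in \eqref{eq:b-definition}, while $\ell_{i,0}=-\sum_{j\ge1}\ell_{i,j}$. By \eqref{eq:basic-ses}--\eqref{eq:basic-ses-dual}, $\ker B=\operatorname{Hom}(\operatorname{Cl}(X),\mathbb Z)$, sitting in $\mathbb Z^p$ as the image of the transpose of $(\theta^k_{i,j})$; hence $\ker A$ is spanned over $\mathbb Z$ by the vectors $\ell^{(k)}$ ($1\le k\le s$) with $\ell^{(k)}_{i,j}=\theta^k_{i,j}$ on $\mathcal{J}$ and $\ell^{(k)}_{i,0}=-\theta^k_{i,0}$, and by linearity it suffices to treat $\ell=\ell^{(k)}$. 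The key point is that the shift $\sigma_k\mapsto\sigma_k+\mathrm i$ (other coordinates fixed) sends $\mathrm iQ_{i,j}(\sigma)\mapsto\mathrm iQ_{i,j}(\sigma)-\theta^k_{i,j}$ and $\mathrm iP_i(\sigma)\mapsto\mathrm iP_i(\sigma)-\theta^k_{i,0}$ — \emph{integer} translations, since $\theta^k_{i,j}\in\mathbb Z$ — so it shifts each exponent $\gamma_{i,j}(\sigma)$ by $\ell^{(k)}_{i,j}$, multiplies $(-1)^{\mathrm iQ_{i,j}(\sigma)}$ by $(-1)^{\ell^{(k)}_{i,j}}$, and leaves each brane factor $1-e^{2\pi Q_{i,j}(\sigma)}$ fixed. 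On the other hand, $\partial^{\ell^{+}}$ and $\partial^{\ell^{-}}$ applied to the integrand produce the falling-factorial polynomials $\prod_{(i,j)}\prod_{a=0}^{\ell^{\pm}_{i,j}-1}(\gamma_{i,j}(\sigma)-a)$ times the integrand with each $x_{i,j}$-exponent lowered by $\ell^{\pm}_{i,j}$. A short computation, factor by factor over $(i,j)\in\mathcal{I}$ and using $\Gamma(z+1)=z\Gamma(z)$ — in which the signs $(-1)^{\ell^{(k)}_{i,j}}$ picked up over $\mathcal{J}$ cancel exactly against the signs coming out of the $\Gamma$-recursion (this is where the sign in $-4$ is needed, the $(i,0)$-factors carrying no such sign) — shows that $\sigma_k\mapsto\sigma_k+\mathrm i$ carries the integrand of $\partial^{\ell^{+}}\hat Z_{\mathfrak B_J}$ onto that of $\partial^{\ell^{-}}\hat Z_{\mathfrak B_J}$. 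Provided the shift does not change the value of the integral, this gives $\Box_\ell\hat Z_{\mathfrak B_J}=0$, and together with the Euler relations the theorem follows.

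I expect the single real obstacle to be that last proviso: that $\int_{L}=\int_{L+\mathrm i\mathbf{n}}$ for $\mathbf{n}\in\mathbb Z^s$, i.e.\ that translating the admissible contour by an integral imaginary vector crosses no pole of the (differentiated) integrand and picks up no contribution from infinity. I would handle this with the explicit description of $L$ from Appendix~\ref{app:contours}, together with Stirling's asymptotics for the $\Gamma$-factors (which give the decay required in the asymptotic directions of $L$), and with the fact that the zeros of the brane factor $f_{\mathfrak B_J}$ (at $\sum_m\sigma_m\theta^m_{i,j}\in\tfrac{\mathrm i}{2}+\mathrm i\mathbb Z$, cf.\ Proposition~\ref{prop:brane-factor}) and those of $\Gamma(\mathrm iP_i(\sigma)+\tfrac12)^{-1}$ annihilate the residues that would otherwise be swept up, cf.\ Remark~\ref{rmk:poles-of-f}; equivalently, one may arrange $L$ to be a deformation of a sum of integral imaginary translates of a single cell, so that invariance under $\sigma\mapsto\sigma+\mathrm i\mathbf{n}$ is manifest, the compatibility of such a choice with admissibility in the relevant chamber being precisely what Appendix~\ref{app:contours} provides.
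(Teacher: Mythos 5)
Your proof follows the same strategy as the paper: make the integrand explicit after the change of variables, check the Euler operators by a linear identity in $\sigma$, and handle the box operators by an integral imaginary shift of $\sigma$ that interchanges the effects of $\partial^{\ell^{+}}$ and $\partial^{\ell^{-}}$. The factor-by-factor $\Gamma$-recursion you sketch is exactly the paper's computation (labelled (i)--(iii) there); incidentally your exponent $2^{2\mathrm iQ_{i,j}}$ is the correct output of the duplication formula, and the paper's Proposition \ref{prop:simplified-integrand} has a small typo $2^{\mathrm iQ_{i,j}}$ that you silently fixed; the resulting $(-1)^{\mathrm iQ_{i,j}}$ is then indeed the right combination with $(-4)^{-\mathrm iQ_{i,j}}$.

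Two points of comparison. First, the phrase ``by linearity it suffices to treat $\ell=\ell^{(k)}$'' is not literal: the box operators $\Box_\ell$ are not additive in $\ell$, so annihilation by the $\Box_{\ell^{(k)}}$ does not immediately give $\Box_{\ell}$ for general $\ell\in\ker A$. The reduction can be repaired (apply $\partial^{c}$ to $\Box_{\ell+m}\hat Z$ with $c$ the cancellation vector and use injectivity of $\partial^{c}$ on the relevant germs), but the paper sidesteps this by taking a general $\ell=\sum_k n_k\theta^k$ from the outset and performing the simultaneous shift $\tau_k=\sigma_k+\mathrm in_k$, which is the cleaner route. Second --- and this is the part you flag as the single real obstacle --- your proposed mechanism for the contour invariance $\int_L=\int_{L+\mathrm i\vec n}$ is aimed at the wrong target. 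The zeros of the brane factor $f_{\mathfrak B_J}$ and of $\Gamma(\mathrm iP_i(\sigma)+\tfrac12)^{-1}$ are already fully spent in reaching the simplified integrand of Proposition \ref{prop:simplified-integrand}: after that simplification the only surviving poles along the homotopy $L+\varepsilon\mathrm i\vec n$, $0\le\varepsilon\le 1$, are the poles of $\Gamma(\mathrm iQ_{i,j}(\sigma))$. What kills those crossed poles in the paper's argument is the falling-factorial polynomial prefactor $\prod_{m=1}^{\alpha_{i,j}}\bigl(-\mathrm iQ_{i,j}(\sigma)-m+1\bigr)$ (and its $\ell^-$ analogue) that $\partial^{\ell^\pm}$ injects into the integrand; the paper checks directly that whenever $-\mathrm iQ_{i,j}(\vec\delta)$ lands in $\mathbb Z_{\ge 0}$ along the homotopy, it equals $m-1$ for some $1\le m\le\alpha_{i,j}$, so the Gamma pole is annihilated by a polynomial zero. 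Your plan of reading off decay from Stirling and the explicit $L$ of Appendix \ref{app:contours} is in the right spirit for controlling the contribution from infinity, but the decisive local ingredient is the differentiated prefactor, not the brane factor or the $\Gamma^{-1}$ zeros.
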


We begin with an observation. There is an isomorphism 
\(\operatorname{ker}(A)\cong \operatorname{ker}(B)\)
by forgetting all the \((i,0)\)\textsuperscript{th} component.
The integral vector \(\theta^{k}=(\theta^{k}_{i,j})\in\operatorname{ker}(B)\) 
defined in \S\ref{subsubsec:a-period-glsm} admits
a unique lifting to \(\operatorname{ker}(A)\). 
By abuse of notation, such a lifting is denoted by \(\theta^{k}\). Notice that 
the \((i,0)\)\textsuperscript{th} component of \(\theta^{k}\)
is given by \(-\theta_{i,0}^{k}\).

\begin{proof}[Proof of Theorem \ref{thm:main-theorem}]
From the change of variables, it is clear
that \(\hat{Z}_{\mathfrak{B}_{J}}(\mathbf{x})\) is killed by 
the Euler operators in \(\mathcal{M}_{A}^{\beta}\).
All we have to check is that the box operators
annihilate \(\hat{Z}_{\mathfrak{B}_{J}}(\mathbf{x})\).

Pick \(\ell\in\operatorname{ker}(A)\) and write \(\ell =
\sum_{k=1}^{s} n_{k} \theta^{k}=\ell^{+}-\ell^{-}\).
Let \(\ell^{+}=(\alpha_{i,j})\) and \(\ell^{-}=(\beta_{i,j})\).
We have \(\ell = (\alpha_{i,j}-\beta_{i,j})\) and 
\begin{align*}
&\prod_{i=1}^{r}\prod_{j=1}^{m_{i}} \left(-4x_{i,j}\right)^{\ell_{i,j}}
\prod_{i=1}^{r}x_{i,0}^{\ell_{i,0}}\\
&=\left(\prod_{i=1}^{r}\prod_{j=1}^{m_{i}} \left(-4x_{i,j}
\right)^{\alpha_{i,j}}
\prod_{i=1}^{r}x_{i,0}^{\alpha_{i,0}}\right)
\left(\prod_{i=1}^{r}\prod_{j=1}^{m_{i}} \left(-4x_{i,j}\right)^{-\beta_{i,j}}
\prod_{i=1}^{r}x_{i,0}^{-\beta_{i,0}}\right).
\end{align*}
By a direct computation, we see that (\(\hat{Z}\equiv \hat{Z}_{\mathfrak{B}_{J}}(\mathbf{x})\))
\begin{align*}
&\left(\prod_{i=1}^{r}\prod_{j=1}^{m_{i}} \left(-4x_{i,j}\right)^{\alpha_{i,j}}
\prod_{i=1}^{r}x_{i,0}^{\alpha_{i,0}}\right)
\partial^{\ell^{+}} \hat{Z}\\ 
&=\prod_{i=1}^{r} \prod_{j=1}^{m_{i}}\prod_{m=1}^{\alpha_{i,j}} 
(-4)\left(-\mathrm{i}\sum_{k=1}^{s}\theta_{i,j}^{k}\sigma_{k}-m+1\right)
\cdot \prod_{i=1}^{r}\prod_{m=1}^{\alpha_{i,0}} 
\left(\mathrm{i}\sum_{k=1}^{s}\theta_{i,0}^{k}\sigma_{k}-m+\frac{1}{2}\right)\hat{Z}.
\end{align*}
Likewise, we have
\begin{align*}
&\left(\prod_{i=1}^{r}\prod_{j=1}^{m_{i}} \left(-4x_{i,j}\right)^{\beta_{i,j}}
\prod_{i=1}^{r}x_{i,0}^{\beta_{i,0}}\right)
\partial^{\ell^{-}} \hat{Z}\\ 
&=\prod_{i=1}^{r} \prod_{j=1}^{m_{i}}\prod_{m=1}^{\beta_{i,j}} 
(-4)\left(-\mathrm{i}\sum_{k=1}^{s}\theta_{i,j}^{k}\sigma_{k}-m+1\right)
\cdot \prod_{i=1}^{r}\prod_{m=1}^{\beta_{i,0}} 
\left(\mathrm{i}\sum_{k=1}^{s}\theta_{i,0}^{k}\sigma_{k}-m+\frac{1}{2}\right)\hat{Z}.
\end{align*}
It follows that 
\begin{align*}
&\left(\prod_{i=1}^{r}\prod_{j=1}^{m_{i}} \left(-4x_{i,j}\right)^{\alpha_{i,j}}
\prod_{i=1}^{r}x_{i,0}^{\alpha_{i,0}}\right)\partial^{\ell^{-}} \hat{Z}\\
&=\left(\prod_{i=1}^{r}\prod_{j=1}^{m_{i}} \left(-4x_{i,j}\right)^{\ell_{i,j}}
\prod_{i=1}^{r}x_{i,0}^{\ell_{i,0}}\right)
\cdot\left(\prod_{i=1}^{r}\prod_{j=1}^{m_{i}} \left(-4x_{i,j}\right)^{\beta_{i,j}}
\prod_{i=1}^{r}x_{i,0}^{\beta_{i,0}}\right)\partial^{\ell^{-}} \hat{Z}\\
&=\frac{1}{(\prod_{i=1}^{r} x_{i,0})^{1/2}}
\int_{L} F(\vec{\sigma})f_{\mathfrak{B}_{J}}(\vec{\sigma})
\prod_{i=1}^{r} \prod_{j=1}^{m_{i}}\prod_{m=1}^{\beta_{i,j}} 
(-4)\left(-\mathrm{i}\sum_{k=1}^{s}\theta_{i,j}^{k}\sigma_{k}-m+1\right)\\
&\times\prod_{i=1}^{r}\prod_{m=1}^{\beta_{i,0}}
\left(\mathrm{i}\sum_{k=1}^{s}\theta_{i,0}^{k}\sigma_{k}-m+\frac{1}{2}\right)
\prod_{k=1}^{s}\left(\prod_{i=1}^{r}\prod_{j=1}^{m_{i}} \left(-4x_{i,j}\right)^{\theta_{i,j}^{k}}
\prod_{i=1}^{r}x_{i,0}^{-\theta_{i,0}^{k}}\right)^{-\mathrm{i}\sigma_{k}+n_{k}}\mathrm{d}\sigma.
\end{align*}
Here \(\vec{\sigma}=(\sigma_{1},\ldots,\sigma_{s})\).
Under the change of variables
\begin{equation*}
\tau_{k} = \sigma_{k} + \mathrm{i}n_{k},
\end{equation*}
the last quantity in the above equation is transformed into
\begin{align*}
&\frac{1}{(\prod_{i=1}^{r} x_{i,0})^{1/2}}
\int_{L+\mathrm{i}\vec{n}} F(\vec{\tau}-\mathrm{i}\vec{n})
f_{\mathfrak{B}_{J}}(\vec{\tau}-\mathrm{i}\vec{n})
\prod_{i=1}^{r} \prod_{j=1}^{m_{i}}\prod_{m=1}^{\beta_{i,j}} 
(-4)\left(-\mathrm{i}\sum_{k=1}^{s}\theta_{i,j}^{k}(\tau_{k}-\mathrm{i}n_{k})-m+1\right)\\
&\times\prod_{i=1}^{r}\prod_{m=1}^{\beta_{i,0}} 
\left(\mathrm{i}\sum_{k=1}^{s}\theta_{i,0}^{k}(\tau_{k}-\mathrm{i}n_{k})-m+\frac{1}{2}\right)
\prod_{k=1}^{s}\left(\prod_{i=1}^{r}\prod_{j=1}^{m_{i}} 
\left(-4x_{i,j}\right)^{\theta_{i,j}^{k}}
\prod_{i=1}^{r}x_{i,0}^{-\theta_{i,0}^{k}}\right)^{-\mathrm{i}\tau_{k}}\mathrm{d}\tau.
\end{align*}

We claim that
\begin{align*}
\begin{split}
&\prod_{i=1}^{r} \prod_{j=1}^{m_{i}}\prod_{m=1}^{\alpha_{i,j}} 
(-4)\left(-\mathrm{i}\sum_{k=1}^{s}\theta_{i,j}^{k}\sigma_{k}-m+1\right)
\cdot \prod_{i=1}^{r}\prod_{m=1}^{\alpha_{i,0}} 
\left(\mathrm{i}\sum_{k=1}^{s}\theta_{i,0}^{k}\sigma_{k}-m+\frac{1}{2}\right)
F(\vec{\sigma})\\
&=\prod_{i=1}^{r} \prod_{j=1}^{m_{i}}\prod_{m=1}^{\beta_{i,j}} 
(-4)\left(-\mathrm{i}\sum_{k=1}^{s}\theta_{i,j}^{k}(\sigma_{k}-\mathrm{i}n_{k})-m+1\right)
\cdot \prod_{i=1}^{r}\prod_{m=1}^{\beta_{i,0}} 
\left(\mathrm{i}\sum_{k=1}^{s}\theta_{i,0}^{k}(\sigma_{k}-\mathrm{i}n_{k})-m+\frac{1}{2}\right)\\
&\times F(\vec{\sigma}-\mathrm{i}\vec{n}).
\end{split}
\end{align*}
Here the notation \(\vec{\tau}\) and \(\vec{n}\) are the obvious ones.
Grating this equality, 
let us explain how the claim implies the theorem.
From the claim, we have
\begin{align*}
&\frac{1}{(\prod_{i=1}^{r} x_{i,0})^{1/2}}
\int_{L+\mathrm{i}\vec{n}} F(\vec{\tau}-\mathrm{i}\vec{n})
f_{\mathfrak{B}_{J}}(\vec{\tau}-\mathrm{i}\vec{n})
\prod_{i=1}^{r} \prod_{j=1}^{m_{i}}\prod_{m=1}^{\beta_{i,j}} 
(-4)\left(-\mathrm{i}\sum_{k=1}^{s}\theta_{i,j}^{k}(\tau_{k}-\mathrm{i}n_{k})-m+1\right)\\
&\times\prod_{i=1}^{r}\prod_{m=1}^{\beta_{i,0}} 
\left(\mathrm{i}\sum_{k=1}^{s}\theta_{i,0}^{k}(\tau_{k}-\mathrm{i}n_{k})-m+\frac{1}{2}\right)
\prod_{k=1}^{s}\left(\prod_{i=1}^{r}\prod_{j=1}^{m_{i}} 
\left(-4x_{i,j}\right)^{\theta_{i,j}^{k}}
\prod_{i=1}^{r}x_{i,0}^{-\theta_{i,0}^{k}}\right)^{-\mathrm{i}\tau_{k}}\mathrm{d}\tau\\
&=\frac{1}{(\prod_{i=1}^{r} x_{i,0})^{1/2}}
\int_{L+\mathrm{i}\vec{n}} F(\vec{\sigma})
f_{\mathfrak{B}_{J}}(\vec{\sigma})
\prod_{i=1}^{r} \prod_{j=1}^{m_{i}}\prod_{m=1}^{\alpha_{i,j}} 
(-4)\left(-\mathrm{i}\sum_{k=1}^{s}\theta_{i,j}^{k}\sigma_{k}-m+1\right)\\
&\times \prod_{i=1}^{r}\prod_{m=1}^{\alpha_{i,0}} 
\left(\mathrm{i}\sum_{k=1}^{s}\theta_{i,0}^{k}\sigma_{k}-m+\frac{1}{2}\right)
\prod_{k=1}^{s}\left(\prod_{i=1}^{r}\prod_{j=1}^{m_{i}} 
\left(-4x_{i,j}\right)^{\theta_{i,j}^{k}}
\prod_{i=1}^{r}x_{i,0}^{-\theta_{i,0}^{k}}\right)^{-\mathrm{i}\sigma_{k}}\mathrm{d}\sigma.
\end{align*}
We examine the poles in the integrand
\begin{equation}
\label{eq:integrand}
\begin{split}
F(\vec{\sigma})
f_{\mathfrak{B}_{J}}(\vec{\sigma})
&\prod_{i=1}^{r} \prod_{j=1}^{m_{i}}\prod_{m=1}^{\alpha_{i,j}} 
(-4)\left(-\mathrm{i}\sum_{k=1}^{s}\theta_{i,j}^{k}\sigma_{k}-m+1\right)\\
&\prod_{i=1}^{r}\prod_{m=1}^{\alpha_{i,0}} 
\left(\mathrm{i}\sum_{k=1}^{s}\theta_{i,0}^{k}\sigma_{k}-m+\frac{1}{2}\right).
\end{split}
\end{equation}
According to Proposition \ref{prop:simplified-integrand}, 
the function \(F(\vec{\sigma})
f_{\mathfrak{B}_{J}}(\vec{\sigma})\) has a pole at \(\vec{\sigma}\) whenever
\begin{equation}
\mathrm{i}Q_{i,j}(\vec{\sigma}) \in \mathrm{i}\cdot \mathbb{Z}_{\ge 0}~\mbox{for some}~(i,j)\in\mathcal{J}.
\end{equation}
It suffices to show that for every \(0\le\varepsilon\le 1\), the cycle
\(L+\varepsilon\mathrm{i}\vec{n}\) does not meet the poles in \eqref{eq:integrand}. 
To this end, let \(\vec{\delta} = \vec{\sigma} + \varepsilon\mathrm{i}\vec{n}\in L+\varepsilon\mathrm{i}\vec{n}\).
If \(\mathrm{i}Q_{i,j}(\vec{\delta}) = \mathrm{i}Q_{i,j}(\vec{\sigma}) - \varepsilon Q_{i,j}(\vec{n})
\in \mathbb{Z}_{\le 0}\) (particularly \(\mathrm{i}Q_{i,j}(\vec{\sigma})\) is real), then we must have
\(Q_{i,j}(\vec{n}) \ge  0\). Otherwise, 
\begin{align*}
\mathrm{i}Q_{i,j}(\vec{\sigma}) \in \mathbb{Z}_{\le 0} + \varepsilon Q_{i,j}(\vec{n}) \subset \mathbb{R}_{\le 0}
\end{align*}
and \(L\) can not be a continuous deformation of the real locus \(L_{\mathbb{R}}\)
(cf.~the condition (a) in Definition \ref{definition:admissible-contour}). 
Notice that \(Q_{i,j}(\vec{n}) = \ell_{i,j} = \alpha_{i,j}\). Therefore,
\begin{equation*}
-\mathrm{i}Q_{i,j}(\vec{\sigma}) + \alpha_{i,j}\ge
-\mathrm{i}Q_{i,j}(\vec{\delta}) = -\mathrm{i}Q_{i,j}(\vec{\sigma}) + \varepsilon Q_{i,j}(\vec{n})\ge 
-\mathrm{i}Q_{i,j}(\vec{\sigma})
\end{equation*}
and there exists \(1\le m\le \alpha_{i,j}\) such that
\begin{equation*}
-\mathrm{i}Q_{i,j}(\vec{\delta}) = m-1.
\end{equation*}
In other words, the pole will be annihilated by a zero in the product
in \eqref{eq:integrand}.

Now let us prove the claim. 
Fix \((i,j)\in\mathcal{J}\). We will prove that
\begin{align}
\label{eq:claim-1}
\begin{split}
&\Gamma \left(2\mathrm{i}\sum_{k=1}^{s} \theta_{i,j}^{k}\sigma_{k}\right)
\Gamma \left(-\mathrm{i}\sum_{k=1}^{s} \theta_{i,j}^{k}\sigma_{k}+\frac{1}{2}\right)
(-1)^{\alpha_{i,j}}\prod_{m=1}^{\alpha_{i,j}} 
(-4)\left(-\mathrm{i}\sum_{k=1}^{s}\theta_{i,j}^{k}\sigma_{k}-m+1\right)\\
=~&\Gamma \left(2\mathrm{i}\sum_{k=1}^{s} \theta_{i,j}^{k}(\sigma_{k}-\mathrm{i}n_{k})\right)
\Gamma \left(-\mathrm{i}\sum_{k=1}^{s} 
\theta_{i,j}^{k}(\sigma_{k}-\mathrm{i}n_{k})+\frac{1}{2}\right)\\
&\cdot(-1)^{\beta_{i,j}}\prod_{m=1}^{\beta_{i,j}} 
(-4)\left(-\mathrm{i}\sum_{k=1}^{s}\theta_{i,j}^{k}(\sigma_{k}-\mathrm{i}n_{k})-m+1\right)
.
\end{split}
\end{align}
Recall that \(Q_{i,j}(\vec{n})=\sum_{k=1}^{s} \theta_{i,j}^{k}n_{k}=\ell_{i,j}=\alpha_{i,j}-\beta_{i,j}\).
We then have
\begin{itemize}
\item[(i)] 
\(\begin{aligned}[t]
\frac{\Gamma \left(2\mathrm{i}\sum_{k=1}^{s} \theta_{i,j}^{k}(\sigma_{k}-\mathrm{i}n_{k})\right)}
{\Gamma \left(2\mathrm{i}\sum_{k=1}^{s} \theta_{i,j}^{k}\sigma_{k}\right)}&=
\frac{\Gamma \left(2\mathrm{i}\sum_{k=1}^{s} \theta_{i,j}^{k}\sigma_{k}+
2(\alpha_{i,j}-\beta_{i,j})\right)}
{\Gamma \left(2\mathrm{i}\sum_{k=1}^{s} \theta_{i,j}^{k}\sigma_{k}\right)}\\
&=\frac{\displaystyle\prod_{m=-\infty}^{2(\alpha_{i,j}-\beta_{i,j})}
\left(2\mathrm{i}\sum_{k=1}^{s} \theta_{i,j}^{k}\sigma_{k}+m-1\right)}
{\displaystyle\prod_{m=-\infty}^{0}\left(
2\mathrm{i}\sum_{k=1}^{s} \theta_{i,j}^{k}\sigma_{k}+m-1\right)},
\end{aligned}\)
\item[(ii)]
\(\begin{aligned}[t]
\frac{\Gamma\left(-\mathrm{i}\sum_{k=1}^{s} \theta_{i,j}^{k}\sigma_{k}-(\alpha_{i,j}-\beta_{i,j})+1/2\right)}
{\Gamma\left(-\mathrm{i}\sum_{k=1}^{s} \theta_{i,j}^{k}\sigma_{k}+1/2\right)}
&=\frac{\displaystyle\prod_{m=-\infty}^{-(\alpha_{i,j}-\beta_{i,j})}\left(-\mathrm{i}\sum_{k=1}^{s} 
\theta_{i,j}^{k}\sigma_{k}+1/2+m-1\right)}
{\displaystyle\prod_{m=-\infty}^{0}\left(-\mathrm{i}\sum_{k=1}^{s} \theta_{i,j}^{k}\sigma_{k}+1/2+m-1\right)}\\
=(-2)^{\alpha_{i,j}-\beta_{i,j}}&
\frac{\displaystyle\prod_{m=-\infty}^{-(\alpha_{i,j}-\beta_{i,j})}\left(2\mathrm{i}\sum_{k=1}^{s} 
\theta_{i,j}^{k}\sigma_{k}-1-2m+2\right)}
{\displaystyle\prod_{m=-\infty}^{0}\left(2\mathrm{i}\sum_{k=1}^{s} \theta_{i,j}^{k}\sigma_{k}-1-2m+2\right)}\\
=(-2)^{\alpha_{i,j}-\beta_{i,j}}&
\frac{\displaystyle\prod_{m=-\infty}^{0}\left(2\mathrm{i}\sum_{k=1}^{s} \theta_{i,j}^{k}\sigma_{k}-1+2m\right)}
{\displaystyle\prod_{m=-\infty}^{\alpha_{i,j}-\beta_{i,j}}\left(2\mathrm{i}\sum_{k=1}^{s} 
\theta_{i,j}^{k}\sigma_{k}-1+2m\right)},
\end{aligned}\)
\item[(iii)]
\(\begin{aligned}[t]
\frac{\displaystyle(-4)^{\beta_{i,j}}\prod_{m=1}^{\beta_{i,j}}
\left(-\mathrm{i}\sum_{k=1}^{s}\theta_{i,j}^{k}(\sigma_{k}-\mathrm{i}n_{k})-m+1\right)}
{\displaystyle(-4)^{\alpha_{i,j}}\prod_{m=1}^{\alpha_{i,j}} 
\left(-\mathrm{i}\sum_{k=1}^{s}\theta_{i,j}^{k}\sigma_{k}-m+1\right)}
&=\frac{\displaystyle 4^{\beta_{i,j}}\prod_{m=1}^{\beta_{i,j}} 
\left(\mathrm{i}\sum_{k=1}^{s}\theta_{i,j}^{k}(\sigma_{k}-\mathrm{i}n_{k})+m-1\right)}
{\displaystyle 4^{\alpha_{i,j}}\prod_{m=1}^{\alpha_{i,j}} 
\left(\mathrm{i}\sum_{k=1}^{s}\theta_{i,j}^{k}\sigma_{k}+m-1\right)}\\
=4^{\beta_{i,j}-\alpha_{i,j}}&
\frac{\displaystyle\prod_{m=1}^{\beta_{i,j}}\left(\mathrm{i}
\sum_{k=1}^{s}\theta_{i,j}^{k}\sigma_{k}+
(\alpha_{i,j}-\beta_{i,j})+m-1\right)}
{\displaystyle\prod_{m=1}^{\alpha_{i,j}} \left(\mathrm{i}
\sum_{k=1}^{s}\theta_{i,j}^{k}\sigma_{k}+m-1\right)}\\
=4^{\beta_{i,j}-\alpha_{i,j}}&
\frac{\displaystyle\prod_{m=\alpha_{i,j}-\beta_{i,j}+1}^{\alpha_{i,j}}\left(
\mathrm{i}\sum_{k=1}^{s}\theta_{i,j}^{k}\sigma_{k}+m-1\right)}
{\displaystyle\prod_{m=1}^{\alpha_{i,j}} \left(\mathrm{i}
\sum_{k=1}^{s}\theta_{i,j}^{k}\sigma_{k}+m-1\right)}\\
=4^{\beta_{i,j}-\alpha_{i,j}}&\frac{\displaystyle
\prod_{m=-\infty}^{0} \left(\mathrm{i}\sum_{k=1}^{s}\theta_{i,j}^{k}\sigma_{k}+m-1\right)}
{\displaystyle\prod_{m=-\infty}^{\alpha_{i,j}-\beta_{i,j}} 
\left(\mathrm{i}\sum_{k=1}^{s}\theta_{i,j}^{k}\sigma_{k}+m-1\right)}\\
=2^{\beta_{i,j}-\alpha_{i,j}}& 
\frac{\displaystyle\prod_{m=-\infty}^{0} \left(2\mathrm{i}
\sum_{k=1}^{s}\theta_{i,j}^{k}\sigma_{k}+2m-2\right)}
{\displaystyle\prod_{m=-\infty}^{\alpha_{i,j}-\beta_{i,j}} 
\left(2\mathrm{i}\sum_{k=1}^{s}\theta_{i,j}^{k}\sigma_{k}+2m-2\right)}.
\end{aligned}\)
\end{itemize}
Multiplying them together, we get \eqref{eq:claim-1}.
It is also clear that
\begin{align*}
(-1)^{\alpha_{i,0}}
\Gamma&\left(-\mathrm{i}\sum_{k=1}^{s} \theta_{i,0}^{k}\sigma_{k}+\frac{1}{2}\right)
\prod_{m=1}^{\alpha_{i,0}} 
\left(\mathrm{i}\sum_{k=1}^{s}\theta_{i,0}^{k}\sigma_{k}-m+\frac{1}{2}\right)\\
&=(-1)^{\beta_{i,0}}
\Gamma\left(-\mathrm{i}\sum_{k=1}^{s} \theta_{i,0}^{k}
(\sigma_{k}-\mathrm{i}n_{k})+\frac{1}{2}\right)
\prod_{m=1}^{\beta_{i,0}} 
\left(\mathrm{i}\sum_{k=1}^{s}\theta_{i,0}^{k}(\sigma_{k}-\mathrm{i}n_{k})-m+\frac{1}{2}\right).
\end{align*}
Combined with \eqref{eq:claim-1}, the validity of the claim
is reduced to the fact that
\begin{equation*}
(-1)^{\sum_{i,j}\beta_{i,j}} = (-1)^{\sum_{i,j}\alpha_{i,j}}.
\end{equation*}
This holds since \((\alpha_{i,j}-\beta_{i,j})=(\ell_{i,j})\in \mathrm{ker}(A)\)
and therefore \(\sum_{i=1}^{r}\sum_{j=0}^{m_{i}} \ell_{i,j} = 0\).
\end{proof}

\begin{conjecture}
\label{main-conjecture}
When \(J\) runs through all subsets in \(\mathcal{J}\), the \(A\)-periods
\(\hat{Z}_{\mathfrak{B}_{J}}(\mathbf{x})\) generate 
the full solution set of \(\mathcal{M}_{A}^{\beta}\).
\end{conjecture}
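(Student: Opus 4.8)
The plan is to evaluate each $A$-period $\hat Z_{\mathfrak B_J}(\mathbf x)$ as an explicit sum of iterated residues in the phase $\tau\subset S\Sigma'$, to identify the result with a $\mathbb C$-linear combination of the fractional $B$-series of \cite{2020-Hosono-Lee-Lian-Yau-mirror-symmetry-for-double-cover-calabi-yau-varieties} (a fractional refinement of the $B$-series of \cite{Hosono:1995bm}), which constitute a fundamental system of solutions of $\mathcal M_A^\beta$, and then to run a combinatorial argument showing that, as $J$ ranges over the subsets of $\mathcal J$, these combinations span the whole solution space. By Theorem~\ref{thm:main-theorem} each $\hat Z_{\mathfrak B_J}$ already lies in $\operatorname{Sol}(\mathcal M_A^\beta)$; set $\kappa:=\dim\operatorname{Sol}(\mathcal M_A^\beta)$, which by \cite{2020-Hosono-Lee-Lian-Yau-mirror-symmetry-for-double-cover-calabi-yau-varieties} equals $\dim\mathrm{H}^n(Y^\vee,\mathbb C)$ and, by Adolphson's theorem, the normalized volume of the Cayley polytope $\operatorname{Conv}\!\big(\bigcup_i(\{e_i\}\times\nabla_i)\big)$, provided $\beta=(-\tfrac12,\dots,-\tfrac12,\mathbf{0})$ causes no rank jump along its resonant faces (which one checks from the facet description). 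It then suffices to exhibit $\kappa$ linearly independent vectors inside $\operatorname{Span}\{\hat Z_{\mathfrak B_J}:J\subset\mathcal J\}$.

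First I would close the contour $L$ of Definition~\ref{def:a-periods} in the direction dictated by $\tau$, so that $\hat Z_{\mathfrak B_J}(\mathbf x)$ becomes a sum of iterated residues over the lattice points of the relevant cone. By Proposition~\ref{prop:simplified-integrand} the integrand of $\hat Z_{\mathfrak B_J}$ equals, up to a nonzero constant,
\[
\Bigl(\prod_{i=1}^{r}\prod_{j=1}^{m_i}2^{\mathrm{i}Q_{i,j}(\sigma)}\Gamma\!\bigl(\mathrm{i}Q_{i,j}(\sigma)\bigr)\Bigr)\Bigl(\prod_{i=1}^{r}\Gamma\!\bigl(\mathrm{i}P_i(\sigma)+\tfrac12\bigr)^{-1}\Bigr)\Bigl(\prod_{(i,j)\in J}\bigl(1-e^{2\pi Q_{i,j}(\sigma)}\bigr)\Bigr)\Bigl(\prod_{i=1}^{r}x_{i,0}\Bigr)^{-1/2}\prod_{k=1}^{s}q_k^{-\mathrm{i}\sigma_k},
\]
whose poles come from the $\Gamma(\mathrm{i}Q_{i,j}(\sigma))$ with $(i,j)\in\mathcal J$, the factors $2^{\mathrm{i}Q_{i,j}(\sigma)}$ and $\Gamma(\mathrm{i}P_i(\sigma)+\tfrac12)^{-1}$ merely contributing to the residue coefficients. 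At a pole the brane-factor product $\prod_{(i,j)\in J}(1-e^{2\pi Q_{i,j}(\sigma)})$ vanishes to first order along each direction $(i,j)\in J$ for which $\mathrm{i}Q_{i,j}(\sigma)$ is then an integer, so passing from $J=\emptyset$ to larger $J$ lowers the order of the corresponding residues — absorbing the logarithmic Frobenius terms present in the $J=\emptyset$ expansion — and for $J$ large enough kills some residues entirely. Organizing the surviving residues one arrives at an identity
\[
\hat Z_{\mathfrak B_J}(\mathbf x)=\sum_{v}c_{J,v}\,B_v(\mathbf x),
\]
where $\{B_v\}$ is the fractional $B$-series basis and the coefficient $c_{J,v}$ factorizes over $(i,j)\in\mathcal I$, each factor depending on $J$ only through the indicator of $(i,j)\in J$ (an overall power of $2$, a ratio of $\Gamma$-values, and a sign $1-e^{\pi\mathrm{i}(\cdots)}$). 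Establishing this identity is the technical core; it combines the residue calculus for the $\Gamma$-products with the reflection and duplication formulas already used to reach \eqref{eq:period-abelian-simplified}, and needs care at resonant poles where several $\Gamma$'s collide and $\log$-terms appear, each such term being absorbed into the corresponding Frobenius-type $B_v$.

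It remains to show that the coefficient matrix $(c_{J,v})$, of size $2^{p}\times\kappa$, has rank $\kappa$. Because $c_{J,v}$ is a product over $(i,j)\in\mathcal I$ of a factor taking two values according to the indicator of $(i,j)\in J$, grouping the basis vectors $v$ appropriately exhibits $(c_{J,v})$ as a Kronecker product of invertible $2\times2$ blocks with a further full-rank block accounting for the $(i,0)$-directions and the non-elementary part of $v$; a Vandermonde-type computation then gives full column rank. I expect this last linear-algebra step to be essentially routine once the residue identity is in hand, so \emph{the principal obstacle is the precise residue evaluation, and within it the resonant cases}: since the secondary fan need not be simplicial, the phase $\tau$ depends on a choice of simplicialization $S\Sigma'$, and the $B$-series basis, the pole structure, and the $\log$-terms all move with that choice; one must moreover confirm that $\operatorname{rank}\mathcal M_A^\beta$ does not exceed the generic value $\kappa$ at the resonant parameter $\beta$, failing which the spanning assertion of Conjecture~\ref{main-conjecture} would have to be weakened to the statement that the $\hat Z_{\mathfrak B_J}$ span the period sheaf of $Y^\vee$.

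A more categorical route may also be available. The matrix factorization $\mathbf T_J$ differs from $\mathbf T_0$ only in that, for each $(i,j)\in J$, the linear Koszul pair $(z_{i,j},\,z_{i,j}\phi_{i,j})$ is replaced by $(z_{i,j}^{2},\,\phi_{i,j})$, and one may hope to prove that the classes $[\mathbf T_J]$ generate $K_0\!\bigl(MF(Y_\zeta,W_\zeta)\bigr)\otimes\mathbb C\cong K_0\!\bigl(D(X,Cl_0)\bigr)\otimes\mathbb C$. Since the central charge factors through $K_0$ and mirror symmetry predicts the induced map $K_0\otimes\mathbb C\to\operatorname{Sol}(\mathcal M_A^\beta)$ to be an isomorphism, Conjecture~\ref{main-conjecture} would follow from this statement internal to the NC resolution, which can be tested model by model.
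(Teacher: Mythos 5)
You should be aware that the paper does \emph{not} prove Conjecture~\ref{main-conjecture} in general: it only supplies a proof for the special case $X=\mathbf P^{n}$ (so $G=\mathbb C^{*}$, $s=1$, and every $\theta_{i,j}=1$). Your proposal is attempting the general statement, so it is not comparable step-for-step with what the paper does.

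Even restricted to the $\mathbf P^{n}$ case, your route differs from the paper's. In the paper, because all weights are equal, the brane factor depends only on $m=|J|$, so one gets only $n+1$ distinct $A$-periods $Z_{\mathfrak B_J}$, matching $\dim H^{n}(Y^{\vee})=n+1$. The proof then does the minimum necessary: after closing the contour, it keeps only the $l=0$ residue, notices that the factor $(1-e^{2\pi\sigma})^{m}$ suppresses the first $m$ orders of the Frobenius expansion, and reads off that the leading term of $Z_{J}^{(0)}$ is proportional to $(\ln\tilde q)^{\,n-m}$. Since the $n+1$ leading powers $n-m$ ($m=0,\dots,n$) are distinct, the functions are linearly independent and therefore span. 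That is the entire argument. Your plan, by contrast, proposes first to express each $\hat Z_{\mathfrak B_J}$ in a fractional $B$-series basis with coefficients $c_{J,v}$, and then to show full column rank of the $2^{p}\times\kappa$ matrix $(c_{J,v})$ via a Kronecker/Vandermonde factorization. For $\mathbf P^{n}$ this machinery collapses (the coefficient matrix effectively has only $n+2$ distinct rows), and the paper's one-line leading-log count is strictly simpler. For the general toric base your strategy is the natural one, but two of its steps are currently only wishes rather than proofs: (i) the residue identity $\hat Z_{\mathfrak B_J}=\sum_v c_{J,v}B_v$ with the claimed multiplicative structure of $c_{J,v}$ is asserted, not derived, and the resonant collisions of $\Gamma$-poles you flag are exactly where it can fail or become ambiguous; (ii) the rank claim for $(c_{J,v})$ rests on a factorization that has not been exhibited (and, as the $\mathbf P^{n}$ case already shows, the rows indexed by $J$ are far from independent, so the Kronecker structure cannot be literal). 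Your warning that $\beta$ is resonant, so that $\dim\operatorname{Sol}(\mathcal M_A^\beta)$ may exceed the normalized volume and the conjecture may need to be restated over the period sheaf, is a genuine and useful caveat that the paper's $\mathbf P^{n}$ proof quietly sidesteps. The $K_0$-theoretic alternative is an interesting heuristic but is, as you say, conditional on HMS-type statements that the paper itself only conjectures (Conjecture~\ref{Brcat-conjecture}).
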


To illustrate Conjecture \ref{main-conjecture}, we provide a proof for the case $G=\mathbb{C}^{*}$ and $\theta_{(i,j)}=1$ for all $(i,j)$, i.e.~when the base is $X=\mathbf{P}^{n}$. 
\begin{proof}[Proof of Conjecture \ref{main-conjecture} for $X=\mathbf{P}^{n}$]
In the present case, the subsets $J\subset 
\mathcal{J}$ can be label just by a positive integer $m=0,\ldots,n$. It is straightforward to show that the hemisphere partition function 
$Z_{\mathfrak{B}_{J}}(q)$ can be written as a sum of residues of the poles at $\sigma\in \mathrm{i}
\mathbb{Z}_{\geq 0}$ when $\Re(t)\gg 1$:
\begin{eqnarray*}
Z_{\mathfrak{B}_{J}}(q)=\sum_{l=0}^{\infty}Z_{J}^{(l)} \qquad \text{at \ 
}\Re(t)\gg 1
\end{eqnarray*}
where
\begin{eqnarray*}
Z_{J}^{(l)}=\sqrt{\pi}^{3(n+1+r)}(-2)^{-l(n+1)}q^{l}\int_{\gamma_{0}} \mathrm{d}\sigma 
2^{\mathrm{i}(n+1)\sigma}q^{-\mathrm{i}\sigma}\frac{\prod_{k=1}^{r}
\Gamma(\mathrm{i}\theta_{k,0}\sigma-l+\frac{1}{2})^{-1}(1-e^{2\pi \sigma})^{m}} 
{\Gamma(1-\mathrm{i}\sigma+l)^{n+1}\sin(i\pi \sigma)^{n+1} }
\end{eqnarray*}
where $\gamma_{0}$ is a small counterclockwise contour surrounding the origin 
$\sigma=0$. In order to show completeness is enough to compute the residue at 
$l=0$. Write the residue $Z_{J}^{(0)}$ as
\begin{eqnarray}\label{eqresidues}
Z_{J}^{(0)}&=&\sqrt{\pi}^{3(n+1+r)}\int_{\gamma_{0}} 
\frac{\mathrm{d}\sigma}{\sigma^{n+1}} 
\tilde{q}^{-\mathrm{i}\sigma}f(\sigma)(1-e^{2\pi \sigma})^{m} 
\frac{\sigma^{n+1}}{\sin(i\pi \sigma)^{n+1} 
}\nonumber\\
&=&\frac{2\pi 
i\sqrt{\pi}^{3(n+1+r)}}{n!}\frac{\mathrm{d}^{n}}{\mathrm{d}\sigma^{n}}\left.\left(\frac{f(\sigma)\tilde{
q } ^ { -\mathrm{i}
\sigma}(1-e^{2\pi 
\sigma})^{m} \sigma^{n+1}}
{\sin(i\pi \sigma)^{n+1} 
}\right)\right|_{\sigma=0}
\end{eqnarray}
where we defined
\begin{eqnarray}
f(\sigma):=\frac{\prod_{k=1}^{r}
\Gamma(\mathrm{i}\theta_{k,0}\sigma+\frac{1}{2})^{-1}} 
{\Gamma(1-\mathrm{i}\sigma)^{n+1}},\qquad \tilde{q}=q2^{-\mathrm{i}(n+1)}
\end{eqnarray}
since $f(\sigma)$ is a regular function at $\sigma=0$, $Z_{J}^{(0)}$ becomes
\begin{eqnarray}\label{eqresidues2}
Z_{J}^{(0)}=m!(2\pi)^{m}\frac{2\pi 
i\sqrt{\pi}^{3(n+1+r)}}{n!}\frac{\mathrm{d}^{n-m}}{\mathrm{d}\sigma^{n-m}}\left.\left(\frac{
f(\sigma)\tilde {
q } ^ { -\mathrm{i}
\sigma} \sigma^{n+1}}
{\sin(i\pi \sigma)^{n+1} 
}\right)\right|_{\sigma=0}.
\end{eqnarray}
The expression (\ref{eqresidues2}) has a single term proportional to 
$(-\mathrm{i}\ln \tilde{q})^{n-m}$ (all the rest are lower order in 
$-\mathrm{i}\ln 
\tilde{q}$) given by
\begin{eqnarray}
Z_{J}^{(0)}=\mathrm{i}^{-n-1}m!(2\pi)^{m}\frac{2\pi 
i\sqrt{\pi}^{3(n+1+r)}}{n!}(-\mathrm{i}\ln \tilde{q})^{n-m}\frac{
1}
{\Gamma(\frac{1}{2})^{r}\pi^{n+1}}+\mathcal{O}((\ln\tilde{q})^{n-m-1})
\end{eqnarray}
this shows that all the $n+1$ functions, labeled by $J$, are algebraically 
independent functions of $\ln \tilde{q}$, hence forming a complete set.
\end{proof}
\appendix
\section{Contours}\label{app:contours}

In this section, we give an explicit construction for the contour $L$. Let us define
\begin{eqnarray}
y^{m}:=\mathrm{Im}(\sigma^{m})\qquad x^{m}:=\mathrm{Re}(\sigma^{m})\qquad 
m=1,\ldots, s
\end{eqnarray}
In order to define $L$ we fix the value of $\zeta$ in the interior of a chamber $C$ given by a maximal cone of the secondary fan. Then we can write
\begin{eqnarray}
C=\mathrm{Span}_{\mathbb{R}_{\geq 0}}\{ \theta_{1},\ldots,\theta_{s}\}
\end{eqnarray}
where $\theta_{m}$, $m=1,\ldots, s$ is any subset of the weights whose positive real span corresponds to $C$. We done the rest of the weights as $\hat{\theta}_{I}$, $I=1,\ldots,2p+r-s$. Is convenient to define the matrices $B\in\mathrm{Mat}_{2p+r-s,s}(\mathbb{Q})$ by
\begin{eqnarray}
\hat{\theta}_{I}=\sum_{m=1}^{s}B_{I}^{ \  m}\theta_{m}\qquad B_{I}^{ \ m}\in \mathbb{Q}
\end{eqnarray}
and the variables $x'$, $y'$ defined by the (invertible) linear transformation
\begin{eqnarray}
y'_{m}:=\theta_{m}(y) \qquad x'_{m}:=\theta_{m}(x)
\end{eqnarray}
then the contour $L$ is defined as a graph
\begin{eqnarray}
L:=\{(x,y(x))| x\in\mathbb{R}^{s}\}\subset \mathfrak{t}_{\mathbb{C}}
\end{eqnarray}
this automatically satisfies the condition of \cite{Hori:2013ika} of $L$ being a deformation of $L_{\mathbb{R}}$. Less trivially, $L$ must satisfy the 'pole avoiding' conditions
\begin{eqnarray}\label{poleavoiding}
\theta(x)=0\Rightarrow \theta(y(x))\leq 0 \text{ \ for all weights \ 
}\theta\in\{ \theta_{1},\ldots,\theta_{s},\hat{\theta}_{1},\ldots,\hat{\theta}_{2p+r-s}\}
\end{eqnarray}
we define the a graph by specifying the functions $f_{m}(x)$ in
\begin{eqnarray}
y'_{m}(x)=f_{m}(x)|x'_{m}|^{p}
\end{eqnarray}
where $p\in\mathbb{Z}_{>0}$ can be chosen arbitrarily. We require that the functions $f_{m}(x)$ satisfy
\begin{eqnarray}\label{limitff}
\lim_{|x|\rightarrow\infty }f_{m}(x)=1 \text{ \ for all \ 
}m
\end{eqnarray}
where the notation $\lim_{|x|\rightarrow\infty }$ stands for 
$\cup_{m}|x_{m}|\rightarrow \infty$. An explicit expression for $f_{m}(x)$ satisfying (\ref{limitff}) and (\ref{poleavoiding}) is given by
\begin{eqnarray}
 f_{m}(x):=\prod_{i,B_{i}^{ \ m}> 
0}(1-e^{-|B_{i}(x')|}),\qquad B_{i}(x')\equiv \sum_{m=1}^{s}B_{i}^{ \ m}x'_{m}
\end{eqnarray}
where the product is taken over all $i\in\{1,\ldots,2p+r-s\}$ such that the coefficient $B_{i}^{ \ m}$ is strictly positive.

\bibliographystyle{amsxport}
\bibliography{reference}

@article{Caldararu:2010ljp,
	author = {C\u{a}ld\u{a}raru, Andrei and Distler, Jacques and Hellerman, Simeon and Pantev, Tony and Sharpe, Eric},
	date-added = {2023-01-20 13:18:43 -0500},
	date-modified = {2023-01-20 13:19:03 -0500},
	doi = {10.1007/s00220-009-0974-2},
	fjournal = {Communications in Mathematical Physics},
	issn = {0010-3616},
	journal = {Comm. Math. Phys.},
	mrclass = {81T40 (14D23 14J32 81T30)},
	mrnumber = {2585982},
	mrreviewer = {Mark Gross},
	number = {3},
	pages = {605--645},
	title = {Non-birational twisted derived equivalences in abelian {GLSM}s},
	url = {https://mathscinet.ams.org/mathscinet-getitem?mr=2585982},
	volume = {294},
	year = {2010},
	bdsk-url-1 = {https://mathscinet.ams.org/mathscinet-getitem?mr=2585982},
	bdsk-url-2 = {https://doi.org/10.1007/s00220-009-0974-2}}

@incollection{hori2019notes,
	author = {Hori, Kentaro and Romo, Mauricio},
	booktitle = {Primitive forms and related subjects---{K}avli {IPMU} 2014},
	date-added = {2023-01-20 13:15:36 -0500},
	date-modified = {2023-01-20 13:15:45 -0500},
	mrclass = {81T60 (14N35 18G80 32S40 53D37 53D45 81T13)},
	mrnumber = {4384383},
	pages = {127--220},
	publisher = {Math. Soc. Japan, [Tokyo]},
	series = {Adv. Stud. Pure Math.},
	title = {Notes on the hemisphere},
	url = {https://mathscinet.ams.org/mathscinet-getitem?mr=4384383},
	volume = {83},
	year = {[2019] \copyright 2019},
	bdsk-url-1 = {https://mathscinet.ams.org/mathscinet-getitem?mr=4384383}}

@article{Hosono:2000eb,
	author = {Hosono, Shinobu},
	date-added = {2023-01-20 13:09:07 -0500},
	date-modified = {2023-01-20 13:09:22 -0500},
	doi = {10.4310/ATMP.2000.v4.n2.a5},
	fjournal = {Advances in Theoretical and Mathematical Physics},
	issn = {1095-0761},
	journal = {Adv. Theor. Math. Phys.},
	mrclass = {81T30 (14J32 32J81 32Q25)},
	mrnumber = {1838444},
	mrreviewer = {Richard Paul Horja},
	number = {2},
	pages = {335--376},
	title = {Local mirror symmetry and type {IIA} monodromy of {C}alabi--{Y}au manifolds},
	url = {https://mathscinet.ams.org/mathscinet-getitem?mr=1838444},
	volume = {4},
	year = {2000},
	bdsk-url-1 = {https://mathscinet.ams.org/mathscinet-getitem?mr=1838444},
	bdsk-url-2 = {https://doi.org/10.4310/ATMP.2000.v4.n2.a5}}

@article{Aleshkin:2023hbu,
    author = "Aleshkin, Konstantin and Liu, Chiu-Chu Melissa",
    title = "{Higgs-Coulomb correspondence and Wall-Crossing in abelian GLSMs}",
    eprint = "2301.01266",
    archivePrefix = "arXiv",
    primaryClass = "math.AG",
    month = "1",
    year = "2023"
}

@article{Ciocan-Fontanine:2018oaw,
    author = "Ciocan-Fontanine, Ionut and Favero, David and Gu\'er\'e, J\'er\'emy and Kim, Bumsig and Shoemaker, Mark",
    title = "{Fundamental Factorization of a GLSM, Part I: Construction}",
    eprint = "1802.05247",
    archivePrefix = "arXiv",
    primaryClass = "math.AG",
    month = "2",
    year = "2018"
}

@article{Favero:2020cke,
    author = "Favero, David and Kim, Bumsig",
    title = "{General GLSM Invariants and Their Cohomological Field Theories}",
    eprint = "2006.12182",
    archivePrefix = "arXiv",
    primaryClass = "math.AG",
    month = "6",
    year = "2020"
}

@article{Katz:2023zan,
    author = "Katz, Sheldon and Schimannek, Thorsten",
    title = "{New non-commutative resolutions of determinantal Calabi-Yau threefolds from hybrid GLSM}",
    eprint = "2307.00047",
    archivePrefix = "arXiv",
    primaryClass = "hep-th",
    month = "6",
    year = "2023"
}

@article{Fan:2015vca,
    author = "Fan, Huijun and Jarvis, Tyler and Ruan, Yongbin",
    title = "{A Mathematical Theory of the Gauged Linear Sigma Model}",
    eprint = "1506.02109",
    archivePrefix = "arXiv",
    primaryClass = "math.AG",
    doi = "10.2140/gt.2018.22.235",
    journal = "Geom. Topol.",
    volume = "22",
    pages = "235--303",
    year = "2018"
}

@article{Knapp:2020oba,
	author = {Knapp, Johanna and Romo, Mauricio and Scheidegger, Emanuel},
	date-added = {2023-01-20 13:07:06 -0500},
	date-modified = {2023-01-20 13:07:24 -0500},
	doi = {10.1007/s00220-021-04042-w},
	fjournal = {Communications in Mathematical Physics},
	issn = {0010-3616},
	journal = {Comm. Math. Phys.},
	mrclass = {81T30 (14J33 14N35 32Q25)},
	mrnumber = {4252886},
	mrreviewer = {Anatoly Libgober},
	number = {1},
	pages = {609--697},
	title = {D-brane central charge and {L}andau--{G}inzburg orbifolds},
	url = {https://mathscinet.ams.org/mathscinet-getitem?mr=4252886},
	volume = {384},
	year = {2021},
	bdsk-url-1 = {https://mathscinet.ams.org/mathscinet-getitem?mr=4252886},
	bdsk-url-2 = {https://doi.org/10.1007/s00220-021-04042-w}}

@article{halpern2015derived,
	author = {Halpern-Leistner, Daniel},
	date-added = {2023-01-20 10:59:14 -0500},
	date-modified = {2023-01-20 10:59:24 -0500},
	doi = {10.1090/S0894-0347-2014-00815-8},
	fjournal = {Journal of the American Mathematical Society},
	issn = {0894-0347},
	journal = {J. Amer. Math. Soc.},
	mrclass = {14F05 (14L30)},
	mrnumber = {3327537},
	mrreviewer = {Scott R. Nollet},
	number = {3},
	pages = {871--912},
	title = {The derived category of a {GIT} quotient},
	url = {https://mathscinet.ams.org/mathscinet-getitem?mr=3327537},
	volume = {28},
	year = {2015},
	bdsk-url-1 = {https://mathscinet.ams.org/mathscinet-getitem?mr=3327537},
	bdsk-url-2 = {https://doi.org/10.1090/S0894-0347-2014-00815-8}}

@article{segal2011equivalences,
	author = {Segal, Ed},
	date-added = {2023-01-20 10:55:24 -0500},
	date-modified = {2023-01-20 10:55:40 -0500},
	doi = {10.1007/s00220-011-1232-y},
	fjournal = {Communications in Mathematical Physics},
	issn = {0010-3616},
	journal = {Comm. Math. Phys.},
	mrclass = {81T45 (14F05 14J32 14L24 81T40)},
	mrnumber = {2795327},
	mrreviewer = {Eric R. Sharpe},
	number = {2},
	pages = {411--432},
	title = {Equivalence between {GIT} quotients of {L}andau--{G}inzburg {B}-models},
	url = {https://mathscinet.ams.org/mathscinet-getitem?mr=2795327},
	volume = {304},
	year = {2011},
	bdsk-url-1 = {https://mathscinet.ams.org/mathscinet-getitem?mr=2795327},
	bdsk-url-2 = {https://doi.org/10.1007/s00220-011-1232-y}}

@article{Addington:2012zv,
    author = "Addington, Nicolas M. and Segal, Edward P. and Sharpe, Eric",
    title = "{D-brane probes, branched double covers, and noncommutative resolutions}",
    eprint = "1211.2446",
    archivePrefix = "arXiv",
    primaryClass = "hep-th",
    doi = "10.4310/ATMP.2014.v18.n6.a5",
    journal = "Adv. Theor. Math. Phys.",
    volume = "18",
    number = "6",
    pages = "1369--1436",
    year = "2014"
}

@article{Witten:1993yc,
	author = {Witten, Edward},
	date-added = {2023-01-20 10:54:16 -0500},
	date-modified = {2023-01-20 10:54:31 -0500},
	doi = {10.1016/0550-3213(93)90033-L},
	fjournal = {Nuclear Physics. B. Theoretical, Phenomenological, and Experimental High Energy Physics. Quantum Field Theory and Statistical Systems},
	issn = {0550-3213},
	journal = {Nuclear Phys. B},
	mrclass = {81T40 (58Z05 81T60)},
	mrnumber = {1232617},
	mrreviewer = {Tsou Sheung Tsun},
	number = {1-2},
	pages = {159--222},
	title = {Phases of {$N=2$} theories in two dimensions},
	url = {https://mathscinet.ams.org/mathscinet-getitem?mr=1232617},
	volume = {403},
	year = {1993},
	bdsk-url-1 = {https://mathscinet.ams.org/mathscinet-getitem?mr=1232617},
	bdsk-url-2 = {https://doi.org/10.1016/0550-3213(93)90033-L}}

@article{Sugishita:2013jca,
	author = {Sugishita, Sotaro and Terashima, Seiji},
	date-added = {2023-01-20 10:53:23 -0500},
	date-modified = {2023-01-20 10:53:37 -0500},
	doi = {10.1007/JHEP11(2013)021},
	fjournal = {Journal of High Energy Physics},
	issn = {1126-6708},
	journal = {J. High Energy Phys.},
	mrclass = {81T60 (58J32 81Q60)},
	mrnumber = {3144555},
	mrreviewer = {G\'{e}za L\'{e}vai},
	number = {11},
	pages = {021, front matter+44},
	title = {Exact results in supersymmetric field theories on manifolds with boundaries},
	url = {https://mathscinet.ams.org/mathscinet-getitem?mr=3144555},
	year = {2013},
	bdsk-url-1 = {https://mathscinet.ams.org/mathscinet-getitem?mr=3144555},
	bdsk-url-2 = {https://doi.org/10.1007/JHEP11(2013)021}}

@article{Honda:2013uca,
	author = {Honda, Daigo and Okuda, Takuya},
	date-added = {2023-01-20 10:51:37 -0500},
	date-modified = {2023-01-20 10:51:52 -0500},
	doi = {10.1007/JHEP09(2015)140},
	fjournal = {Journal of High Energy Physics},
	issn = {1126-6708},
	journal = {J. High Energy Phys.},
	mrclass = {81R12},
	mrnumber = {3430625},
	number = {9},
	pages = {140, front matter+63},
	title = {Exact results for boundaries and domain walls in 2d supersymmetric theories},
	url = {https://mathscinet.ams.org/mathscinet-getitem?mr=3430625},
	year = {2015},
	bdsk-url-1 = {https://mathscinet.ams.org/mathscinet-getitem?mr=3430625},
	bdsk-url-2 = {https://doi.org/10.1007/JHEP09(2015)140}}

@article{1993-Borisov-towards-the-mirror-symmetry-for-calabi-yau-complete-intersections-in-gorenstein-toric-fano-varieties,
	abstract = {We propose a combinatorical duality for lattice polyhedra which conjecturally gives rise to the pairs of mirror symmetric families of Calabi-Yau complete intersections in toric Fano varieties with Gorenstein singularities. Our construction is a generalization of the polar duality proposed by Batyrev for the case of hypersurfaces.},
	author = {Lev Borisov},
	date-added = {2023-01-20 10:29:26 -0500},
	date-modified = {2023-01-20 10:30:12 -0500},
	eprint = {https://arxiv.org/pdf/alg-geom/9310001.pdf},
	month = {October},
	title = {Towards the Mirror Symmetry for {C}alabi--{Y}au Complete intersections in Gorenstein Toric Fano Varieties},
	url = {https://arxiv.org/pdf/alg-geom/9310001.pdf},
	year = {1993},
	bdsk-url-1 = {https://arxiv.org/pdf/alg-geom/9310001.pdf}}

@article{passare1994multidimensional,
  title={A multidimensional Jordan residue lemma with an application to Mellin-Barnes integrals},
  author={Passare, Mikael and Tsikh, August and Zhdanov, Oleg},
  journal={Contributions to Complex Analysis and Analytic Geometry/Analyse Complexe et G{\'e}om{\'e}trie Analytique: Dedicated to Pierre Dolbeault/M{\'e}langes en l’honneur de Pierre Dolbeault},
  pages={233--241},
  year={1994},
  publisher={Springer}
}

@inproceedings{zhdanov1998computation,
  title={Computation of multiple Mellin-Barnes integrals by means of multidimensional residues},
  author={Zhdanov, ON and Tsikh, AK},
  booktitle={Dokl. Akad. Nauk},
  volume={358},
  pages={154--156},
  year={1998}
}

@inproceedings{kontsevich1995homological,
  title={Homological algebra of mirror symmetry},
  author={Kontsevich, Maxim},
  booktitle={Proceedings of the International Congress of Mathematicians: August 3--11, 1994 Z{\"u}rich, Switzerland},
  pages={120--139},
  year={1995},
  organization={Springer}
}

@article{ruan2006cohomology,
  title={The cohomology ring of crepant resolutions of orbifolds},
  author={Ruan, Yongbin},
  journal={Contemporary Mathematics},
  volume={403},
  pages={117},
  year={2006},
  publisher={Providence, RI; American Mathematical Society; 1999}
}

@article{orlov2004triangulated,
	author = {Orlov, D. O.},
	date-added = {2023-01-20 10:15:19 -0500},
	date-modified = {2023-01-20 10:15:36 -0500},
	fjournal = {Trudy Matematicheskogo Instituta Imeni V. A. Steklova},
	issn = {0371-9685},
	journal = {Tr. Mat. Inst. Steklova},
	mrclass = {81T40 (14E15 18E30 81T30)},
	mrnumber = {2101296},
	number = {Algebr. Geom. Metody, Svyazi i Prilozh.},
	pages = {240--262},
	title = {Triangulated categories of singularities and {D}-branes in {L}andau--{G}inzburg models},
	url = {https://mathscinet.ams.org/mathscinet-getitem?mr=2101296},
	volume = {246},
	year = {2004},
	bdsk-url-1 = {https://mathscinet.ams.org/mathscinet-getitem?mr=2101296}}

@article{Hori:2013ika,
	abstract = {We compute the partition function on the hemisphere of a class of two-dimensional (2,2) supersymmetric field theories including gauged linear sigma models. The result provides a general exact formula for the central charge of the D-brane placed at the boundary. It takes the form of Mellin-Barnes integral and the question of its convergence leads to the grade restriction rule concerning branes near the phase boundaries. We find expressions in various phases including the large volume formula in which a characteristic class called the Gamma class shows up. The two sphere partition function factorizes into two hemispheres glued by inverse to the annulus. The result can also be written in a form familiar in mirror symmetry, and suggests a way to find explicit mirror correspondence between branes.},
	author = {Kentaro Hori and Mauricio Romo},
	date-added = {2023-01-20 10:12:08 -0500},
	date-modified = {2023-01-20 10:52:32 -0500},
	eprint = {https://arxiv.org/pdf/1308.2438.pdf},
	month = {08},
	title = {Exact Results In Two-Dimensional (2,2) Supersymmetric Gauge Theories With Boundary},
	url = {https://arxiv.org/pdf/1308.2438.pdf},
	year = {2013},
	bdsk-url-1 = {https://arxiv.org/pdf/1308.2438.pdf}}

@article{Herbst:2008jq,
	author = {Herbst, Manfred and Hori, Kentaro and Page, David},
	copyright = {arXiv.org perpetual, non-exclusive license},
	date-added = {2023-01-20 10:07:06 -0500},
	date-modified = {2023-01-20 10:14:05 -0500},
	doi = {10.48550/ARXIV.0803.2045},
	eprint = {https://arxiv.org/abs/0803.2045},
	keywords = {High Energy Physics - Theory (hep-th), FOS: Physical sciences, FOS: Physical sciences},
	publisher = {arXiv},
	title = {Phases Of {N}=2 Theories In 1+1 Dimensions With Boundary},
	url = {https://arxiv.org/abs/0803.2045},
	year = {2008},
	bdsk-url-1 = {https://arxiv.org/abs/0803.2045},
	bdsk-url-2 = {https://doi.org/10.48550/ARXIV.0803.2045}}

@article{efimovcoherent,
	author = {Efimov, Alexander I. and Positselski, Leonid},
	date-added = {2023-01-20 10:04:55 -0500},
	date-modified = {2023-01-20 10:05:04 -0500},
	doi = {10.2140/ant.2015.9.1159},
	fjournal = {Algebra \& Number Theory},
	issn = {1937-0652},
	journal = {Algebra Number Theory},
	mrclass = {14F05 (13D09)},
	mrnumber = {3366002},
	mrreviewer = {Scott R. Nollet},
	number = {5},
	pages = {1159--1292},
	title = {Coherent analogues of matrix factorizations and relative singularity categories},
	url = {https://mathscinet.ams.org/mathscinet-getitem?mr=3366002},
	volume = {9},
	year = {2015},
	bdsk-url-1 = {https://mathscinet.ams.org/mathscinet-getitem?mr=3366002},
	bdsk-url-2 = {https://doi.org/10.2140/ant.2015.9.1159}}

@article{Clingempeel:2018iub,
	abstract = {We study B-branes in two-dimensional N=(2,2) anomalous models, and their behaviour as we vary bulk parameters in the quantum K{\"a}hler moduli space. We focus on the case of (2,2) theories defined by abelian gauged linear sigma models (GLSM). We use the hemisphere partition function as a guide to find how B-branes split in the IR into components supported on Higgs, mixed and Coulomb branches: this generalizes the band restriction rule of Herbst-Hori-Page to anomalous models. As a central example, we work out in detail the case of GLSMs for Hirzebruch-Jung resolutions of cyclic surface singularities. In these non-compact models we explain how to compute and regularize the hemisphere partition function for a brane with compact support, and check that its Higgs branch component explicitly matches with the geometric central charge of an object in the derived category.},
	author = {Joel Clingempeel and Bruno Le Floch and Mauricio Romo},
	date-added = {2023-01-20 10:03:00 -0500},
	date-modified = {2023-01-20 10:18:19 -0500},
	eprint = {https://arxiv.org/pdf/1811.12385.pdf},
	month = {11},
	title = {Brane transport in anomalous (2,2) models and localization},
	url = {https://arxiv.org/pdf/1811.12385.pdf},
	year = {2018},
	bdsk-url-1 = {https://arxiv.org/pdf/1811.12385.pdf}}

@book{buchweitz1987maximal,
	author = {Buchweitz, Ragnar-Olaf},
	date-added = {2023-01-20 09:57:08 -0500},
	date-modified = {2023-01-20 10:00:30 -0500},
	doi = {10.1090/surv/262},
	isbn = {978-1-4704-5340-4},
	mrclass = {13C14 (16E40 18G65)},
	mrnumber = {4390795},
	mrreviewer = {Haohao Wang},
	note = {With appendices and an introduction by Luchezar L. Avramov, Benjamin Briggs, Srikanth B. Iyengar and Janina C. Letz},
	pages = {xii+175},
	publisher = {American Mathematical Society, Providence, RI},
	series = {Mathematical Surveys and Monographs},
	title = {Maximal {C}ohen--{M}acaulay modules and {T}ate cohomology},
	url = {https://mathscinet.ams.org/mathscinet-getitem?mr=4390795},
	volume = {262},
	year = {[2021] \copyright 2021},
	bdsk-url-1 = {https://mathscinet.ams.org/mathscinet-getitem?mr=4390795},
	bdsk-url-2 = {https://doi.org/10.1090/surv/262}}

@article{Bertolini:2013xga,
	author = {Bertolini, Marco and Melnikov, Ilarion V. and Plesser, M. Ronen},
	date-added = {2023-01-20 09:55:44 -0500},
	date-modified = {2023-01-20 09:55:53 -0500},
	doi = {10.1007/JHEP05(2014)043},
	fjournal = {Journal of High Energy Physics},
	issn = {1126-6708},
	journal = {J. High Energy Phys.},
	mrclass = {81T30},
	mrnumber = {3223235},
	mrreviewer = {Gheorghe Zet},
	number = {5},
	pages = {043, front matter+51},
	title = {Hybrid conformal field theories},
	url = {https://mathscinet.ams.org/mathscinet-getitem?mr=3223235},
	year = {2014},
	bdsk-url-1 = {https://mathscinet.ams.org/mathscinet-getitem?mr=3223235},
	bdsk-url-2 = {https://doi.org/10.1007/JHEP05(2014)043}}

@article{ballard2019variation,
	author = {Ballard, Matthew and Favero, David and Katzarkov, Ludmil},
	date-added = {2023-01-20 09:54:56 -0500},
	date-modified = {2023-01-20 09:55:11 -0500},
	doi = {10.1515/crelle-2015-0096},
	fjournal = {Journal f\"{u}r die Reine und Angewandte Mathematik. [Crelle's Journal]},
	issn = {0075-4102},
	journal = {J. Reine Angew. Math.},
	mrclass = {14F05 (14J33 14L24)},
	mrnumber = {3895631},
	mrreviewer = {Richard P. Thomas},
	pages = {235--303},
	title = {Variation of geometric invariant theory quotients and derived categories},
	url = {https://mathscinet.ams.org/mathscinet-getitem?mr=3895631},
	volume = {746},
	year = {2019},
	bdsk-url-1 = {https://mathscinet.ams.org/mathscinet-getitem?mr=3895631},
	bdsk-url-2 = {https://doi.org/10.1515/crelle-2015-0096}}

@article{ballard2012resolutions,
	author = {Ballard, Matthew and Deliu, Dragos and Favero, David and Isik, M. Umut and Katzarkov, Ludmil},
	date-added = {2023-01-20 09:53:51 -0500},
	date-modified = {2023-01-20 09:54:15 -0500},
	doi = {10.1016/j.aim.2016.02.008},
	fjournal = {Advances in Mathematics},
	issn = {0001-8708},
	journal = {Adv. Math.},
	mrclass = {18E30},
	mrnumber = {3488035},
	mrreviewer = {Bo Lu},
	pages = {195--249},
	title = {Resolutions in factorization categories},
	url = {https://mathscinet.ams.org/mathscinet-getitem?mr=3488035},
	volume = {295},
	year = {2016},
	bdsk-url-1 = {https://mathscinet.ams.org/mathscinet-getitem?mr=3488035},
	bdsk-url-2 = {https://doi.org/10.1016/j.aim.2016.02.008}}

@article{Aspinwall:2009qy,
	author = {Aspinwall, Paul S. and Plesser, M. Ronen},
	date-added = {2023-01-20 09:52:48 -0500},
	date-modified = {2023-01-20 09:53:07 -0500},
	doi = {10.1007/JHEP07(2010)078},
	fjournal = {Journal of High Energy Physics},
	issn = {1126-6708},
	journal = {J. High Energy Phys.},
	mrclass = {81T30 (14J32)},
	mrnumber = {2719972},
	mrreviewer = {Mark Gross},
	number = {7},
	pages = {078, 33},
	title = {Decompactifications and massless {D}-branes in hybrid models},
	url = {https://mathscinet.ams.org/mathscinet-getitem?mr=2719972},
	year = {2010},
	bdsk-url-1 = {https://mathscinet.ams.org/mathscinet-getitem?mr=2719972},
	bdsk-url-2 = {https://doi.org/10.1007/JHEP07(2010)078}}

@article{Katz:2022lyl,
    author = "Katz, Sheldon and Klemm, Albrecht and Schimannek, Thorsten and Sharpe, Eric",
    title = "{Topological Strings on Non-Commutative Resolutions}",
    eprint = "2212.08655",
    archivePrefix = "arXiv",
    primaryClass = "hep-th",
    month = "12",
    year = "2022"
}

@article{2020-Hosono-Lian-Takagi-Yau-k3-surfaces-from-configurations-of-six-lines-in-p2-and-mirror-symmetry-i,
	author = {Hosono, Shinobu and Lian, Bong H. and Takagi, Hiromichi and Yau, Shing-Tung},
	date-added = {2022-04-22 15:18:25 -0400},
	date-modified = {2022-04-22 15:18:25 -0400},
	fjournal = {Communications in Number Theory and Physics},
	issn = {1931-4523},
	journal = {Commun. Number Theory Phys.},
	keywords = {14J28 (14F10 14J10 14J33)},
	mrnumber = {4164174},
	number = {4},
	pages = {739--783},
	title = {K3 surfaces from configurations of six lines in {$\mathbb{P}^2$} and mirror symmetry {I}},
	volume = {14},
	year = {2020}}

@article{2020-Hosono-Lee-Lian-Yau-mirror-symmetry-for-double-cover-calabi-yau-varieties,
	abstract = {The presented paper is a continuation of the series of papers arXiv:1810.00606 and arXiv:1903.09373. In this paper, utilizing Batyrev and Borisov's duality construction on nef-partitions, we generalize the recipe in arXiv:1810.00606 and arXiv:1903.09373 to construct a pair of singular double cover Calabi--Yau varieties $(Y,Y^{\vee})$ over toric manifolds and compute their topological Euler characteristics and Hodge numbers. In the $3$-dimensional cases, we show that $(Y,Y^{\vee})$ forms a topological mirror pair, i.e., $h^{p,q}(Y)=h^{3-p,q}(Y^{\vee})$ for all $p,q$.},
	archiveprefix = {arXiv},
	author = {Shinobu Hosono and Tsung-Ju Lee and Bong H. Lian and Shing-Tung Yau},
	date-added = {2022-04-22 15:15:31 -0400},
	date-modified = {2023-01-20 10:57:51 -0500},
	eprint = {https://arxiv.org/pdf/2003.07148v2},
	journal = {to appear in J. Differential Geom.},
	keywords = {math.AG},
	primaryclass = {math.AG},
	title = {Mirror symmetry for double cover {C}alabi--{Y}au varieties},
	year = {2020}}

@incollection{1996-Batyrev-Borisov-on-calabi-yau-complete-intersections-in-toric-varieties,
	author = {Batyrev, Victor V. and Borisov, Lev A.},
	booktitle = {Higher-dimensional complex varieties ({T}rento, 1994)},
	date-added = {2022-04-22 15:14:40 -0400},
	date-modified = {2022-04-22 15:14:40 -0400},
	keywords = {14J32 (14M25)},
	mrnumber = {1463173},
	pages = {39--65},
	publisher = {de Gruyter, Berlin},
	title = {On {C}alabi--{Y}au complete intersections in toric varieties},
	year = {1996}}

@article{borisov2018clifford,
  title={On Clifford double mirrors of toric complete intersections},
  author={Borisov, Lev A and Li, Zhan},
  journal={Advances in Mathematics},
  volume={328},
  pages={300--355},
  year={2018},
  publisher={Elsevier}
}

@article{Hosono:1995bm,
    author = "Hosono, S. and Lian, B. H. and Yau, Shing-Tung",
    title = "{GKZ generalized hypergeometric systems in mirror symmetry of Calabi-Yau hypersurfaces}",
    eprint = "alg-geom/9511001",
    archivePrefix = "arXiv",
    doi = "10.1007/BF02506417",
    journal = "Commun. Math. Phys.",
    volume = "182",
    pages = "535--578",
    year = "1996"
}

@article{MR2373143,
	archiveprefix = {arXiv},
	author = {Bridgeland, Tom},
	date-modified = {2023-01-20 10:21:32 -0500},
	doi = {10.4007/annals.2007.166.317},
	fjournal = {Annals of Mathematics. Second Series},
	issn = {0003-486X},
	journal = {Ann. of Math. (2)},
	mrclass = {14F05 (18E30)},
	mrnumber = {2373143},
	mrreviewer = {Leovigildo M. Alonso Tarrio},
	number = {2},
	pages = {317--345},
	primaryclass = {math},
	title = {Stability conditions on triangulated categories},
	url = {http://dx.doi.org/10.4007/annals.2007.166.317},
	volume = {166},
	year = {2007},
	bdsk-url-1 = {http://dx.doi.org/10.4007/annals.2007.166.317}}

@article{Cecotti:1991me,
    author = "Cecotti, Sergio and Vafa, Cumrun",
    title = "{Topological antitopological fusion}",
    reportNumber = "HUTP-91-A031, SISSA-69-91-EP",
    doi = "10.1016/0550-3213(91)90021-O",
    journal = "Nucl. Phys. B",
    volume = "367",
    pages = "359--461",
    year = "1991"
}

@article{Pantev:2005zs,
    author = "Pantev, Tony and Sharpe, Eric",
    title = "{GLSM's for Gerbes (and other toric stacks)}",
    eprint = "hep-th/0502053",
    archivePrefix = "arXiv",
    doi = "10.4310/ATMP.2006.v10.n1.a4",
    journal = "Adv. Theor. Math. Phys.",
    volume = "10",
    number = "1",
    pages = "77--121",
    year = "2006"
}

@article{Iritani:2009ab,
	archiveprefix = {arXiv},
	author = {Iritani, Hiroshi},
	date-modified = {2023-01-20 13:20:18 -0500},
	doi = {10.1016/j.aim.2009.05.016},
	fjournal = {Advances in Mathematics},
	issn = {0001-8708},
	journal = {Adv. Math.},
	mrclass = {53D37 (14J33 14N35 53D45)},
	mrnumber = {2553377},
	mrreviewer = {Hsian-Hua Tseng},
	number = {3},
	pages = {1016--1079},
	primaryclass = {math.AG},
	title = {An integral structure in quantum cohomology and mirror symmetry for toric orbifolds},
	url = {https://doi.org/10.1016/j.aim.2009.05.016},
	volume = {222},
	year = {2009},
	bdsk-url-1 = {https://doi.org/10.1016/j.aim.2009.05.016}}

@article{2005-Borisov-Chen-Smith-the-orbifold-chow-ring-of-toric-deligne-mumford-stacks,
	author = {Borisov, Lev A. and Chen, Linda and Smith, Gregory G.},
	date-added = {2021-12-06 09:03:33 -0500},
	date-modified = {2021-12-06 09:03:33 -0500},
	doi = {10.1090/S0894-0347-04-00471-0},
	fjournal = {Journal of the American Mathematical Society},
	issn = {0894-0347},
	journal = {J. Amer. Math. Soc.},
	keywords = {14N35 (14C15 14M25)},
	mrnumber = {2114820},
	number = {1},
	pages = {193--215},
	title = {The orbifold {C}how ring of toric {D}eligne--{M}umford stacks},
	volume = {18},
	year = {2005},
	bdsk-url-1 = {https://doi.org/10.1090/S0894-0347-04-00471-0}}

@article{eager2017beijing,
	author = {Eager, Richard and Hori, Kentaro and Knapp, Johanna and Romo, Mauricio},
	journal = {Chinese Annals of Mathematics, Series B},
	number = {4},
	pages = {901--912},
	publisher = {Springer},
	title = {Beijing lectures on the grade restriction rule},
	volume = {38},
	year = {2017}}

@article{Guo:2021aqj,
    author = "Guo, Jirui and Romo, Mauricio",
    title = "{Hybrid models for homological projective duals and noncommutative resolutions}",
    eprint = "2111.00025",
    archivePrefix = "arXiv",
    primaryClass = "hep-th",
    doi = "10.1007/s11005-022-01605-3",
    journal = "Lett. Math. Phys.",
    volume = "112",
    number = "6",
    pages = "117",
    year = "2022"
}

@incollection{Kawamata2003,
url = {https://doi.org/10.1515/9783110198072.197},
title = {Francia’s flip and derived categories},
booktitle = {Algebraic Geometry},
author = {Yujiro Kawamata},
note = {A volume in memory of Paolo Francia},
editor = {Mauro C. Beltrametti and Fabrizio Catanese and Ciro Ciliberto and Antonio Lanteri and Claudio Pedrini},
publisher = {Walter de Gruyter \& Co., Berlin},
address = {Berlin, New York},
pages = {197--215},
doi = {doi:10.1515/9783110198072.197},
year = {2002},
isbn = {3-11-017180-5},
mrclass = {14F05 (14E05 14E30 18E30)},
lastchecked = {2023-03-08}
}
\end{document}